\theoremstyle{plain}
\newtheorem{thm}{Theorem}
\newtheorem{prop}{Proposition}
\newtheorem{lemma}{Lemma}
\theoremstyle{definition}
\newtheorem{defn}{Definition}
\newtheorem*{rem*}{Remark}
\theoremstyle{remark}
\newcommand{\R}{\mathbb{R}}
\newcommand{\N}{\mathbb{N}}
\renewcommand{\S}{\mathcal{S}}
\newcommand{\ind}[1]{\mathds{1}_{#1}}
\newcommand{\ones}{\mathbf{1}}
\newcommand{\zeros}{\mathbf{0}}
\newcommand{\card}[1]{|#1|}
\renewcommand{\rm}[1]{\mathrm{#1}}
\newcommand{\ac}{\mathcal{A}} % actions
\newcommand{\p}{\mathcal{I}} % agent
\newcommand{\W}{\mathrm{W}} % welfare
\newcommand{\U}{\mathrm{U}_i} % utility
\newcommand{\G}{\mathrm{G}} % game
\renewcommand{\k}{\kappa} % k -round
\newcommand{\br}{\text{BR}_i} % best response
\renewcommand{\ne}{a^{\mathrm{ne}}} % nash equilibrium
\newcommand{\NE}{\mathrm{NE}} % Nash set
\newcommand{\pob}{\mathrm{Eff}} % price of best response
\newcommand{\poa}{\mathrm{PoA}} % price of anarchy
\newcommand{\emp}{a^{\varnothing}}  % null action
\newcommand{\sol}{\mathrm{sol}} % solution
\newcommand{\rr}{\mathcal{R}} % resources
\newcommand{\w}{w} % welfare rule
\newcommand{\f}{u} % utility rule
\newcommand{\ww}{\mathcal{W}}
\newcommand{\fw}{\mathcal{U}}  % utility mechanism
\newcommand{\setgm}{\mathcal{G}_{\ww, \fw}} % set of games
\newcommand{\setgmn}{\mathcal{G}^{n}_{\ww, \fw}} % set of games
\newcommand{\cc}{\mathrm{C}}  % curvature
\newcommand{\fa}{\f_{\poa}} % optimal f poa 
\newcommand{\fmc}{\f_{\mathrm{mc}}} % marginal contribution f
\newcommand{\wb}{\w^{b}}  % bent rule welfare
\newcommand{\fb}{\f^{b}}  % optimal utility bent rule welfare
\newcommand{\abr}{a^{\mathrm{br}}}  % br action
\newcommand{\aopt}{a^{\mathrm{opt}}}  % opt action.  
\newcommand{\pt}{\mathcal{P}}  % partition
\newcommand{\rrsub}{\rr^{\ell, p}}
\newcommand{\babr}{\mathrm{b}^p}  % br select
\newcommand{\baopt}{\mathrm{o}^p}  % optimal select
\newcommand{\iabr}{\mathrm{B}^p}  % br indexes
\newcommand{\iaopt}{\mathrm{O}^p}  % optimal indexes
\newcommand{\pbdual}{\beta}  % dual value of the PoB
\newcommand{\pbvar}{\eta_p^{\ell}}  % primal variables for PoB primal
\newcommand{\lblr}{\ell_r}  % label of the resource
\newcommand{\ay}{y}  % submod program variable for br
\newcommand{\bz}{z}  % submod program variable for opt
\newcommand{\xx}{x}  % submod program variable for both
\newcommand{\lamonebeta}{\pbdual_{\ay, \bz}}  % assuming lambdas is 1 constraint
\newcommand{\lamallbeta}{\pbdual_{\lambda}}  % dual var for any lambda
\newcommand{\lamss}{\lambda_{ss}}  % lambda subsequence
\newcommand{\fmaxj}{M_{\f}^{\ay}} % maximum utility through index
\newcommand{\myz}{x} % maximum of ay and bz
\newcommand{\paval}{\theta(\ay, \xx, \bz)}  % theta values for LP program
\newcommand{\paop}{\Theta(\ay, \xx, \bz)}  % The optimal values of the LP program
\newcommand{\rraxb}{\rr_{\ay, \xx, \bz}^k} % resource set for each a, x, b
\newcommand{\wsc}{\w_{\mathrm{sc}}} % set covering welfare
\newcommand{\ipoa}{\mathcal{X}} % poa parameter
\newcommand{\fxx}{\f^{\ipoa}} % poa parameter
\def\BibTeX{{\rm B\kern-.05em{\sc i\kern-.025em b}\kern-.08em
    T\kern-.1667em\lower.7ex\hbox{E}\kern-.125emX}}
\begin{document}

\title{Optimal Utility Design of Greedy Algorithms in Resource Allocation Games}
\author{Rohit Konda, \IEEEmembership{Student Member, IEEE}, Rahul Chandan \IEEEmembership{Student Member, IEEE}, David Grimsman \IEEEmembership{Member, IEEE}, Jason R. Marden \IEEEmembership{Member, IEEE}
\thanks{R. Konda (\texttt{rkonda@ucsb.edu}), R. Chandan, and J. R. Marden are with the Department of Electrical and Computer Engineering at the University of California Santa Barbara and D. Grimsman is with the Department of Computer Science at Brigham Young University. This work is supported by \texttt{ONR grant \#N00014-20-1-2359}, AFOSR grants \texttt{\#FA9550-20-1-0054} and \texttt{\#FA9550-21-1-0203}, and the Army Research Lab through the \texttt{ARL DCIST CRA \#W911NF-17-2-0181}.}}

\maketitle

\begin{abstract}
 Designing distributed algorithms for multi-agent problems is vital for many emerging application domains, and game-theoretic approaches are emerging as a useful paradigm to design such algorithms. However, much of the emphasis of the game-theoretic approach is on the study of equilibrium behavior, whereas transient behavior is often less explored. Therefore, in this paper we study the transient efficiency guarantees of \emph{best response processes} in the context of resource-allocation games, which are used to model a variety of engineering applications. Specifically, the main focus of the paper is on designing utility functions of agents to induce optimal short-term system-level behavior under a best-response process. Interestingly, the resulting transient performance guarantees are relatively close to the optimal asymptotic performance guarantees. Furthermore, we characterize a trade-off that results when optimizing for both asymptotic and transient efficiency through various utility designs.
\end{abstract}

\begin{IEEEkeywords}
Game Theory, Optimization, Distributed Systems, Algorithm Design, Resource Allocation
\end{IEEEkeywords}

\section{Introduction}
\label{sec:int}

The analysis and control of multi-agent systems has received a significant amount of attention in recent years, due to its tremendous potential for solving various important problems. Some pertinent examples where these systems have found success include wireless communication networks \cite{han2012game}, UAV swarm task allocation \cite{roldan2018should}, news subscription services \cite{hsu2020information}, vaccinations during an epidemic \cite{hota2019game}, facility location \cite{1181966}, coordinating the charging of electric  vehicles \cite{martinez2020decentralized}, and national defense \cite{lee2020perimeter}, among others. The quintessential challenge in designing algorithms for these scenarios is to arrive at well-performing system-level behavior, as measured by some given global objective, that emerges in a distributed and scalable fashion. Therefore, the system designer is tasked to construct distributed algorithms that satisfy locality constraints while optimizing a given global objective.

One classic distributed design is the \emph{greedy algorithm}, where the agents are ordered sequentially, and at each step of the execution, a single agent optimizes the global objective unilaterally given the previous agents' decisions in the sequence. In general, the greedy algorithm is not guaranteed to find the globally optimal solution, but is a quick and elegant way to derive an approximately optimal solution. In fact, in certain well-structured domains, there may even be provable guarantees on the approximation ratios. For example, greedy algorithms are known to have approximation guarantees in set covering problems \cite{chvatal1979greedy}, $k$-extendible problems \cite{mestre2006greedy}, submodular maximization problems \cite{nemhauser1978analysis}, etc.

A natural extension to the greedy algorithm is to repeatedly allow agents to respond dynamically to other agent's decisions, where, in a progressive fashion, a single, chosen agent optimizes their local objective (utility) function at each step. Here, the agent-specific local objective functions may not be equal to the global objective, and in fact, may be desirable for them to not be equivalent \cite{gairing2009covering, paccagnan2018utility}. We can study the asymptotic behavior of the possible iterative algorithms by examining the resulting equilibrium using a game-theoretic approach. When these equilibrium correspond to the set of Nash equilibrium, the resulting approximation ratio is also known as the \emph{price of anarchy} \cite{koutsoupias1999worst}. Many previous works are dedicated to analyzing the resulting price of anarchy guarantees in many different settings (for e.g., see \cite{roughgarden2009intrinsic, vetta2002nash}), and in some instances, the price of anarchy can even be computed through a convex optimization program \cite{roughgarden2009intrinsic}.

While the resulting approximation guarantees of the game-theoretic approach are positive, these guarantees only emerge asymptotically. In fact, arriving at Nash equilibrium may even take an exponential time \cite{fabrikant2004complexity}, rendering the resulting approximation guarantees irrelevant in many realistic multi-agent scenarios. For example, there may be an extremely large number of agents in the multi-agent scenario or the relevant situational parameters may be time-varying and volatile or there may be computational and run-time restrictions on the agents. In these instances, expecting that the agents will converge to Nash equilibrium may not be a reasonable assumption.

Therefore, in this paper, we shift focus to the transient behavior of these game-theoretic algorithms. In this sense, we extend the greedy algorithm through more sophisticated utility function designs but also consider limitations on the time complexity of the iterative process. Explicitly, we benchmark the iterative process to be the \emph{$\k$ round-robin best response algorithm} and study the approximation guarantees that result from various designs of utility functions in the context of the well-studied class of \emph{resource allocation games}. This class of games can model many relevant engineering applications \cite{marden2013distributed} and have convergence guarantees of the best response algorithm to Nash equilibrium. The main results of this paper is as follows.  
\begin{itemize}
    \item The first main result of the paper, in Theorem \ref{thm:optLP}, focuses on tightly characterizing the optimal (under a natural class of utility designs) performance guarantees of the one-round best response algorithm, in which each agent is allowed to alter its decision only once. We also show in Theorem \ref{thm:oneroundC} that the performance guarantees that are possible in this situation are significantly closer, than the greedy algorithm, to the asymptotically optimal performance guarantees, in which the best-response process is run ad infinitum (being at worst $\sim 80\%$ within the asymptotically optimal performance guarantee and at most $\sim 13\%$ better than the greedy algorithm).
    \item Additionally, we show in Theorem \ref{thm:kroundC} that if we shift attention to the $\k > 1$ round robin best response algorithm, where agents are instead allowed to alter their decisions more than once, the overall performance guarantees do not increase. This suggests minimal benefit in performance when allowing agents to alter their decision in response to other agents.
    \item Lastly, we characterize the trade-offs that result from optimizing transient and asymptotic performance guarantees. Interestingly, we show that optimizing for asymptotic performance guarantees can result in arbitrarily bad transient performance guarantees, as shown in Theorem \ref{thm:submodtrade}. Furthermore, a Pareto-optimal frontier is delineated for the asymptotic and transient performance guarantees in Theorem \ref{thm:poapobtradeoff}. 
\end{itemize}
The results of this work suggests an acute diminishing return on the achievable performance guarantees as agents make successive locally optimal decisions. While the current literature is sparse on characterizing short-term behavior of game-processes, the positive results in this work indicate the importance of analyzing the interaction between local objective designs, asymptotic performance, and transient performance as a whole.

Moreover, this work belongs to a larger research trend that aims to study game theoretic models beyond their respective equilibrium. In contrast to the traditional game-theoretic approach, the \emph{game dynamics} are embraced as a valuable feature of the game, where a rigorous study of actualized play can provide important insights about the game model (see, for e.g., \cite{goemans2005sink, candogan2011flows, papadimitriou2019game}). Furthermore, characterizing performance guarantees along these game dynamics is valuable to understanding the transient behavior of the agents. In contrast to the study of equilibrium quality, the literature on transient guarantees is much less developed. However, we highlight an important subset of works that characterize transient performance guarantees in different game-theoretic contexts: such problem domains include affine congestion games \cite{christodoulou2006convergence, bilo2009performances, fanelli2008speed, bilo2018unifying}, market sharing games \cite{christodoulou2006convergence, mirrokni2004convergence}, basic utility games \cite{mirrokni2004convergence}, series-parallel networks, and load-balancing games \cite{suri2007selfish, caragiannis2006tight}. Furthermore, we reference the related literature (see \cite{leme2012curse, correa2015curse, de2014sequential, angelucci2015sequential}) of \emph{price of sequential anarchy} that characterize the efficiency of equilibrium that have a dynamical flavor. While similar in spirit to the previous literature, this work extends beyond characterization to study optimal utility designs that maximize the short-term guarantees that can be achieved. To the author's knowledge, there has been no previous work that addresses optimal short-term performances. Additionally, we also delve into the trade-offs that result from maximizing the transient performance guarantees versus the asymptotic guarantees with regards to a natural class of game dynamics.

\begin{figure}[ht]
    \centering
    \includegraphics[width=250pt]{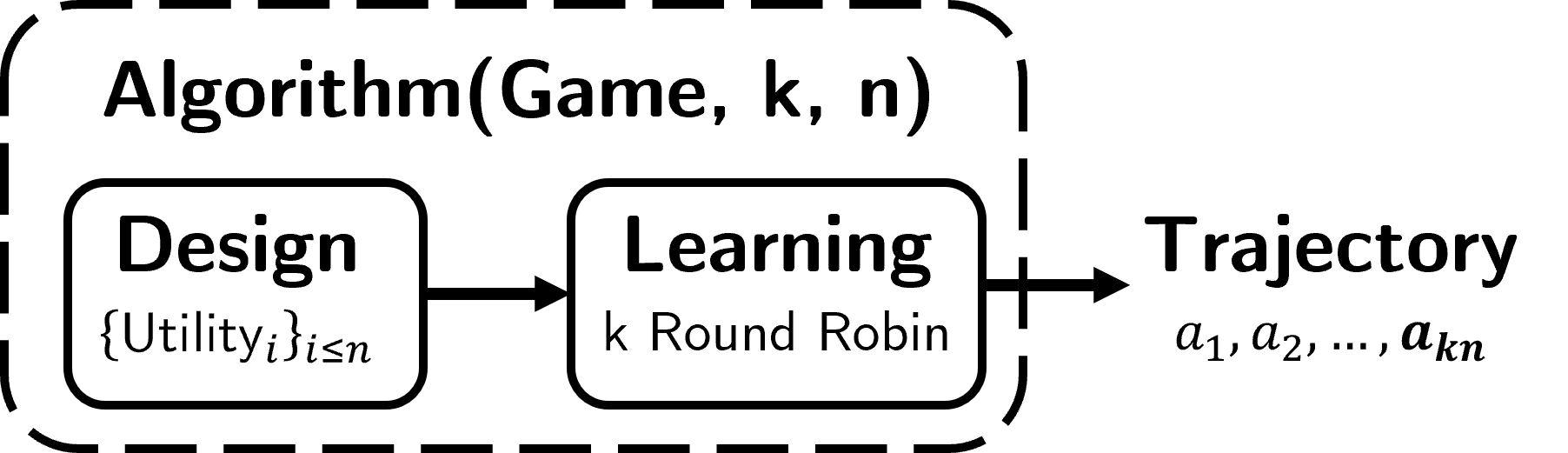}
    \caption{If a given multi-agent scenario with $n$ agents is modeled as a game, the construction of distributed algorithms can be decoupled into two domains: the design of local objectives (utilities) and the design of the learning dynamics. In this paper, we fix the dynamics to the classical round-robin best response and study the effects of the utility design on the resulting efficiency bounds on the trajectory. Moreover, we characterize the efficiency guarantee as the number iterations $\k$ that the Algorithm is run for increases. }
    \label{fig:intro}
\end{figure}

\section{Model}

This work considers multi-agent scenarios in the form of \emph{resource allocation games} \cite{marden2013distributed}. Resource allocation games are characterized by a finite set of resources $\rr = \{r_1, \dots, r_{c})$ that can be utilized by a set of agents $\p = \{1, \dots, n\}$. Each agent $i \in \p$ has a finite action set $\ac_i = \{a^1_i, \dots, a_i^d\} \subseteq 2^\rr$ representing the decisions available to each agent. The performance of a joint action $a = (a_1, \dots, a_n) \in \ac = \ac_1 \times \cdots \times \ac_n$ comprised of every agent's actions is evaluated by a system-level objective function $\W: \ac \to \R_{> 0}$ of the form
\begin{equation}
    \W(a) = \sum_{r \in \bigcup_i a_i}{\w_r(|a|_r)},
\end{equation}
where $|a|_r = \card{\{i \in \p : r \in a_i\}}$ is the number of agents that choose resource $r$ in action profile $a$, and the \emph{welfare rule} $\w_r :\N \to \R_{>0}$ defines the resource-specific welfare determined by the utilization of $r$ by $|a|_r$ agents. System welfares of this form can model several important classes of problems including routing over a network \cite{roughgarden2009intrinsic}, vehicle target assignment \cite{murphey2000target}, sensor coverage \cite{gairing2009covering} etc. Thus the goal is to coordinate the agents to a joint action that maximizes the system welfare, i.e. $\aopt \in \arg \max_{a \in \ac} \W(a)$.

To achieve near-optimal joint actions in a distributed fashion, a game-theoretic approach is utilized where the system designer ascribes the preference structure to each agent through its utility function $\U : \ac \to \R$ to influence its decision making process. In general, these utility rules could either be inherited (e.g., congestion or latency in a given transportation network) or designed (e.g., congestion with monetary incentives). We consider utility functions of the form
\begin{equation}
\label{eq:utildef}
    \U(a_i, a_{-i}) = \sum_{r \in a_i} \f_r(|a|_r),
\end{equation}
where the \emph{utility rule} $\f_r : \N \to \R_{\geq 0}$ defines the resource-specific agent utility determined by $|a|_r$. We use $a_{-i} = (a_1, \dots, a_{i-1}, a_{i+1}, \dots a_n)$ to denote the joint action without the action of agent $i$. The choice of the utility rules $\f_r$ influences the agents' individual behaviors and in turn the emergent collective behavior. Therefore, the goal of this paper is to identify the utility rules $\f_r$ that lead to the most efficient collective behavior.

Given the design above, we will express a given resource allocation game by the tuple $\G = \{\p, \ac, \rr, \{\w_r\}_{r \in \rr}, \{\f_r\}_{r \in \rr}\}$.  In most scenarios of interest, a system designer is required to specify the utility rules $\f_r$ without specific knowledge of the resource allocation game parameters, such as the number of agent $\p$ or the action set $\ac$. To that end, let $\ww$ be the set of possible welfare rules that could be associated with any resource, i.e., $\w_r \in \ww$ for all  $r \in \rr$.  Here, the system designer is tasked with associating a utility rule to each type of resource, i.e.,  the utility rule for any resource $r\in \rr$ with the welfare rule $\w_r$ is of the form $\f_r = \fw(\w_r)$  where $\fw: \ww \to \R^{\N}_{\geq 0}$ is termed the utility design. Lastly, we define the set of resource allocation games that are induced by $\ww$ and $\fw$ as $\setgm$, where a game $\G \in \setgm$ if $\w_r \in \ww$ and that $\f_r = \fw(\w_r)$ for all resources $r \in \rr$.

The central focus of this paper is to understand how the choice of utility rules, derived from $\fw(\cdot)$, impacts the efficacy of the emergent collective behavior. To benchmark the learning process of the agents, we restrict attention to the class of \emph{(round-robin) best response processes}, where a certain agent (out of $n$ agents) performs a best response in a round-robin fashion. For a given joint action $\bar{a} \in \ac$, we say the action $\hat{a}_i$ is a \emph{best response} for agent $i$ if 
\begin{equation}
\hat{a}_i \in \ \br(\bar{a}_{-i}) = \arg \max _ {a_i \in \ac_i} \U(a_i, \bar{a}_{-i}),
\end{equation}
where $\br$ may be non-unique. We also assume that the best response process begins with none of resources being utilized by any of the agents, denoted by the agents selecting the null joint action $\emp := \varnothing$ at time $0$. The underlying algorithm is formalized in Algorithm \ref{alg:cap2}.
\begin{algorithm}
\caption{Best Response Process}
\label{alg:cap2}
\begin{algorithmic}
\Require $a(0) \gets \emp$, $\tau \gets 0$, $T$
\While{$\tau \leq T$}
\State $i \gets (\tau + 1) \mod n$ \Comment{Next agent is selected.}
\State $a_i(\tau+1) \gets \hat{a}_i \in \br(a_{-i}(\tau))$ 
\Comment{$i$ best responds.}
\State $a_{-i}(\tau+1) \gets a_{-i}(\tau)$ \Comment{No other agent moves.}
\State $\tau \gets \tau + 1$
\EndWhile
\end{algorithmic}
\end{algorithm}

\noindent To arrive at non-trivial efficiency guarantees, we define a \emph{$\k$-round walk} as a best response process in which Algorithm \ref{alg:cap2} is run for $T = \k \cdot n$ steps. Here, the set of agents perform best responses in succession $\k$ times. We assume that during a $\k$-round walk, agent $i$ selects its best response $a_i(\tau+1)$ arbitrarily from $\br(a_{-i}(\tau))$ if it is not unique. This induces a set of possible action trajectories of the form $(a(0)=\emp, a(1) \dots, a(\k n - 1 ), a(\k n))$ selected by the agents throughout the best response process. The potential solution set that occurs after the agents run a $\k$-round walk is denoted by $\sol(\k) \subset \ac$ with
\begin{equation*}
\sol(\k) = \{a(\k n) \text{ for each trajectory starting at } a(0) = \emp\}.
\end{equation*}
The worst achievable efficiency at the end of the $\k$-round walk with respect to the best achievable system welfare is defined by the following competitive ratio
\begin{equation}
\label{eq:effG}
\pob(\G; \k) = \frac{\min_{a \in \sol(\k)}{\W(a)}}{\max_{a \in \mathcal{A}}{\W(a)}} \in [0, 1],
\end{equation}
where a ratio closer to $1$ implies the worst case efficiency after $\k$ rounds is closer to optimal. We additionally extend the efficiency measure to a set of games $\mathcal{G}$ as
\begin{equation}
\label{eq:effsetG}
    \pob(\mathcal{G}; \k) = \inf_{\G \in \mathcal{G}} \pob(\G; \k).
\end{equation}

The directive of this paper is to derive agent utility rules that optimize the efficiency guarantees of the resulting collective behavior. More specifically, for a given class of welfare rules $\ww$ and number of rounds $\k \geq 1$, the main goal is to characterize the optimal performance guarantees of the form
\begin{equation}
    \pob^*(\ww;\k) = \sup_{\fw: \ww \to \R^{\N}_{\geq 0}} \pob(\setgm; \k).
\end{equation}
Specifically, this paper will seek to address how these optimal efficiency guarantees change as a function of the set of possible welfare rules $\ww$ as well as the number of rounds $\k$.

\section{One Round Guarantees}

Our first set of results characterizes the attainable performance guarantees for just a single round of the best response process. We focus on the performance of the one-round walk to describe the quickest nontrivial guarantees that can occur under the round-robin best response process, as each agent is required to perform only one best response to arrive at the resulting joint action $a \in \sol(1)$. Thus, we derive the one-round guarantees through a linear program construction that is a function of both the set of allowable welfare rules $\ww$ and the utility rules $\fw$. Moreover, we will restrict attention to welfare rules that are generated by the span of a given set of basis welfare rules, as defined below.

\begin{defn}[Basis]
The set of welfare rules $\ww$ is spanned from a set of basis welfare rules $\{\w^1, \dots, w^m\}$ \footnote{We assume, without loss of generality, that $\w^j(1)=1.$} if $\ww$ can be characterized by any non-negative combination of the basis welfare rules $\{\w^1, \dots, w^m\}$, i.e., 
\begin{equation}
    \ww = \{w: w=\sum_{\ell=1}^m \alpha^\ell \w^\ell, \ \forall \alpha^1, \dots, \alpha^m \geq 0\}.
\end{equation}
\end{defn}

We remark that the performance guarantees are identical regardless of if we consider the welfare set $\ww = \{\w^1, \dots, w^m\}$ or consider $\ww$ to be the set spanned from the basis $\{\w^1, \dots, w^m\}$. Furthermore, we will restrict attention to welfare rules that are \emph{submodular}, or informally, welfare rules that admit a notion of decreasing marginal-returns that are commonplace in many objectives relevant to engineered systems. 

\begin{defn}[Submodularity]
A welfare rule $\w$ is submodular if $\w(j+1) \geq \w(j)$ for all $j \in \N$ and $\w(j+1) - \w(j)$ is non-increasing in $j$.  
\end{defn}

\noindent The first main contribution is stated below, where we characterize the best achievable efficiency guarantees for a one-round walk through a utility design $\fw$.

\begin{thm}
\label{thm:optLP}
Suppose that $\ww$ is spanned from a set of welfare rules $\{\w^1, \dots, \w^m\}$ where each $w^\ell$ is submodular. Then the optimal efficiency guarantees achievable with a one-round best response process is given by
\begin{equation}
\label{eq:effoptratio}
    \pob^*(\ww; 1) = \min_{1 \leq \ell \leq m} \frac{1}{\pbdual^\ell}
\end{equation}
where $\beta^{\ell} \in \R_{\geq 0}$ is the solution to the following program.
\begin{align}
(u^{\ell}, \pbdual^{\ell}) \ &\in \ \arg \min_{\pbdual, \f \in \R^{\N}_{\geq 0}} \quad \pbdual \quad \textrm{subject to:}  \label{eq:submodLPoptimal} \\
\pbdual \w^{\ell}(\ay) &\geq \sum_{i=1}^\ay \f(i) - \bz \f(\ay+1) + \w^{\ell}(\bz) \quad \forall \ay, \bz \geq 1, \nonumber
\end{align}
where we take $\f^{\ell}(1) = 1$ and $\ay, \bz \in \N$. Furthermore, a utility design $\fw$ that achieves this optimal efficiency guarantee is linear and of the form
\begin{equation}
    \label{eq:optlinear}
    \fw\left(w=\sum_{j=1}^m \alpha^{\ell} \w^{\ell}\right) = \sum_{j=1}^m \alpha^{\ell} u^{\ell},
\end{equation}
where $u^{\ell}$ is the corresponding solution in Eq. \eqref{eq:submodLPoptimal}.
\end{thm}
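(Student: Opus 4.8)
The plan is to establish \eqref{eq:effoptratio} by proving two matching inequalities: an \emph{achievability} bound showing that the linear design \eqref{eq:optlinear} guarantees $\pob(\setgm;1) \geq \min_{\ell} 1/\beta^\ell$, and a \emph{converse} bound showing that no utility design can exceed this value, i.e. $\pob^*(\ww;1) \leq \min_{\ell} 1/\beta^\ell$. Together these pin down the optimal value and certify that the linear design attains it. Throughout I would exploit the additive decomposition of $\W$ and of each $\U$ over resources, which reduces the global efficiency estimate to a single per-resource inequality indexed by the two integers $y = |a|_r$ and $z = |\aopt|_r$, namely the number of agents occupying resource $r$ in a one-round endpoint $a \in \sol(1)$ and in a welfare maximizer $\aopt$, respectively. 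The constraint in \eqref{eq:submodLPoptimal} is exactly the smoothness-type inequality that this per-resource analysis must produce.

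For achievability, fix the linear design and any game $\G \in \setgm$ with any trajectory endpoint $a \in \sol(1)$. The crucial bookkeeping is that in a one-round walk agents move once, in order, and never abandon a resource, so the agents that ultimately occupy $r$ join it one at a time; the $k$-th such agent evaluates its utility on $r$ as $\f_r(k)$ at its own move time. Summing the best-response inequalities $\U(a_i, a_{-i}) \geq \U(\aopt_i, a_{-i})$ over all $i$ then contributes precisely the accrued term $\sum_{k=1}^{y} \f_r(k)$ per resource on the left, while on the right each of the $z$ agents that use $r$ in $\aopt$ would, upon deviating, see a count of at most $y+1$; submodularity of $w^\ell$ is what forces the optimal utility rules $u^\ell$ to have decreasing-marginal structure, so that each deviator's utility on $r$ is bounded below uniformly by $\f_r(y+1)$ and the $z$ deviators contribute at least $z\,\f_r(y+1)$. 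Taking the nonnegative combination of the constraints in \eqref{eq:submodLPoptimal} with the coefficients $\alpha_r^\ell$ defining $\w_r = \sum_\ell \alpha_r^\ell w^\ell$ and $\f_r = \sum_\ell \alpha_r^\ell u^\ell$, and using $\beta^\ell \leq \max_\ell \beta^\ell$, gives the aggregate per-resource inequality $\big(\max_\ell \beta^\ell\big)\,\w_r(y) \geq \sum_{k=1}^{y} \f_r(k) - z\,\f_r(y+1) + \w_r(z)$. Summing over $r$ and cancelling the accrued and deviation terms against the summed best-response inequality yields $\W(\aopt) \leq \big(\max_\ell \beta^\ell\big)\,\W(a)$, i.e. $\pob(\G;1) \geq \min_\ell 1/\beta^\ell$ for every trajectory, hence for the worst one.

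For the converse I would show that every design is defeated on a single-basis game built from the worst index $\ell^\star \in \arg\max_\ell \beta^\ell$. Restricting an arbitrary design $\fw$ to $w^{\ell^\star}$ yields some utility rule $\f$, whose smoothness constant $\beta(\f) := \max_{y,z}\big[\sum_{k=1}^{y}\f(k) - z\,\f(y+1) + w^{\ell^\star}(z)\big]/w^{\ell^\star}(y)$ satisfies $\beta(\f) \geq \beta^{\ell^\star}$ by optimality of the program over $\f$. The plan is to dualize \eqref{eq:submodLPoptimal} to extract the maximizing pair $(y^\star, z^\star)$, and then hand-construct a resource allocation game using only copies of $w^{\ell^\star}$, with action sets arranged so that the round-robin process is forced onto the ``solution'' resources while $\aopt$ loads the ``deviation'' resources, so that $a \in \sol(1)$ realizes $y^\star$ agents per solution resource and $\aopt$ realizes $z^\star$ per optimal resource. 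For this game $\pob(\G;1) = 1/\beta(\f) \leq 1/\beta^{\ell^\star} = \min_\ell 1/\beta^\ell$, so $\inf_{\G}\pob(\G;1) \leq \min_\ell 1/\beta^\ell$ for every $\fw$, which is the required upper bound on $\pob^*(\ww;1)$. Combining the two directions proves \eqref{eq:effoptratio} and the optimality of the linear design \eqref{eq:optlinear}.

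I expect the main obstacle to be the achievability step, and specifically making the transient bookkeeping rigorous: unlike the equilibrium (price-of-anarchy) analysis, utilities here must be evaluated at the \emph{intermediate} counts each agent sees when it moves, and one must verify both that these move-time counts aggregate exactly to $\sum_{k=1}^{y}\f_r(k)$ and that submodularity is strong enough to absorb all deviation contributions into the single term $z\,\f_r(y+1)$ uniformly over every admissible round-robin interleaving. A secondary difficulty is the converse construction, where the worst-case game read off from the LP dual must simultaneously force the prescribed occupation numbers under best response and remain tight for an arbitrary, not necessarily linear, utility rule $\f$.
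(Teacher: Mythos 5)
Your overall strategy---an achievability direction via a per-resource smoothness inequality summed over the move-time best-response conditions, plus a converse via a tight game construction extracted from the LP---is the same architecture the paper uses (its Lemmas \ref{lem:LPprim} and \ref{lem:tractableLP} are exactly the primal-dual machinery that makes your converse rigorous, and the extension to the span by nonnegative combinations matches the paper's final step). However, there is a genuine gap at the crux of the argument: you assert that ``submodularity of $\w^\ell$ forces the optimal utility rules $u^\ell$ to have decreasing-marginal structure,'' and your entire achievability step hinges on this, because a deviating agent at its move time sees a count $\card{\abr}_{<i}+1$ that can be \emph{any} value in $\{1,\dots,\ay+1\}$, so its utility on $r$ is only bounded below by $\min_{1\le i\le \ay+1}\f(i)$, not by $\f(\ay+1)$. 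The program \eqref{eq:submodLPoptimal} is written with the term $-\bz\,\f(\ay+1)$, which is a relaxation of the true constraint $-\bz\min_{1\le i\le \ay+1}\f(i)$ unless the optimizer $\f^*$ is non-increasing; if it were not, the LP value $\pbdual^\ell$ would understate the worst case and the claimed equality \eqref{eq:effoptratio} would fail. The paper closes this gap with a roughly page-long contradiction argument (Appendix B): assuming $\f^*(\ay)<\f^*(\ay+1)$ for some $\ay$, it shows by induction that $\f^*$ must grow at least linearly, which forces $\pbdual^*=\infty$ via the constraint with $\bz=0$ and $\ay\to\infty$ and contradicts optimality. You correctly flag this as ``the main obstacle,'' but flagging it is not the same as resolving it, and no proof sketch of it appears in your proposal.

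A secondary, smaller gap is the converse. For an arbitrary (possibly non-monotone) utility rule $\f$, the tight adversarial game must place the optimal-action agents at the round-robin positions where $\f$ is smallest, and one must verify that the round-robin process is genuinely forced onto the ``solution'' resources for \emph{every} such $\f$; the paper establishes this through the full dual program of Lemma \ref{lem:LPprim} and a limiting argument on the dual multipliers (including a Bolzano--Weierstrass step) in Lemma \ref{lem:tractableLP}, rather than by a direct hand construction. Your quantity $\beta(\f)$ defined with $\f(\ay+1)$ in place of $\min_{1\le i\le\ay+1}\f(i)$ is not the tight smoothness constant for non-monotone $\f$, although the inequality chain you need still goes through because the tight constant only exceeds yours. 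In short: right approach, but the two steps you identify as difficulties are precisely where the paper's proof does its real work, and your proposal does not supply arguments for them.
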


The above theorem sets forth a prescriptive process by which to characterize the optimal efficiency guarantees of a one-round best response process. \footnote{While we mostly focus on the class of submodular welfare rules in this paper, the linear program in Eq. \eqref{eq:submodLPoptimal} can be extended to consider other classes.} Acquiring $\pbdual^{\ell}$ through the program in Eq. \eqref{eq:submodLPoptimal} may be computationally infeasible in general; however, by considering certain structured classes of welfare rules, we can derive closed form expressions for the one round performance guarantees. We therefore consider a natural restriction of submodular welfare rules centered around the idea of \emph{curvature}, which is defined below.

\begin{defn}[Curvature]
A submodular welfare rule $\w$ has a curvature of $\cc \in [0, 1]$ if $\cc = 1 - \lim_{n \to \infty} (\w(n+1) - \w(n))/\w(1)$.
\end{defn}

\noindent In this sense, curvature characterizes the rate of diminishing returns associated with a welfare rule $\w$.With this, we can arrive at a tight, closed-form characterization of the optimal one-round performance guarantees, as shown below.

\begin{thm}
\label{thm:oneroundC}
Let the set $\ww$ comprise of all welfare rules $\w$ such that $\w \in \R_{> 0}^{\N}$ has a curvature of at most $\cc$. Then the optimal efficiency guarantees achievable with a one-round best response process satisfies
\begin{equation}
    \label{eq:oneeqC}
    \pob^*(\ww; 1) = 1 - \frac{\cc}{2} \left( \geq \frac{1}{2} \right).
\end{equation}
\end{thm}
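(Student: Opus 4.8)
The plan is to invoke Theorem~\ref{thm:optLP}, which reduces the computation of $\pob^*(\ww;1)$ to solving the linear program \eqref{eq:submodLPoptimal} over a basis of the welfare class. The first step is therefore to exhibit such a basis for the cone of submodular rules of curvature at most $\cc$. Writing a submodular rule through its nonincreasing, nonnegative marginal gains $m_j = \w(j)-\w(j-1)$, the curvature constraint reads $\lim_{j\to\infty} m_j \geq (1-\cc)\,m_1$, and I would show that the extreme rays of this cone are the linear rule $\w^{\mathrm{lin}}(j)=j$ together with the family $\{\w^{(k)}\}_{k\geq 1}$ whose marginal gains equal $1$ for the first $k$ steps and then drop to the floor $1-\cc$; equivalently, every rule in the class is a nonnegative combination of these generators. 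In particular the maximally curved generator is $\w^{(1)}(\ay) = 1 + (\ay-1)(1-\cc)$.

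With the basis in hand, Theorem~\ref{thm:optLP} gives $\pob^*(\ww;1) = \min_k 1/\pbdual^{(k)}$, so it remains to solve \eqref{eq:submodLPoptimal} for each generator and locate the worst one. The core computation is the program for $\w^{(1)}$. Here I would first observe that feasibility forces $\f(j)\geq 1-\cc$ for all $j\geq 2$, since otherwise the $\bz$-maximization on the right-hand side diverges; the worst index is then $\bz=1$, and the constraints collapse to $\pbdual\,\w^{(1)}(\ay) \geq \sum_{i=1}^{\ay}\f(i) - \f(\ay+1) + 1$. I then propose the explicit utility rule $\f(1)=1$ and $\f(j) = 2(1-\cc)/(2-\cc)$ for $j\geq 2$; substituting shows every such constraint holds with equality at $\pbdual^{(1)} = 1/(1-\cc/2)$, while an equioscillation/dual argument certifies that no smaller value is feasible. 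This yields $1/\pbdual^{(1)} = 1 - \cc/2$.

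To conclude I would show that $\w^{(1)}$ is the worst generator, i.e. that $\pbdual^{(k)}$ is nonincreasing in $k$ and that $\pbdual^{\mathrm{lin}} = 1$ (so the linear rule is recovered exactly, consistent with $\cc=0$ giving guarantee $1$); intuitively, making a rule linear on a longer initial segment moves it toward the greedy-exact case and can only lower $\pbdual$. Taking the minimum over the basis then gives $\pob^*(\ww;1) = 1/\pbdual^{(1)} = 1-\cc/2$, and the bound $\geq 1/2$ follows from $\cc\leq 1$. A short remark handles the fact that the basis is countable rather than finite: for any $n$-player game only the generators with $k\leq n$ are relevant, and since the worst generator $\w^{(1)}$ appears in every truncation, the value is unaffected.

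The step I expect to be the main obstacle is the exact solution of \eqref{eq:submodLPoptimal} for $\w^{(1)}$: pinning down the optimal utility rule $\f$ and, more importantly, certifying its optimality (the lower bound on $\pbdual^{(1)}$), which is really a matching worst-case game construction and is the substantive content behind the clean value $1-\cc/2$. Establishing that $\pbdual^{(k)}$ is monotone in $k$ is a secondary difficulty, since a priori an intermediate generator could be worse and would need to be ruled out by a uniform argument.
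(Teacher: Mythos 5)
Your proposal follows essentially the same route as the paper: the generators you describe are exactly the paper's $b$-covering basis $\wb$ from Eq.~\eqref{eq:bcccov}, the worst generator is indeed $b=1$, the binding constraint is $\bz=b$, and your guessed utility rule $\f(1)=1$, $\f(j)=2(1-\cc)/(2-\cc)$ for $j\geq 2$ with value $\pbdual^{(1)}=2/(2-\cc)$ coincides with the paper's $\fb$ and $\pbdual^b$ at $b=1$. The two obstacles you flag are resolved in the paper not by an equioscillation argument or a monotonicity heuristic, but by solving the $\bz=b$ recursion in closed form for every $b$ (via a discrete-time state-space system), which yields $\pbdual^b=(1+1/b)^b/((1+1/b)^b-\cc)$ directly and certifies the lower bound by showing that any smaller $\pbdual$ forces a geometrically growing mode in $\f$, contradicting the nonincreasingness of the optimal utility rule established in the proof of Theorem~\ref{thm:optLP}.
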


\noindent The optimal utility design that achieves the above efficiency guarantee also has a closed form expression, which is found in the Appendix. The results in Theorem \ref{thm:oneroundC} suggests that, under the optimal utility design, running best response processes for these classes of games can result in the agents coordinating to a high quality joint action very quickly. If the curvature $\cc$ is close to $0$, we can even arrive at an approximation guarantee of nearly $1$ after only a one-round walk.

We frame the results of Theorem \ref{thm:oneroundC} with respect to the performance of the \emph{greedy algorithm}, a classical approach to submodular optimization problems \cite{nemhauser1978analysis}. In the greedy algorithm, each agent $i$ computes its greedy solution $a_i^{\rm{gr}}$ in sequence as the argument to the following optimization problem.
\begin{equation}
\label{eq:greedydef}
    a_i^{\rm{gr}} \in \arg \max_{a_i \in \ac_i} \W(a_1^{\rm{gr}}, \dots, a_{i-1}^{\rm{gr}}, a_i, \emp_{i+1}, \dots \emp_{n}),
\end{equation}
arriving at a greedy joint action $a^{\rm{gr}}$. In fact, running the greedy algorithm is equivalent to directly assigning the utilities in Eq. \eqref{eq:utildef} as $\U(a) = \W(a)$ for all $i \in \p$ and $a \in \ac$ and running a round-robin best response process for one round. We refer to this non-local utility design as the \emph{common interest} ($\rm{CI}$) design. Now, we characterize the efficiency guarantees of the greedy algorithm with respect to curvature below.
\begin{prop}
\label{prop:effMCC}
Let the set $\ww$ comprise of all welfare rules $\w$ such that $\w \in \R_{> 0}^{\N}$ has a curvature of at most $\cc$. Then the efficiency guarantee associated with the greedy algorithm in Eq. \eqref{eq:greedydef} is
\begin{equation}
\label{eq:curvGR}
    \pob(\mathcal{G}_{\ww, \rm{CI}}; 1) = (1 + \cc)^{-1} \left( \leq 1 - \frac{\cc}{2} \right),
\end{equation}
where $\mathcal{G}_{\ww, \rm{CI}}$ is the set of games induced by $\ww$ and the common interest utility $\rm{CI}$, where $\U(a) = \W(a)$ for all $i \in \p$ and $a \in \ac$.
\end{prop}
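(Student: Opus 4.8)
The plan is to read Proposition \ref{prop:effMCC} as the companion of Theorem \ref{thm:optLP} for the single fixed utility design $\fw(\w)=\w$, and to reuse the one-round primal--dual certificate already developed for that theorem rather than re-deriving the machinery. First I would record the identification, stated in the text just above, that the greedy algorithm of Eq.~\eqref{eq:greedydef} is exactly the one-round best-response process under the common-interest design, where $\U(a)=\W(a)$ for every $i$, i.e. $\f_r=\w_r$ on each resource. Consequently $\pob(\mathcal{G}_{\ww,\mathrm{CI}};1)$ is governed by the same smoothness/best-response estimate used to prove Theorem \ref{thm:optLP}, except that $\f$ is no longer a free decision variable: it is pinned to $\f=\w$ on every (basis) welfare rule. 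The goal then reduces to computing, for the curvature class, the smallest scalar $\pbdual$ for which that certificate is feasible with $\f=\w$, since $\pob(\mathcal{G}_{\ww,\mathrm{CI}};1)=1/\pbdual$.

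The second step is the scalar optimization over the curvature class. With $\f=\w$ fixed, the feasibility condition becomes a one-parameter family of inequalities indexed by the integer occupancy levels $\ay,\bz\ge 1$ (numbers of agents on a resource in the optimal versus the greedy configuration), and $\pbdual$ is the supremum of the corresponding ratios over all admissible $\w$. The restriction to curvature at most $\cc$ enters through two monotone facts about any submodular $\w$ normalized to $\w(1)=1$: the marginal increments $\w(j{+}1)-\w(j)$ are non-increasing (submodularity), and they never fall below $1-\cc$ (curvature). I expect the worst case over the class to be attained at an extremal welfare rule whose marginal returns sit at the two boundary values dictated by these constraints, reducing the problem to a short calculation in $\ay,\bz$ that I would carry out to obtain $\pbdual=1+\cc$, hence the claimed $(1+\cc)^{-1}$. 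As a consistency check one can verify the stated ordering $(1+\cc)^{-1}\le 1-\tfrac{\cc}{2}$, equivalent to $\cc(1-\cc)\ge 0$, confirming that the greedy guarantee never exceeds the optimal one-round guarantee of Theorem \ref{thm:oneroundC}.

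For the matching lower bound I would exhibit a family of games in $\mathcal{G}_{\ww,\mathrm{CI}}$ that drives the efficiency to $(1+\cc)^{-1}$, built so that the inequalities used in the upper bound hold with equality simultaneously: the best-response (tie-breaking) choices of the greedy agents must coincide in value with their optimal alternatives at the moment they move, while the optimal actions are forced onto already-saturated resources so that their marginal contribution realizes the limiting value $1-\cc$. The underlying intuition is the classical greedy/curvature argument in the separable form afforded by $\W(a)=\sum_r\w_r(|a|_r)$: telescoping the per-agent welfare gains recovers $\W(a^{\mathrm{gr}})$, submodularity transfers each optimal deviation from the full profile onto the relevant greedy prefix, and the crude bound $\W(\aopt)\le 2\,\W(a^{\mathrm{gr}})$ is sharpened to $\W(\aopt)\le(1+\cc)\,\W(a^{\mathrm{gr}})$ by charging the overlap between $a^{\mathrm{gr}}$ and $\aopt$ at rate $\cc$ instead of $1$.

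The main obstacle is twofold and concentrated in the second and third steps. First, pinning the constant to exactly $1+\cc$ requires identifying the correct extremal welfare rule within the curvature class and verifying it maximizes the relevant ratio over all occupancy pairs $(\ay,\bz)$; a naive linear-marginal guess does not immediately yield the clean value, so the choice of extremal rule and the bookkeeping of which $(\ay,\bz)$ is binding is where care is needed. Second, the tight instance is not realizable with a two-agent game (which only attains $1-\tfrac{\cc}{2}$); the construction must use enough agents for greedy to accumulate the full overlap that curvature permits, and making all the equality conditions hold at once is the delicate part of establishing tightness.
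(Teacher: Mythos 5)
There is a genuine gap at the foundation of your plan: the identification ``$\U(a)=\W(a)$ for every $i$, i.e.\ $\f_r=\w_r$ on each resource'' is false, and the rest of your main argument (pinning $\f=\w$ in the one-round LP certificate of Theorem \ref{thm:optLP}) inherits the error. A local utility of the form \eqref{eq:utildef} with $\f_r=\w_r$ gives $\U(a)=\sum_{r\in a_i}\w_r(|a|_r)$, which sums only over agent $i$'s own resources and awards the \emph{full} value $\w_r(|a|_r)$ rather than agent $i$'s marginal contribution; this induces different best responses from the common-interest utility $\U(a)=\W(a)$. Concretely, for the set-covering rule $\wsc$ an agent under $\f=\wsc$ is indifferent between joining an already-covered resource and covering a fresh one, whereas under $\rm{CI}$ it strictly prefers the fresh one. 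Feeding $\f=\w$ into the constraint of Lemma \ref{lem:tractableLP} gives $\pbdual\geq\sum_{i=1}^{\ay}\w(i)-\bz+\w(\bz)$, which diverges as $\ay\to\infty$ (efficiency $0$), not $(1+\cc)^{-1}$. The paper's fix is the step you are missing: $\rm{CI}$ is \emph{strategically equivalent} to the marginal-contribution design $\fmc(j)=\w(j)-\w(j-1)$ (subtract the nonstrategic term $\W(\emp_i,a_{-i})$), and it is $\f=\fmc$ that must be substituted into the LP. Then the telescoping identity $\sum_{i\leq \ay}\fmc(i)=\w(\ay)$, the binding choice $\bz=\ay$ (from monotonicity of $\fmc(\bz+1)-\fmc(\ay+1)$ in $\bz$), and the curvature bounds $\ay/\w(\ay)\geq 1$, $\fmc(\ay+1)\geq 1-\cc$ yield $\pbdual=\max_{\ay}\{2-\tfrac{\ay}{\w(\ay)}\fmc(\ay+1)\}\leq 1+\cc$, with tightness at the $b$-covering rules.

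Your third paragraph sketches a different, viable route -- the classical Conforti--Corn\'uejols greedy/curvature argument carried out directly on the separable welfare $\W(a)=\sum_r\w_r(|a|_r)$ -- and that argument does not depend on the faulty $\f=\w$ identification; if you pursue it, it can replace the LP machinery entirely for the upper bound. But as written the proposal hedges between the two routes, and the one you designate as primary fails at its first step. On the tightness side you are right that a two-agent instance only reaches $1-\cc/2$; the paper sidesteps an explicit construction by invoking the exactness of the LP characterization (which already quantifies over all $n$) and exhibiting the $b$-covering rule as the extremal welfare, so if you take the direct route you will indeed need a many-agent family in which each optimal action lands on a resource already occupied by an earlier greedy agent.
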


The efficiency guarantee in Eq. \eqref{eq:curvGR} actually exactly matches the bound given in the submodular optimization literature for general submodular set functions \cite{conforti1984submodular}. A visual comparison between the performance guarantees of the greedy algorithm to the optimal one-round utility design is depicted in Figure \ref{fig:efffront}. Additionally, we depict the optimal asymptotic guarantees, where we allow the best response process to converge. The resulting asymptotic guarantees actually match the best approximation
guarantee, that is $1 - \cc/e$, that can be achieved by any polynomial time algorithm \cite[Theorem 1]{chandan2021tractable}, \cite[Theorem 2]{paccagnan2021utility}. We see the resulting loss in performance as we go from asymptotic time to the one-round walk to the greedy algorithm.

\begin{figure}[ht]
    \centering
    \includegraphics[width=250pt]{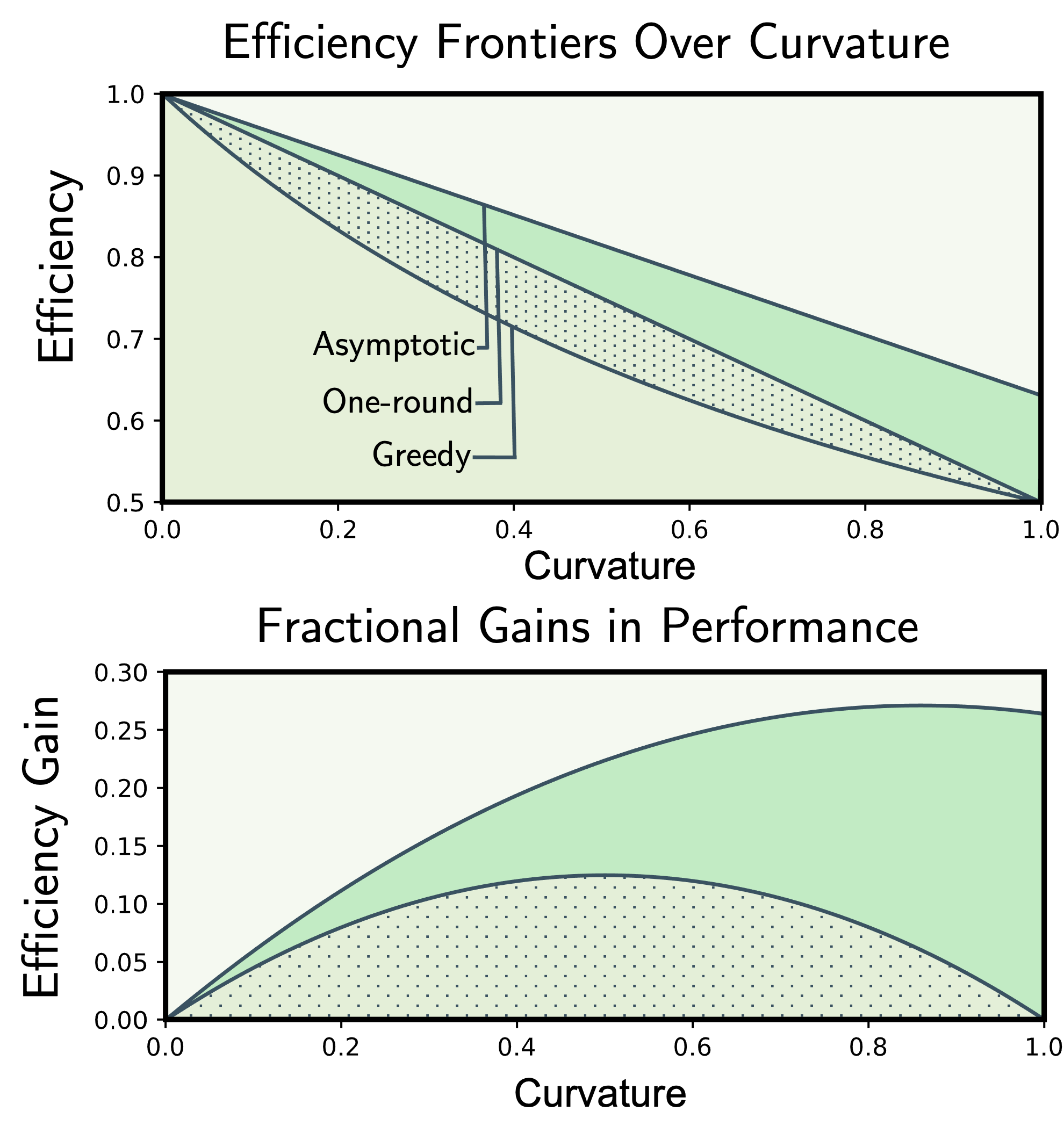}
    \caption{In the top figure, we visually depict the efficiency guarantees of Theorem \ref{thm:oneroundC} and Proposition \ref{prop:effMCC} with respect to the optimal asymptotic guarantees. Additionally, the fractional gains in the performance when moving from the greedy solution to the optimal one-round and the asymptotic solutions are depicted in the bottom figure.}
    \label{fig:efffront}
\end{figure}

\section{Multiple Round Guarantees}

We now extend to $\k$-round walks, and study the resulting efficiency guarantees. Allowing the best response process to continue for more than $\k = 1$ rounds may appear to be a natural avenue to increase the performance guarantees. However, in the next theorem, we show that further rounds do not increase the relative efficiency guarantees. Specifically, with regards to the set of welfare rules of a certain curvature, we derive an upper bound for the efficiency of $\k$-round walk that exactly matches the efficiency guarantee of the one-round walk.

\begin{thm}
\label{thm:kroundC}
Let the set $\ww$ comprise of all welfare rules $\w$ such that $\w \in \R_{> 0}^{\N}$ has a curvature of at most $\cc$. Then the efficiency guarantees of the optimal utility design and the common interest utility with a $\k$-round best response process, for any $\k \geq 1$, is respectively upper bounded by
\begin{align}
    \label{eq:klessC}
    \pob^*(\ww; \k) &\leq 1 - \frac{\cc}{2} \\[1ex]
    \label{eq:klessCCI}
    \pob(\mathcal{G}_{\ww, \rm{CI}}; \k) &\leq (1 + \cc)^{-1},
\end{align}
where the common interest utility $\rm{CI}$ satisfies $\U(a) = \W(a)$ for all $i \in \p$ and $a \in \ac$.
\end{thm}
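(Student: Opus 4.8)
The plan is to prove both inequalities by producing, for each utility design under consideration, a game whose $\k$-round worst-case efficiency is already pinned at the corresponding one-round value. The mechanism is the following persistence principle: the efficiency $\pob(\G;\k)$ in \eqref{eq:effG} is a minimum over the terminal profiles in $\sol(\k)$, so it suffices to exhibit a single $\k$-round trajectory ending at a low-welfare profile. If a profile $a^\star$ is reachable by a one-round walk, i.e. $a^\star \in \sol(1)$, and is moreover a Nash equilibrium in the sense that $a^\star_i \in \br(a^\star_{-i})$ for every $i \in \p$, then each agent may re-select its current action at every subsequent step, so the constant trajectory terminating at $a^\star$ is admissible and $a^\star \in \sol(\k)$ for all $\k \geq 1$.

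First I would revisit the constructions that establish the upper-bound directions of Theorem \ref{thm:oneroundC} and Proposition \ref{prop:effMCC}. For an arbitrary design $\fw$ (respectively, for the common-interest design), these furnish a game in $\setgm$ together with a one-round trajectory terminating at a profile $a^\star$ whose efficiency $\W(a^\star)/\max_{a \in \ac}\W(a)$ is at most $1 - \cc/2$ (respectively, at most $(1+\cc)^{-1}$). The key step is to verify that $a^\star$ is a Nash equilibrium under the induced utility rules $\f_r = \fw(\w_r)$. Concretely, for each agent $i$ I would confirm that no deviation $a_i' \in \ac_i$ yields $\U(a_i', a^\star_{-i}) > \U(a^\star_i, a^\star_{-i})$; since the utilities in \eqref{eq:utildef} decompose resource-by-resource, this reduces to comparing the marginal utility values $\f_r(|a^\star|_r)$ across the resources available to agent $i$ at $a^\star$.

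Granting the Nash property, the persistence principle gives $a^\star \in \sol(\k)$ for every $\k \geq 1$, whence $\min_{a \in \sol(\k)}\W(a) \leq \W(a^\star)$ and therefore $\pob(\G;\k) \leq 1 - \cc/2$ in the first case and $\pob(\G;\k) \leq (1+\cc)^{-1}$ in the common-interest case. Passing to the infimum over the relevant game class establishes \eqref{eq:klessCCI} and, for the fixed design $\fw$, the bound in \eqref{eq:klessC}; since the construction is carried out for an arbitrary design, the supremum over designs in the definition of $\pob^\star$ preserves \eqref{eq:klessC}.

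I expect the main obstacle to be the verification that the one-round tight example reaches a Nash equilibrium rather than merely a bad first-pass profile. The one-round lower-bound instances are engineered so that the sweep of best responses underperforms $\aopt$ on its first pass, but an agent that committed early could in principle strictly improve once the later agents have settled, which would break persistence. The remedy is to design each agent's action set so that its first-round choice weakly dominates every alternative at the terminal profile, using the submodularity and the curvature ceiling $\cc$ on $\ww$ to bound the marginal gain of any re-optimization by zero. Carrying out this resource-level accounting—showing that no deviation is profitable precisely because the welfare rules have curvature at most $\cc$—is the technical heart of the argument and the point at which the constants $1 - \cc/2$ and $(1+\cc)^{-1}$ are locked in.
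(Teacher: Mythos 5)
Your overall strategy---reuse the tight one-round construction and argue that its low-welfare terminal profile survives to the end of a $\k$-round walk---is the same as the paper's, and the Nash-persistence mechanism you describe is exactly what the paper uses for \eqref{eq:klessCCI} and for two of the three cases of its proof of \eqref{eq:klessC}. However, there is a genuine gap in your treatment of \eqref{eq:klessC}: you insist that the terminal profile $a^\star$ be a Nash equilibrium and propose to engineer the action sets so that this holds. This cannot be done for every utility design. If $\fw$ is a design whose price of anarchy over the curvature-$\cc$ class exceeds $1-\cc/2$ (the asymptotically optimal design achieves $\poa(\setgm)=1-\cc/e>1-\cc/2$ for $\cc>0$), then \emph{every} Nash equilibrium of \emph{every} game in $\setgm$ has efficiency strictly above $1-\cc/2$, so no profile that is simultaneously a Nash equilibrium and has efficiency at most $1-\cc/2$ exists; your persistence argument can then certify at best $\pob(\setgm;\k)\leq 1-\cc/e$, which is too weak. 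Since $\pob^*(\ww;\k)$ is a supremum over \emph{all} designs, the bound \eqref{eq:klessC} must be established for these designs as well.

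The paper closes this gap with a different trajectory argument. In its two-agent construction, the relevant regime (its ``case (b)'', with $1-\cc\leq \f(2)\leq 1$, which in fact contains the one-round-optimal design whenever $0<\cc<1$) has a low-welfare profile $\abr$ that is \emph{not} a Nash equilibrium but is nevertheless in $\sol(\k)$: the walk reaches an intermediate profile $a'$ that \emph{is} a Nash profile during the first round, idles there for $\k-1$ rounds, and in the very last best response an indifferent agent switches into $\abr$. The point is that membership in $\sol(\k)$ only requires the profile to be reachable at step $\k n$, not to be absorbing. Your proof needs either this ``idle, then switch at the last moment'' device or some substitute for it; without one, the argument fails precisely for the designs that make the supremum in $\pob^*$ nontrivial. (Your approach is fine for \eqref{eq:klessCCI}: the paper's common-interest construction does terminate at a Nash equilibrium, so Nash persistence suffices there.)
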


Notably, for any curvature $\cc \in [0, 1]$, the upper bound in Eq. \eqref{eq:klessC} exactly matches the characterization in Eq. \eqref{eq:oneeqC}, and likewise for the upper bound in Eq. \eqref{eq:klessCCI} and the characterization in Eq. \eqref{eq:curvGR}. Therefore, in regards to the efficiency guarantees, running the best response process for more than one round does not increase the performance. We also remark that the results in Theorem \ref{thm:kroundC} is not endemic to the specific dynamics we consider in this paper. Allowing for different order of play apart from round-robin does not affect the resulting upper bounds. This is further elaborated on in the Appendix. Therefore, in general, this suggests stark diminishing returns for running the best response process for further rounds.

\section{Tradeoffs of Short/Long Term Performance}

So far, we have described the optimal transient performance guarantees through an intelligent utility design. However, we are also interested in the downstream effects of utility designs when running the round-robin best response process asymptotically and the interplay between the achievable short and long term efficiency guarantees. To start, we importantly observe that, since we consider resource allocation games \footnote{Resource allocation games are isomorphic to potential games \cite{monderer1996potential}.}, the set of limit points of the best response process $\lim_{\k \to \infty} \sol(\k)$ must be a subset of the set of \emph{Nash equilibrium} $\NE$ of the game, if the Nash equilibrium are strict. Thus, we explore the resulting efficiency guarantees of the Nash equilibrium to characterize the asymptotic behavior of the best response process. We consider $\ne \in \NE$ to be a Nash equilibrium if
\begin{equation}
    \ne_i \in \arg \max_{a_i \in \ac_i} \U(a_i, \ne_{-i}) \ \text{ for all } i \in \p,
\end{equation}
where $\ne$ is strict if for all $i$, the action $\ne_i$ is the unique maximizer. Then considering the solution set to be $\NE$ as opposed to $\sol(\k)$ for the game $\G$, we arrive at the familiar metric of \emph{price of anarchy} as follows.
\begin{equation}
    \poa(G) = \frac{\min_{a \in \NE}{\W(a)}}{\max_{a \in \mathcal{A}}{\W(a)}}.
\end{equation}
We similarly define $\poa(\mathcal{G}) = \inf_{\G \in \mathcal{G}} \poa(\G)$ mirroring Eq. \eqref{eq:effsetG}. The price of anarchy is a well understood metric, with a host of results on its characterization, complexity, and design \cite{roughgarden2009intrinsic}. In the following theorem, we proceed to identify the reciprocal guarantees for both the utility design that optimizes the transient guarantees of the one round walk and the utility design that optimizes the long-term guarantees of the Nash equilibrium.

\begin{thm}
\label{thm:submodtrade}
Suppose that $\ww$ is the set of all possible submodular welfare rules. Also, let $\fw_{\poa}$ be the utility design that maximizes the price of anarchy and $\fw_1^*$ be the utility design that maximizes the efficiency guarantees of the one-round walk. Then the efficiency guarantees with a one-round best response process for both utility designs are
\begin{equation}
    \pob(\mathcal{G}_{\ww, \fw_{\poa}}; 1) = 0 \quad \quad \pob(\mathcal{G}_{\ww, \fw_1^*}; 1) = \frac{1}{2}. \label{eq:tradfrontone}
\end{equation}
Furthermore, the price of anarchy guarantees of both utility designs are respectively
\begin{equation}
    \poa(\mathcal{G}_{\ww, \fw_{\poa}}) = 1 - \frac{1}{e} \quad \quad \poa(\mathcal{G}_{\ww, \fw_1^*}) = \frac{1}{2}. \label{eq:tradfrontpoa}
\end{equation}
\end{thm}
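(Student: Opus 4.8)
The statement is four equalities, two of which follow from results already in hand and two of which require real work. I would dispatch the two ``free'' ones first, then handle the price of anarchy of the one-round-optimal design, and spend the bulk of the effort on $\pob(\mathcal{G}_{\ww,\fw_{\poa}};1)=0$, which is the crux.

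\textbf{The two inherited equalities.} The class of all submodular welfare rules is exactly the class of curvature-at-most-$1$ rules (submodularity forces the marginal increments to be non-negative and non-increasing, so the limit defining the curvature exists and lies in $[0,1]$). Thus Theorem~\ref{thm:oneroundC} at $\cc=1$ gives $\pob^*(\ww;1)=1-\tfrac12=\tfrac12$, and since $\fw_1^*$ attains this supremum by definition, $\pob(\mathcal{G}_{\ww,\fw_1^*};1)=\tfrac12$. Likewise $1-1/e$ is precisely the optimal price of anarchy for submodular resource-allocation games (this is $1-\cc/e$ at $\cc=1$, from the cited tractable-POA characterizations), and $\fw_{\poa}$ attains it, so $\poa(\mathcal{G}_{\ww,\fw_{\poa}})=1-1/e$ is immediate.

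\textbf{Price of anarchy of the one-round-optimal design.} For $\poa(\mathcal{G}_{\ww,\fw_1^*})=\tfrac12$ I would give two matching bounds. For the upper bound, the coverage rule $\w(j)\equiv 1$ lies in $\ww$, and the one-round program of Theorem~\ref{thm:optLP} forces $\f(j)=0$ for $j\ge 2$ on it (sending $\bz\to\infty$ in the constraint), so there $\fw_1^*$ is the marginal-contribution/common-interest utility, whose price of anarchy is the classical $\tfrac12$; since $\poa$ over a set is an infimum, $\poa(\mathcal{G}_{\ww,\fw_1^*})\le\tfrac12$. For the matching lower bound I would use that at $\cc=1$ the common-interest design is itself one-round-optimal (Proposition~\ref{prop:effMCC} gives $(1+\cc)^{-1}=\tfrac12=1-\cc/2$), so $\fw_1^*$ may be taken to be common interest; under $\U_i=\W$ every Nash equilibrium is a single-agent local maximizer of the submodular objective $\W$, hence a $\tfrac12$-approximation, giving $\poa\ge\tfrac12$.

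\textbf{The crux and the main obstacle.} To prove $\pob(\mathcal{G}_{\ww,\fw_{\poa}};1)=0$ I would construct a family $\{\G_N\}\subset\mathcal{G}_{\ww,\fw_{\poa}}$ whose worst-case one-round welfare is a vanishing fraction of the optimum. The governing idea is that $\fw_{\poa}$ is tuned for equilibrium quality and is therefore \emph{misaligned} with welfare during the transient phase: $\f=\fw_{\poa}(\w)$ is not the marginal-contribution utility, so an agent's one-round best response need not improve $\W$ by its true marginal gain, and—unlike in the price-of-anarchy analysis—an early agent's poor commitment is never repaired within a single round. Concretely I would exploit two features of $\fw_{\poa}$: that the per-rule normalization $\f(1)$ it produces is shape-dependent (so $\f(1)$ is not proportional to $\w(1)$ across rules of different shape), and that the POA-optimal utility carries a strictly positive tail $\f(j)>0$ for $j\ge 2$ (needed to beat the $\tfrac12$ barrier). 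Using resources of two types—low-welfare ``decoy'' resources on which $\fw_{\poa}$ places disproportionately high utility, and high-welfare ``productive'' resources on which it places comparatively low utility—together with action sets that, given the partial configuration left by its predecessors, steer each agent onto decoys or onto already-occupied resources, I would drive the worst-case trajectory onto a set of bounded total welfare while the optimal assignment spreads the $N$ agents across productive resources for welfare $\Theta(N)$. The main obstacle is exactly this construction: certifying that the ratio tends to $0$ rather than to a positive constant requires the explicit closed form of $\fw_{\poa}$ from the price-of-anarchy program, since one must both verify that at every step the adversarial best response lands on a welfare-poor action and control the rate at which the decoy utilities can outweigh the productive ones as $N\to\infty$. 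Calibrating the welfare-rule shapes, the resource scalings, and the action-set cardinalities so that both requirements hold simultaneously for all $N$ is the delicate part; the other three equalities are essentially bookkeeping on top of Theorems~\ref{thm:oneroundC} and \ref{thm:optLP}, Proposition~\ref{prop:effMCC}, and the cited POA characterization.
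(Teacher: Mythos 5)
Three of your four equalities are handled essentially as the paper does (the two inherited ones, and $\poa(\mathcal{G}_{\ww,\fw_1^*})=\tfrac12$ via set covering for the upper bound; for the lower bound the paper proves a general lemma that $\pob(\setgm;\k)\le\poa(\setgm)$ for any non-increasing utility design, by exhibiting a game in which a worst-case Nash equilibrium is reachable in one round, rather than invoking the local-search $\tfrac12$-approximation for the common-interest design --- your route works but silently fixes a particular maximizer $\fw_1^*$, whereas the paper's lemma covers whichever non-increasing maximizer the LP of Theorem~\ref{thm:optLP} returns). Also, a small quibble: sending $\bz\to\infty$ in the one-round constraint for $\wsc$ only forces $\f(j)\ge 0$, not $\f(j)=0$; the correct way to see that marginal contribution is one-round-optimal for set covering is via the closed form $\pob^{-1}=\sum_i\f(i)-\min_i\f(i)+1$, which is minimized by $\f=(1,0,0,\dots)$.

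The genuine gap is the headline claim $\pob(\mathcal{G}_{\ww,\fw_{\poa}};1)=0$, which you describe as a plan ("decoy" versus "productive" resources, calibration left open) but do not execute; as you yourself note, everything hinges on a quantitative property of $\fw_{\poa}$ that you never establish. The paper's construction is both simpler and sharper than your two-type scheme: it uses a \emph{single} welfare rule, the curvature-one $b$-covering rule $\wb(j)=\min(j,b)$, for every resource, and the key fact --- obtained by solving the recursion $\fa(j+1)=\tfrac1b[j\fa(j)-\rho^b\min\{j,b\}]+1$ in closed form --- is that $\fa(j)=\Theta(1/j)$, so $\sum_j\fa(j)$ \emph{diverges}. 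The game then makes agent $i$ exactly indifferent between stacking on a common block $\rr_1$ of $v$ resources (utility $v\fa(i)$ as the $i$-th arrival) and moving to its own disjoint block $\rr_{i+1}$ of $\sim v\fa(i)$ resources (utility $v\fa(i)\cdot\fa(1)$), so the one-round walk can stack everyone for welfare $v\,\wb(n)=vb$, while the optimum spreads out for welfare $v\sum_{i\le n}\fa(i)\to\infty$. Without identifying this $1/j$ decay (slow enough that the tail sum diverges, which is precisely what the price-of-anarchy optimization forces, and which a one-round-optimal rule avoids by truncating to zero), there is no way to certify that your ratio tends to $0$ rather than to a positive constant, so the central statement of the theorem remains unproved in your write-up.
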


We observe that while the asymptotic guarantees of $\fw_1^*$ are equivalent to the corresponding transient guarantees, the transient guarantees of $\fw_{\poa}$ unexpectedly degrade to $0$. Interestingly, optimizing for asymptotic performance does not necessarily translate to good transient performance in our setting. To clarify the interplay between the transient and asymptotic guarantees, we would like to characterize the exact Pareto optimal frontier for these guarantees. While calculating this trade-off frontier is difficult to do in general, we restrict our analysis to the specific subset of resource allocation games known as \emph{set covering games} \cite{gairing2009covering} to arrive at an exact trade-off curve. Set covering games are characterized by the following welfare rule.
\begin{equation}
\label{eq:wscdef}
    \wsc(j) = \left\{\begin{array}{lr}
        1, & \text{for } j \geq 1\\
        0, & \text{for } j = 0\\
        \end{array}\right\}.
\end{equation}

\noindent With this, we arrive at the following Pareto frontier characterization, depicted in Figure \ref{fig:tradepareto}. Note that the end points of the trade-off curve matches Eq. \eqref{eq:tradfrontone} and Eq. \eqref{eq:tradfrontpoa} exactly.

\begin{thm}
\label{thm:poapobtradeoff}
Let $\ww = \{\wsc\}$, where $\wsc$, defined in Eq. \eqref{eq:wscdef}, is the set covering welfare rule and $\fw = \{\f\}$ is the corresponding utility rule. Under the constraint that
$\poa(\mathcal{G}_{\wsc, \f}) = Q \in [\frac{1}{2}, 1 - \frac{1}{e}]$, the optimal $\max_{\f} \pob(\mathcal{G}_{\wsc, \f}; 1)$ is
\begin{equation}
\label{eq:tradeoffsetcov}
\left[ \sum^\infty_{j=0} \max \left\{ j! (1 - \frac{1 - Q}{Q} \sum^j_{\tau=1}{\frac{1}{\tau!}}) , 0\right\}+1 \right]^{-1}.
\end{equation}
\end{thm}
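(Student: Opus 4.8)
The plan is to reduce both the transient guarantee $\pob(\mathcal{G}_{\wsc,\f};1)$ and the asymptotic guarantee $\poa(\mathcal{G}_{\wsc,\f})$ to closed-form functions of the single scalar sequence $\f=(\f(1),\f(2),\dots)$ with $\f(1)=1$, $\f(j)\ge0$, and then optimize over this sequence subject to the $\poa$-constraint. For the transient side I would specialize the linear-programming characterization underlying Theorem~\ref{thm:optLP} to the single basis rule $\wsc$: since $\wsc(z)=1$ for every $z\ge1$ and the right-hand side is nonincreasing in $z$, the binding constraint is $z=1$, so that
\[
\pob(\mathcal{G}_{\wsc,\f};1)=\frac{1}{\beta(\f)},\qquad \beta(\f)=1+\max_{y\ge1}\Big(\textstyle\sum_{i=1}^{y}\f(i)-\f(y+1)\Big).
\]
For the asymptotic side I would use the primal--dual (smoothness) characterization of the price of anarchy for covering games: summing the Nash inequalities over agents, the per-resource dual constraint with multiplier fixed to $1$ (forced by the $y=0$ configuration) gives
\[
\poa(\mathcal{G}_{\wsc,\f})=\Big[\,1+\max_{y\ge1}\big(y\,\f(y)-\f(y+1)\big)\Big]^{-1}.
\]

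Next I would simplify the transient objective. Because $\f\ge0$, the partial sums $\sum_{i=1}^{y}\f(i)$ are nondecreasing and $-\f(y+1)\le0$, so the supremum over $y$ is attained as $y\to\infty$, giving $\beta(\f)=1+\sum_{i=1}^{\infty}\f(i)$. Hence maximizing $\pob$ is equivalent to \emph{minimizing} $\sum_{i\ge1}\f(i)$, and the theorem reduces to the semi-infinite program
\[
\min_{\f}\ \sum_{i=1}^{\infty}\f(i)\quad\text{s.t.}\quad \f(1)=1,\ \f\ge0,\ \ y\,\f(y)-\f(y+1)\le \tfrac{1-Q}{Q}\ \ \forall y\ge1,
\]
whose last constraint family is exactly $\poa(\mathcal{G}_{\wsc,\f})\ge Q$.

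I would solve this by an exchange argument. Rearranged, the constraints read $\f(y+1)\ge\max\{y\,\f(y)-\tfrac{1-Q}{Q},0\}$, a forward chain in which raising $\f(y)$ only raises the lower bound on $\f(y+1)$. Starting from $\f(1)=1$, the pointwise-minimal feasible sequence is the recurrence
\[
\f^{*}(1)=1,\qquad \f^{*}(y+1)=\max\Big\{y\,\f^{*}(y)-\tfrac{1-Q}{Q},\,0\Big\},
\]
and I would argue it is sum-optimal: at the earliest index where a feasible $\f$ exceeds $\f^{*}$, lowering it to $\f^{*}$ preserves the (relaxed) downstream constraints and cannot increase the total. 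Solving the recurrence gives $\f^{*}(j)=\max\{(j-1)!\,(1-\tfrac{1-Q}{Q}\sum_{\tau=1}^{j-1}\tfrac1{\tau!}),0\}$, which satisfies $\f^{*}(1)=1$ and makes $y\f^{*}(y)-\f^{*}(y+1)=\tfrac{1-Q}{Q}$ wherever $\f^{*}>0$, so the asymptotic constraint binds and $\poa(\mathcal{G}_{\wsc,\f^{*}})=Q$ exactly.

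Substituting back and reindexing $j\mapsto j-1$ gives $\beta(\f^{*})=1+\sum_{j=0}^{\infty}\max\{j!(1-\tfrac{1-Q}{Q}\sum_{\tau=1}^{j}\tfrac1{\tau!}),0\}$, and inverting produces \eqref{eq:tradeoffsetcov}; as a check, at $Q=\tfrac12$ every term with $j\ge1$ vanishes and $\beta=2$ (recovering $\pob=\tfrac12$), while at $Q=1-\tfrac1e$ the summand behaves like $\tfrac{1}{(e-1)(j+1)}$, the series diverges, and $\pob\to0$, matching the endpoints of Theorem~\ref{thm:submodtrade}. I expect two main obstacles: first, pinning down the correct primal--dual $\poa$ program for covering with a general (not necessarily monotone) $\f$ and exhibiting a matching worst-case instance, so the characterization is tight rather than a mere upper bound; and second, the exchange argument, whose delicate point is that forcing the asymptotic constraint to bind at \emph{every} index simultaneously minimizes the transient objective --- that is, there is no gain from slackening $\poa$ at some indices to shave $\sum_i\f(i)$ elsewhere.
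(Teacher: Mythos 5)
Your overall route is the same as the paper's: reduce the one-round guarantee to (essentially) $1+\sum_i\f(i)$ via the dual LP with binding constraint $\bz=1$, $\ay\to\infty$ (the paper's Lemma~\ref{lem:setpob}); reduce the price-of-anarchy requirement to a forward chain of linear constraints on $\f$; take the pointwise-minimal sequence $\f^*(j+1)=\max\{j\f^*(j)-\chi,0\}$ with $\chi=(1-Q)/Q$; and solve the recursion in closed form (the paper's Lemma~\ref{lem:implicitfboundar} plus its LTV state-space computation). Your endpoint checks at $Q=\tfrac12$ and $Q=1-\tfrac1e$ also match the paper's.

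The genuine gap is the price-of-anarchy characterization, which you flag as an ``obstacle'' but do not close. The asserted equality $\poa(\mathcal{G}_{\wsc,\f})^{-1}=1+\max_{y\ge1}\bigl(y\f(y)-\f(y+1)\bigr)$ is false for general $\f$: the tight characterization the paper imports from Paccagnan et al.\ is $1+\max_j\{(j+1)\f(j+1)-1,\ j\f(j)-\f(j+1),\ j\f(j+1)\}$, and the omitted families matter --- for instance $\f(1)=1$, $\f(2)=10$ satisfies your constraint at $y=1$ with room to spare, yet the term $j\f(j+1)$ forces $\poa\le 1/11$. Consequently your semi-infinite program minimizes $\sum_i\f(i)$ over a strict \emph{superset} of $\{\f:\poa(\mathcal{G}_{\wsc,\f})\ge Q\}$, so a priori it only yields an upper bound on $\max_\f\pob(\mathcal{G}_{\wsc,\f};1)$; to conclude, you must still verify that the recursively defined $\f^*$ satisfies the full constraint set, i.e.\ that its price of anarchy is not dragged below $Q$ by the omitted terms (including the boundary term $(n-1)\f(n)$, whose limit must be controlled). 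The paper bridges exactly this hole with a swap argument: any Pareto-optimal $\f$ must be non-increasing (exchanging $\f(J)$ with $\f(J+1)$ when $\f(J)<\f(J+1)$ weakly improves both objectives under the \emph{full} three-term characterization), and for non-increasing $\f$ the characterization collapses to $1+\max_j\{j\f(j)-\f(j+1),\,(n-1)\f(n)\}$, after which your exchange/induction argument goes through essentially as you describe. Without that monotonicity step, or a direct verification that $\f^*$ meets all three constraint families, the argument establishes only one direction of the claimed equality.
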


\begin{figure}[ht]
    \centering
    \includegraphics[width=250pt]{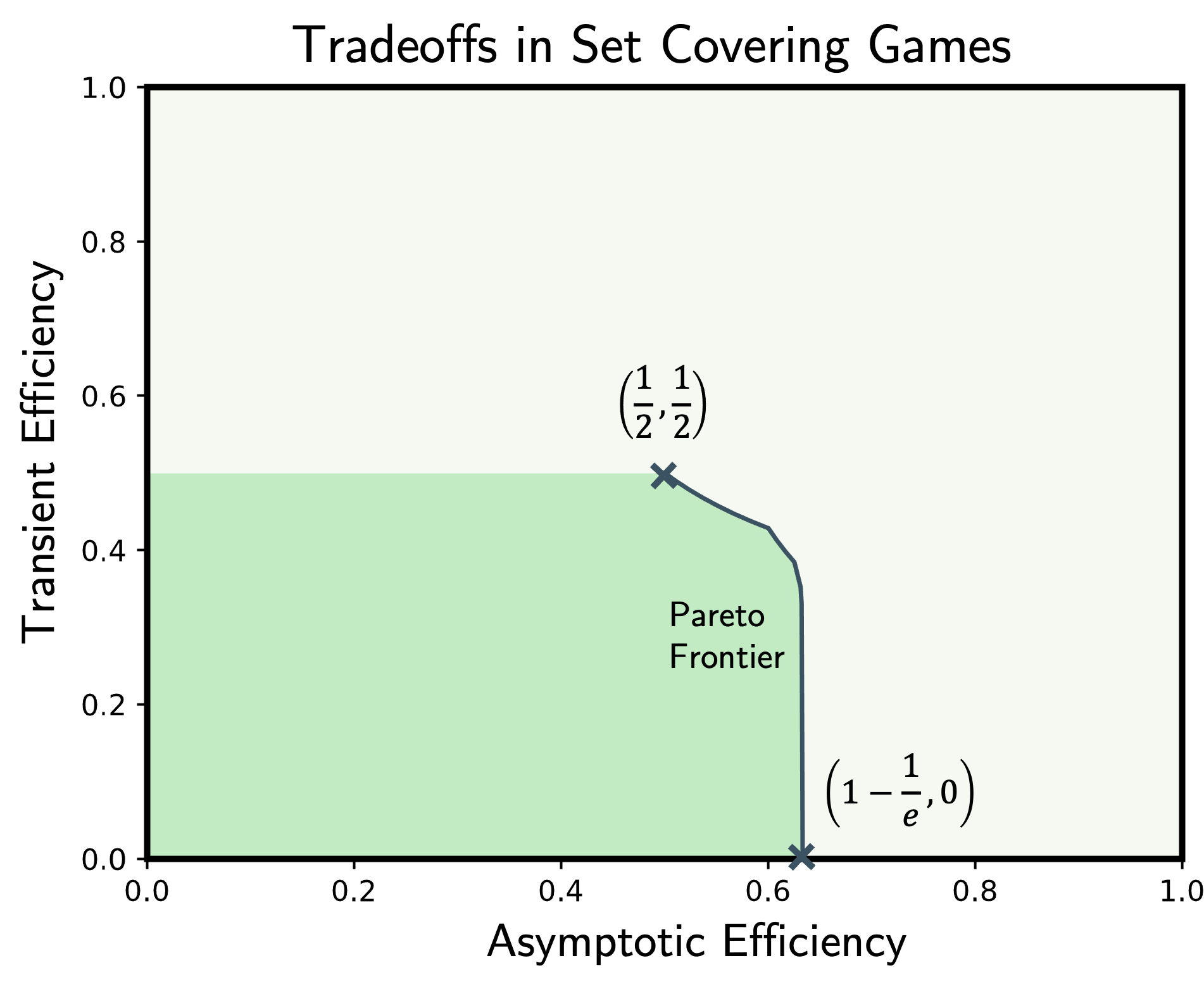}
    \caption{We depict the Pareto-optimal frontier of the one-round efficiency ($\pob(\mathcal{G}_{\wsc, \f}; 1)$) versus the asymptotic efficiency guarantees ($\poa(\mathcal{G}_{\wsc, \f})$) that are possible with regards to the class of set-covering games. We note that the severe drop off in transient efficiency that results from optimizing the asymptotic efficiency.}
    \label{fig:tradepareto}
\end{figure}

Notably in Figure \ref{fig:tradepareto}, we see stark drop-off in transient guarantees, when the price of anarchy is close to $1 - \frac{1}{e}$. This extreme trade-off should prompt a more careful interpretation of asymptotic results, especially in the setting of resource allocation games.

\section{Illustrative Example}
\label{sec:illex}

In this section, we examine the problem of wireless sensor coverage (see \cite{huang2005coverage}) as an illustration of a possible multi-agent scenario. Consider a group of sensors that can observe the region $\rr$. Each sensor has the ability to sense a subset of the region depending on its orientation, physical placement of the sensor, etc. The choice of these parameters constitute the action set $\ac_i$ for each sensor, where a resource is monitored by the sensor if $r \in a_i$. As a whole, the set of sensors wish to arrive at a configuration that maximizes the likelihood of detecting an event. As such, the system welfare is
\begin{equation}
    \W(a) = \sum_{r \in \rr} p_r \cdot (1 - (1 - \mathrm{D})^{|a|_r}),
\end{equation}
where $p_r \in [0, 1]$ indicates the probability that the event will occur at $r$ and $\mathrm{D}$ refers to the conditional probability that the event will be detected by a single sensor given that the sensor is monitoring $r$ and an event does indeed occur at $r$. 

To illustrate the results of the paper, we examine the average performance over $5$ rounds of round-robin best response using the common interest utility design, the asymptotically optimal utility design, and the utility design that optimizes the one-round efficiency across of set of random instances of wireless sensor coverage. Specifically, we simulate $100$ random instances with $20$ agents with $D = .5$. In each simulation, there is a set of $30$ resources that can be covered by agents with the $p_r$ being uniformly selected from the unit interval $[0, 1]$ and subsequently normalized to create a probability distribution $\{p_r\}_{r \in \rr}$. Each agent has $2$ actions available, in addition to the empty allocation $\emp$. Each action $a$ is a consecutive selection of $2$ resources chosen randomly from the resource set $\rr$. 

The resulting system welfare across $5$ rounds for each utility design is highlighted in Figure \ref{fig:numerical}, where the distributions of the system welfare across the randomized instances are depicted with a box and whisker plot. Note that the optimal allocation may also not achieve a $100\%$ detection rate. In Figure \ref{fig:numerical}, we see that worst instance of the optimal one-round performs better than the greedy and asymptotically optimal utility designs when $\k=1$. This is supported in the worst-case analysis presented in this paper. Additionally, we note that the resulting efficiency plateaus quickly, with almost no differences in efficiency after two rounds of best response - confirming that successive rounds give diminishing returns in system performance. Interestingly, on average, the differences in performance across utility designs is much more subtle.

\begin{figure}[ht]
    \centering
    \includegraphics[width=250pt]{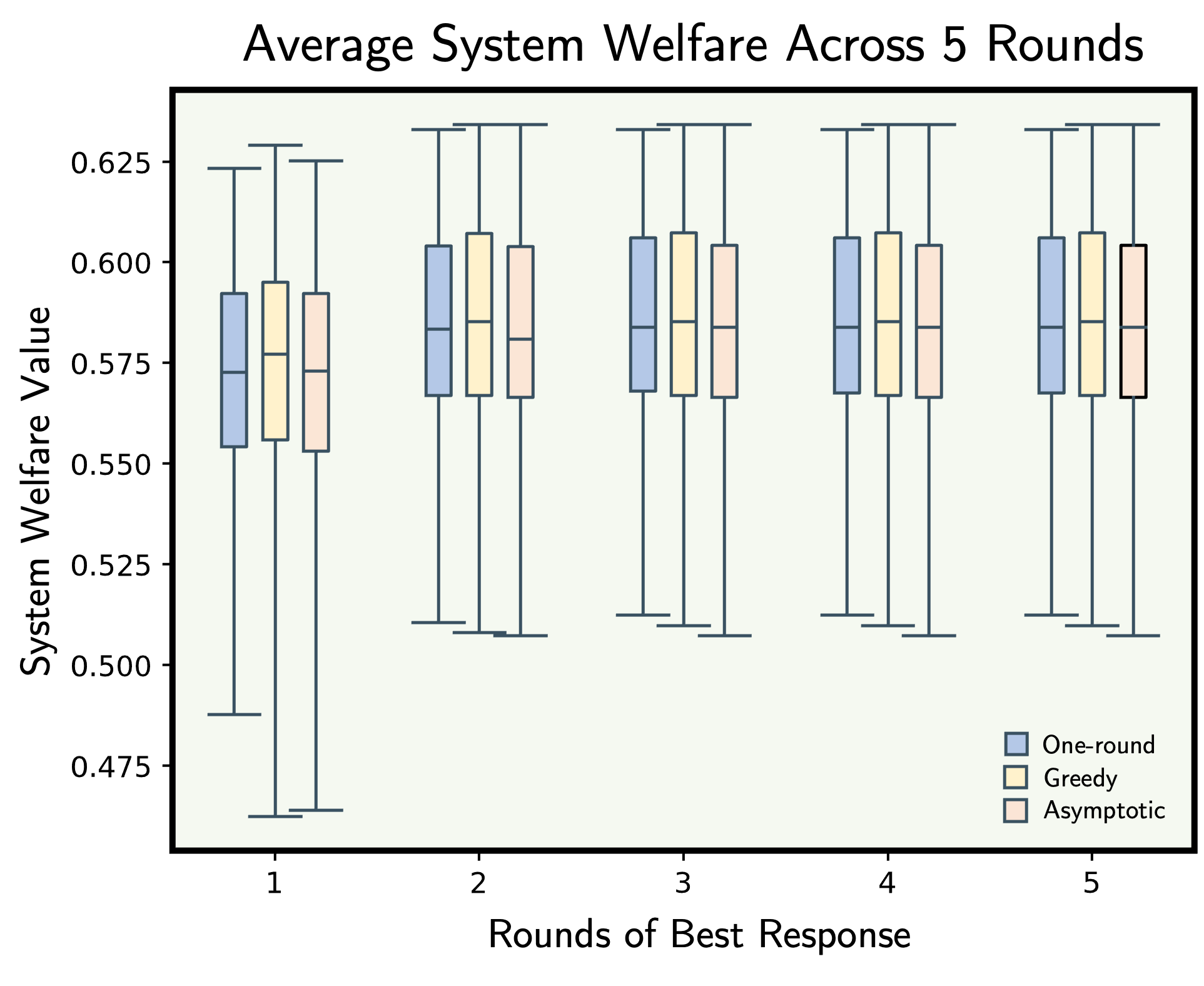}
    \caption{We plot the average rate of event detection in a randomly generated set of wireless sensor coverage problems with respect to three utility designs: the one-round optimal, the common interest, and the asymptotically optimal utility design. We see that in the short term, the one-round optimal design performs better in the worst case than the greedy and the asymptotically optimal utility designs.}
    \label{fig:numerical}
\end{figure}

\section{Conclusion}
\label{sec:conc}

This paper focuses on optimizing the performance of transient behavior in distributed resource allocation problems. Using the language of game theory, this work shifts the emphasis from  studying  the  system-level performance at equilibria, e.g., price of anarchy, to studying the transient performance as defined by a round robin best response process. Focusing on the class of submodular resource allocation problems, this work demonstrates that one can derive utility functions such that the performance after a single round of agent updates is relatively close to the performance of the best polynomial-time algorithms. However, we show that extending beyond one round of updates to a finite number of updates offers no potential gains in performance guarantees. Lastly, we characterize the Pareto frontier between the performance guarantees of the best response process with a single round and an arbitrary number of rounds. Future work may comprise of extending the results to other game models or consider average-case efficiency analysis.

\bibliographystyle{ieeetr}
\bibliography{references.bib}

\appendix
\label{sec:appendix}

We iterate through the proofs of the main theorems of the paper, as well as provide relevant technical discussion and lemmas. Relevant code is found at \cite{konda2022}. The outline of the provided technical proofs are as follows.
\begin{enumerate}
    \item[(A)] Given a welfare set $\ww$ and utility design $\fw$, a simplified linear program is formulated to characterize $\pob(\setgm; 1)$.
    \item[(B)] A proof is given for Theorem \ref{thm:optLP} to determine optimal utility rules and one-round efficiency guarantees.
    \item[(C)] A proof of Theorem \ref{thm:oneroundC} is provided, where the results of Theorem \ref{thm:optLP} are refined for a class of welfare rules with a given curvature.
    \item[(D)] A proof of Proposition \ref{prop:effMCC} is detailed to give a characterization of $\pob(\mathcal{G}_{\ww, \rm{CI}}; 1)$ for the common interest design.
    \item[(E)] A proof of Theorem \ref{thm:kroundC} is outlined, where an upper bound for $\pob^*(\ww; \k)$ and $\pob(\mathcal{G}_{\ww, \rm{CI}}; \k)$ is shown.
    \item[(F)] A proof of Theorem \ref{thm:submodtrade} is provided, where we describe the resulting asymptotic and transient trade-offs for the total set of submodular welfare rules.
    \item[(G)] A proof of Theorem \ref{thm:poapobtradeoff} is given, where the Pareto-frontier for $\poa(\mathcal{G}_{\wsc, \f})$ and $\pob(\mathcal{G}_{\wsc, \f}; 1)$ is characterized for set-covering games.
    \item[(H)] Given a welfare set $\ww$ that is now instead super-modular, a characterization of the optimal one-round, $\k$-round, and asymptotic efficiency guarantees is given.
    % \item[(I)] We verify that the best resulting joint action after $\k$-rounds also inherits the same efficiency guarantee as the worst resulting joint action.
\end{enumerate}

\noindent \emph{Notation.} Given a set $\S$, $\card{\S}$ represents its cardinality and $\ind{\S}$ describes the corresponding indicator function. ($\ind{\S}(e) = 1$ if $e \in \S$, $0$ otherwise). We denote the index of the $j$'th  component of a vector $\mathbf{v}$ with $\mathbf{v}_j$ or $\mathbf{v}(j)$ interchangeably. We use $\ones$ to denote a vector of all ones and $\zeros$ to denote a vector of all zeros. We sometimes use the denotation $\w(0) = \f(0) = 0$ for any welfare or utility rule.

\subsection{Linear Program Formulation of One Round Walk}

We first give a linear program that computes the efficiency $\pob(\setgmn; 1)$ that is based on a search for a worst case game construction $\G \in \setgmn$ that achieves the worst efficiency ratio for one-round. Here, $\setgmn$ denotes the set of games with a fixed $n$ number of agents, set of welfare rules $\ww$ and utility design $\fw$. A comparable primal-dual approach was also explored in \cite{paccagnan2018utility} and \cite{bilo2018unifying} for different settings. We note that it is possible to extend the linear program for $\pob(\setgmn, \k)$ for $\k > 1$ rounds, but the program becomes intractable in general.

First, we apply a key observation that for a game $\G$, truncating the action set of each agent $i$ to $\ac_i = \{\emp_i, \abr_i, \aopt_i \}$ does not affect the efficiency metric $\pob(\G; 1)$. Here, $\emp_i$ is the null action that does not select any resources, $\abr_i$ is the action that agent $i$ plays after the one-round walk is completed with $\abr \in \sol(1)$, and $\aopt_i$ is the action that agent $i$ plays in a joint action that optimizes the welfare $\aopt = \arg \max_{a \in \ac} \W(a)$.\footnote{Note that $\abr_i$ and $\aopt_i$ may be the same action, but using separate denotations does not affect the game structure. Additionally, if $\abr$ is not unique, then the one that performs the worst with respect to $\W$ is selected.} Therefore, we can restrict attention to the class of games $\mathcal{G}_{\ww, \fw}^{n, 3}$ where agents only have these three actions available without loss of generality. Furthermore, scaling $\W$ uniformly does not affect the ratio $\pob(\G; 1) = \frac{\W(\abr)}{\W(\aopt)}$, and we can assume that $\W(\abr) = 1$ without loss of generality. So we aim to find a game that maximizes the optimal welfare $\W(\aopt)$ to provide the lowest ratio. Consolidating the previous observations results in the following optimization problem
\begin{align}
&\pob(\setgmn; 1)^{-1} = \label{eq:LPinf} \\
&\max_{\G \in \mathcal{G}_{\ww, \fw}^{n, 3}} \quad \W(\aopt) \quad \textrm{subject to:} \label{eq:LPinfmax}\\
& \W(\abr) = 1, \label{eq:LPinfequal}\\
  &\U(\abr_{j \leq i}, \emp_{j > i}) \geq \U(\abr_{j < i}, \aopt_i, \emp_{j > i}) \quad \forall i \in \p, \label{eq:LPinfbr}
\end{align}

The constraint inequality in Eq. \eqref{eq:LPinfbr} maintains that the joint action $\abr$ is indeed the joint action that results after a one-round walk, where each agent $i$'s best response is $\abr_i$ (over $\aopt_i$) given that the previous $j \leq i$ agents have also played $\abr_j$. To formulate the linear program from the optimization problem in Eq. \eqref{eq:LPinf}, some necessary definitions are introduced. The possible resource allocations is enumerated by the following product set
\begin{equation*}
    \pt = \prod_{i \in \p} \{\emptyset, \{\abr_i\}, \{\aopt_i\}, \{\abr_i, \aopt_i\} \},
\end{equation*}
where each resource is classified with the set of actions that select it by each agent. Then some corresponding vectors in $\{0, 1\}^n$ can be defined.
\begin{align*}
    \babr_i &= \Big\{ 1 \text{ if } \abr_i \in p_i, \ 0 \text{ otherwise} \Big\}, \\
    \baopt_i &= \Big\{  1 \text{ if } \aopt_i \in p_i, \ 0 \text{ otherwise} \Big\},
\end{align*}
where $p \in \pt$ describes a resource type. We define the norm of $\babr$ to be $\card{\babr} = \sum_{i \in \p} \babr_i$ and denote the number of nonzero elements before index $i$ as $\card{\babr}_{< i} = \sum_{1 \leq j < i} \babr_j$ (similarly for $\card{\baopt} = \sum_{i \in \p} \baopt_i$). With this, we describe the linear program in the following lemma.

\begin{lemma}
\label{lem:LPprim}
Consider the welfare set $\ww = \{\w^1, \dots, \w^m\}$ with $\w^{\ell}(1) = 1$, and the corresponding utility design $\fw(\w^{\ell}) = \f^{\ell}$ with $\f^{\ell}(1) = 1$ for all $\ell$. For $n$ agents, the one-round walk efficiency is $\pob(\setgmn; 1) = \min_{1 \leq \ell \leq m} \frac{1}{\pbdual^{\ell}}$ \footnote{Here, we assume that $\ww$ is a finite set for ease of exposition, but it is straightforward to extend the efficiency result to an uncountable set.}, where $\pbdual^{\ell} \in \R$ is the solution to
\begin{align}
    \pbdual^{\ell} = &\min_{\{\lambda_i\}_{i \in \p}, \pbdual} \quad \pbdual  \quad \textrm{subject to:} \nonumber \\
    & \pbdual \w^{\ell}(\card{\babr}) \geq \w^{\ell}(\card{\baopt}) \ +  \label{eq:LPdualconst} \\
 & \sum_{i \in \p} \lambda_i \Big[ \big(\babr_i - \baopt_i \big) \f^{\ell}(\card{\babr}_{< i} + 1) \Big] \quad \forall p \in \pt \nonumber \\
  &\lambda_i \geq 0 \quad \forall i \in \p. \nonumber
\end{align}
\end{lemma}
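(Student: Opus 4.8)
The statement is a strong-duality computation built on the three-action reduction already established for Eq.~\eqref{eq:LPinf}, so the plan is to (i) rewrite that optimization as an explicit finite linear program in resource-type variables, (ii) dualize, and (iii) reduce the single coupled dual to the per-rule programs in the statement. For step (i), with $n$ fixed I encode a game by the multiplicities $\theta_{p}^{\ell}\ge 0$ of resources carrying type $p\in\pt$ and basis rule $\w^{\ell}$; relaxing these counts from integers to nonnegative reals is without loss for the efficiency, since $\pob(\G;1)$ is invariant under positive scaling of all multiplicities and the constraints below are homogeneous in $\theta$. Using the convention $\w(0)=0$, both system welfares are linear,
\begin{align*}
\W(\aopt)=\sum_{\ell,p}\theta_{p}^{\ell}\,\w^{\ell}(\card{\baopt}),\qquad \W(\abr)=\sum_{\ell,p}\theta_{p}^{\ell}\,\w^{\ell}(\card{\babr}),
\end{align*}
so the normalization $\W(\abr)=1$ and the objective $\max\W(\aopt)$ from Eq.~\eqref{eq:LPinf} are immediate.

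The only genuinely game-specific computation is to express the best-response inequalities Eq.~\eqref{eq:LPinfbr} in these variables. When agent $i$ moves it observes $\abr_{j}$ for $j<i$ and $\emp$ for $j>i$, so any resource of type $p$ that $i$ would cover under either $\abr_i$ or $\aopt_i$ carries congestion exactly $\card{\babr}_{<i}+1$ at that instant; hence $\U(\abr_{j\le i},\emp_{j>i})-\U(\abr_{j<i},\aopt_i,\emp_{j>i})=\sum_{\ell,p}\theta_{p}^{\ell}(\babr_i-\baopt_i)\,\f^{\ell}(\card{\babr}_{<i}+1)$, and Eq.~\eqref{eq:LPinfbr} becomes the linear requirement that this quantity be nonnegative for every $i\in\p$. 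This is a finite LP. I then take its Lagrangian dual with a free multiplier $\pbdual$ on the normalization equality and multipliers $\lambda_i\ge 0$ on the $n$ best-response inequalities; since each primal variable $\theta_{p}^{\ell}$ generates one dual constraint, the dual has exactly one inequality per pair $(\ell,p)$, which on rearranging is precisely Eq.~\eqref{eq:LPdualconst}. The primal is feasible (e.g.\ the trivial all-$\abr$ game) and bounded, so strong duality identifies $\pob(\setgmn;1)^{-1}$ with the coupled dual optimum.

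The remaining and most delicate step is the decomposition $\pob(\setgmn;1)=\min_{\ell}1/\pbdual^{\ell}$, where $\pbdual^{\ell}$ solves the dual using only the rule-$\ell$ constraints with its own multipliers. Writing $g_{\ell}(\lambda)$ for the smallest $\pbdual$ admissible for rule $\ell$ at a given $\lambda$, one inequality is routine: restricting any coupled-feasible $(\pbdual,\lambda)$ to the rule-$\ell$ constraints shows $\pbdual^{\ell}\le$ coupled optimum for every $\ell$ (equivalently, single-rule games already lie in $\setgmn$), giving $\pob(\setgmn;1)\le\min_{\ell}1/\pbdual^{\ell}$. The reverse is the main obstacle: the multipliers $\lambda_i$ are \emph{shared} across all welfare rules in the coupled dual, while the per-rule programs may each prefer different multipliers, so a priori the coupled optimum $\min_{\lambda}\max_{\ell}g_{\ell}(\lambda)$ can strictly exceed the target $\max_{\ell}\min_{\lambda}g_{\ell}(\lambda)=\max_{\ell}\pbdual^{\ell}$. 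I would close this gap by producing a single multiplier $\lambda^{\star}$ that is simultaneously dual-optimal for every rule: using the product structure of $\pt$, the type $p$ that maximizes a rule-$\ell$ constraint can be taken to place the $\abr$-only agents at the smallest congestions and the $\aopt$-only agents at congestion $\card{\babr}+1$, so the binding constraints see $\lambda$ only through uniform sums and admit a common optimizer. Feeding $\lambda^{\star}$ back as a dual certificate then yields $\W(\aopt)\le\max_{\ell}\pbdual^{\ell}\,\W(\abr)$ for every game in $\setgmn$, which is the matching bound $\pob(\setgmn;1)\ge\min_{\ell}1/\pbdual^{\ell}$. I expect verifying the existence of this common multiplier---i.e.\ closing the minimax interchange---to be the crux of the argument.
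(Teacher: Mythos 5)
Your proposal follows essentially the same route as the paper's proof: truncate each agent to the three actions $\{\emp_i,\abr_i,\aopt_i\}$, encode a game by nonnegative real multiplicities of resource types indexed by $(\ell,p)$, check that the objective, the normalization $\W(\abr)=1$, and the best-response inequalities are linear in these multiplicities (with the same congestion bookkeeping $\card{\babr}_{<i}+1$ the paper uses), justify the relaxation from integer counts to reals by homogeneity and rounding, and then dualize.

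The one substantive issue is the step you yourself single out as the crux. The coupled dual carries a \emph{single} multiplier vector $\lambda$ serving all $m$ welfare rules, whereas the statement's programs optimize a separate $\lambda$ for each $\ell$; only the direction $\pob(\setgmn;1)\le\min_\ell 1/\pbdual^\ell$ follows immediately (by restricting to single-rule games), and the reverse direction is a genuine minimax interchange. You correctly diagnose this and propose to exhibit a common optimizer $\lambda^\star$, but you do not construct it, and your heuristic (that the binding constraints see $\lambda$ only through uniform sums) is not a proof. For what it is worth, the paper's own write-up of this lemma elides the same step --- it writes the coupled dual and ``substitutes back'' --- and the ingredient that actually closes the gap only appears in Lemma~\ref{lem:tractableLP}, where each per-rule program is shown to attain its optimum at the constant vector $\lambda_i\equiv \mathrm{H}^\ell=\sup_j \w^\ell(j)/j$. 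That yields a genuinely common optimizer precisely when the constants $\mathrm{H}^\ell$ coincide across rules (e.g.\ for the normalized submodular rules used downstream, where $\mathrm{H}^\ell=1$). So your instinct about where the work lies is right, but as written the proposal asserts rather than establishes the existence of $\lambda^\star$; a complete argument must either prove the constant-multiplier property at this stage or qualify the per-rule decomposition accordingly.
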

\begin{proof}
First we show the equivalence of the optimization program proposed in Eq. \eqref{eq:LPinf} and the primal linear program described below. We later show that the dual of the program below is exactly the linear program described in the lemma. Here, we use $\pbdual = \pob(\setgmn; 1)^{-1}$ to denote the efficiency guarantee.
\begin{align}
    &\pbdual = \max_{\{\pbvar\}_{\ell, p \in \pt}} \quad \sum_{\substack{1 \leq \ell \leq m, \\ p \in \pt}} \w^{\ell}(\card{\baopt}) \cdot \pbvar \quad \textrm{subject to:} \label{eq:LPprimalmax} \\
&\sum_{\substack{1 \leq \ell \leq m, \\ p \in \pt}} \w^{\ell}(\card{\babr}) \cdot \pbvar = 1 \label{eq:LPprimaleq} \\
  &\sum_{\substack{1 \leq \ell \leq m, \\ p \in \pt}} \Big[ \big(\babr_{i} - \baopt_{i} \big) \f^{\ell}(\card{\babr}_{< i} + 1) \Big] \cdot \pbvar \geq 0 \ \ \ \forall i \in \p \label{eq:LPprimalineq} \\
  &\pbvar \geq 0 \quad \forall p \in \pt, \ 1 \leq \ell \leq m. \label{eq:LPprimalval} 
\end{align}
Here, each decision variable $\pbvar \in \R_{\geq 0}$ is a real non-negative number. We define a vector label for each resource $r$ as $\lblr(i) = \{a_i \in \ac_i: \text{ if } r \in a_i \}$. This function describes in what actions is the resource selected by each agent $i$, with $\lblr \in \pt$. Furthermore, we denote the specific partition of the resource set with $\rrsub = \{r \in \rr: \lblr = p, \w_r = \w^{\ell}\}$. Now we show that $\W(\aopt)$ in Eq. \eqref{eq:LPinfmax} matches Eq. \eqref{eq:LPprimalmax}.
\begin{align*}
    \W(\aopt) &= \sum_{r \in \rr}{\w_r(|\aopt|_r)} \\
    &= \sum_{\substack{1 \leq \ell \leq m, \\ p \in \pt}} \  \sum_{r \in \rrsub} \w^{\ell}(|\aopt|_r) \\
    &= \sum_{\substack{1 \leq \ell \leq m, \\ p \in \pt}} \w^{\ell}(\card{\baopt}) \cdot \pbvar,
\end{align*}
where $\pbvar = \card{\rrsub} \in \N$. The first equality is from the definition of the welfare function. The second equality results from partitioning the resource set. The third equality occurs by the fact that $|\aopt|_r = \sum_{j \in \p} \ind{\aopt_j}(r) = \card{\baopt}$ if $r \in \rrsub$; additionally, the value $\w^{\ell}(\card{\baopt})$ is constant for any $r \in \rrsub$. A similar argument can be made about the welfare of the best response action $\W(\abr)$, so Eq. \eqref{eq:LPinfequal} matches Eq. \eqref{eq:LPprimaleq} as well. 

Now we show the utility constraint in Eq. \eqref{eq:LPinfbr} matches the constraint in Eq. \eqref{eq:LPprimalineq}. For conciseness, let $a^1 = (\abr_{j < i}, \abr_i, \emp_{j > i})$ and $a^2 = (\abr_{j < i}, \aopt_i, \emp_{j > i})$. The utility difference can be written as

\begin{align*}
    &\U(a^1) - \U(a^2) = \sum_{r \in \abr_i} \f_r(|a^1|_r) - \sum_{r \in \aopt_i} \f_r(|a^2|_r)\\
    &= \sum_{r \in \rr} \Big( \ind{\abr_i}(r) \f_r(|a^1|_r) - \ind{\aopt_i}(r) \f_r(|a^2|_r) \Big) \\
    &= \sum_{\substack{1 \leq \ell \leq m, \\ p \in \pt}} \  \sum_{r \in \rrsub} \Big( \ind{\abr_i}(r) \f_r(|a^1|_r) - \ind{\aopt_i}(r) \f_r(|a^2|_r) \Big) \\
    &= \sum_{\substack{1 \leq \ell \leq m, \\ p \in \pt}} \  \sum_{r \in \rrsub} \Big[ \big(\babr_{i} - \baopt_{i} \big) \f^{\ell}(\card{\babr}_{< i} + 1) \Big] \\
    &= \sum_{\substack{1 \leq \ell \leq m, \\ p \in \pt}} \Big[ \big(\babr_{i} - \baopt_{i} \big) \f^{\ell}(\card{\babr}_{< i} + 1) \Big] \pbvar.
\end{align*}

The first equality is from the definitions of the utility functions. The second and third equalities comes from rewriting the sum using indicator functions and partitioning the resource set along $\pt$. The fourth equality is a result of three facts: that $\ind{\abr_i}(r) = \babr_{i}$; that $\ind{\aopt_i}(r) = \baopt_{i}$; that $|a^1|_r = \sum_{j < i} \ind{\abr_j}(r) + 1 = \card{\babr}_{< i} + 1$ if $r \in \abr_i$ (similarly for $|a^2|_r$). The fifth equality comes from sliding out the relevant terms of the first sum.

The constraint in Eq. \eqref{eq:LPprimalval} ensures a well-defined non-degenerate game parametrization. Observe that in the primal program in Eq. \eqref{eq:LPprimalmax}, we have relaxed $\pbvar \in \N$ to $\pbvar \in \R$, where we have normalized the number of resources in each partition so that $\W(\abr) = 1$. This is done without loss of generality, since we can scale up the optimal arguments $\{\pbvar\}_{\ell, p \in \pt}$ uniformly and round to derive a corresponding valid game construction that achieves an efficiency ratio $\pob(\G; 1)$ that is arbitrarily close to the solution of the primal program.

We now verify that the dual of the program in Eq. \eqref{eq:LPprimalmax} is the one in Eq. \eqref{eq:LPdualconst}. Note that primal program in Eq. \eqref{eq:LPprimalmax} can be concisely written as
\begin{align*}
    &\max_{\eta} \quad c^T \eta \quad \textrm{subject to:} \\
\quad & A \eta = 1 \\
  & \begin{bmatrix} H \\ I_{m \cdot 4^n} \end{bmatrix} \eta \succeq 0,
\end{align*}
where $\eta$ is the vector of $\{\pbvar\}_{\ell, p \in \pt}$, $I_{m \cdot 4^n}$ corresponds to the identity matrix of dimension $m \cdot 4^n \times m \cdot 4^n$, and $c$, $A$, $H$ are the compactly written vectors in equations \eqref{eq:LPprimalmax}, \eqref{eq:LPprimaleq}, and \eqref{eq:LPprimalineq} respectively. Writing the dual linear program gives
\begin{align*}
    &\max_{\lambda \succeq 0, \  \xi \succeq 0, \ \pbdual} \quad -\pbdual \quad \textrm{subject to:} \\
    & A_\ell^T \pbdual - \begin{bmatrix} H_\ell^T,  I_{4^n} \end{bmatrix} \begin{bmatrix} \lambda \\ \xi \end{bmatrix} - c_\ell = \zeros \quad \forall 1 \leq \ell \leq m,
\end{align*}
where $\zeros$ is a vector of zeros, and $c = (c_1^T, \dots, c_m^T)^T$ associated with each $1 \leq \ell \leq m$ (likewise for $A$ and $H$). Observing that $A_\ell^T \pbdual - \begin{bmatrix} H_\ell^T,  I_{4^n} \end{bmatrix} \begin{bmatrix} \lambda \\ \xi \end{bmatrix} - c_\ell = \zeros$ is equivalently written as $A_\ell^T \pbdual - H_\ell^T \lambda - c_\ell = \xi$ and as $A_\ell^T \pbdual + H_\ell^T \lambda + c_\ell \succeq 0$ and substituting back $c_\ell$, $A_\ell$, $H_\ell$ gives the result.
\end{proof}

While we have an exact characterization of the one-round walk efficiency, we cannot use this program directly to derive efficiency bounds tractably. However, by reasoning about the tight constraints in dual program, we can arrive at a more tractable program when the number of agents is not fixed.

\begin{lemma}
\label{lem:tractableLP}
Consider the welfare set $\ww = \{\w^1, \dots, \w^m\}$ with $\w^{\ell}(1) = 1$, and the corresponding utility design $\fw(\w^{\ell}) = \f^{\ell}$ with $\f^{\ell}(1) = 1$ for all $\ell$. The one round walk efficiency is $\pob(\setgm; 1) = \min_{1 \leq \ell \leq m} \frac{1}{\pbdual^{\ell}}$, where $\pbdual^{\ell} \in \R\cup \{\infty \}$ is the solution to
\begin{align}
& \pbdual^{\ell} = \min \quad \pbdual \quad \textrm{subject to:}  \label{eq:dualtractLP} \\
& \pbdual \w^{\ell}(\ay) \geq \rm{H}^{\ell} \left( \sum_{i=1}^\ay \f^\ell(i) - \bz \min_{1 \leq i \leq \ay+1} \f^\ell(i) \right) + \w^\ell(\bz) \nonumber \\
&\text{for all } \bz, \ay \in \N \text{ s.t. } \bz \geq 0 \text{ and } \ay \geq 1 \nonumber,
\end{align}
and $\rm{H}^{\ell} = \sup_{i} \w^{\ell}(i)/i$. 
\end{lemma}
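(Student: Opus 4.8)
The plan is to obtain the tractable program \eqref{eq:dualtractLP} from the fixed-$n$ dual program of Lemma \ref{lem:LPprim} by letting the number of agents grow. Since $\setgm = \bigcup_n \setgmn$, we have $\pob(\setgm;1) = \inf_n \pob(\setgmn;1)$, and since freezing an added agent at $\emp$ embeds each $n$-agent game into an $(n+1)$-agent game of equal efficiency (applied basiswise), the fixed-$n$ values, which I denote $\pbdual^\ell_n$, are nondecreasing in $n$; hence $\pob(\setgm;1) = \min_\ell 1/\sup_n \pbdual^\ell_n$. Fixing a basis index $\ell$ and writing $\w=\w^\ell$, $\f=\f^\ell$, $\rm{H}=\rm{H}^\ell$, it suffices to show $\sup_n \pbdual^\ell_n = \pbdual^\ell$, the value of \eqref{eq:dualtractLP}. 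Two facts drive the reduction. First, for any $p\in\pt$ the best responders contribute, independently of their placement, the telescoping sum $\sum_i \babr_i\,\f(\card{\babr}_{<i}+1)=\sum_{k=1}^{\card{\babr}}\f(k)$, because the $k$-th best responder in index order sees exactly $\card{\babr}_{<i}=k-1$ predecessors. Second, each optimal agent satisfies $\f(\card{\babr}_{<i}+1)\ge \min_{1\le j\le\card{\babr}+1}\f(j)$, so the optimal agents contribute at least $\card{\baopt}\min_{1\le j\le\card{\babr}+1}\f(j)$.

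For the bound $\sup_n \pbdual^\ell_n \le \pbdual^\ell$, I would take any $\pbdual$ feasible for \eqref{eq:dualtractLP} and set $\lambda_i=\rm{H}$ for all $i$ in the fixed-$n$ program. For a configuration with $\card{\babr}\ge 1$, the two facts above bound the right-hand side of \eqref{eq:LPdualconst} above by $\w(\card{\baopt})+\rm{H}\big(\sum_{k=1}^{\card{\babr}}\f(k)-\card{\baopt}\min_{1\le j\le\card{\babr}+1}\f(j)\big)$, which is exactly the right-hand side of \eqref{eq:dualtractLP} at $\ay=\card{\babr}$, $\bz=\card{\baopt}$, so feasibility follows from the assumed tractable constraint. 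For the degenerate configurations with $\card{\babr}=0$ one has $\w(0)=0$ and the constraint collapses to $\rm{H}\,\card{\baopt}\ge \w(\card{\baopt})$, which holds by the definition $\rm{H}=\sup_i \w(i)/i$. Thus $(\{\rm{H}\}_i,\pbdual)$ is feasible for every $n$, giving the inequality.

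The reverse inequality $\sup_n \pbdual^\ell_n\ge \pbdual^\ell$ is the crux, since now $\lambda$ is controlled by the minimizer. Given any feasible $(\lambda,\pbdual)$ on a large agent set, I would, for each target $(\ay,\bz)$, produce a single configuration whose constraint forces the tractable bound. The leverage is that the "all-optimal" configurations ($\babr_i\equiv 0$ with $\baopt_i=1$ on a chosen set) reduce \eqref{eq:LPdualconst} to $\sum_{i:\baopt_i=1}\lambda_i\ge \w(\card{\baopt})$; ranging over all such sets shows the $\bz$ smallest duals sum to at least $\w(\bz)$, and choosing $\bz$ near the maximizer of $\w(\bz)/\bz$ forces all but a bounded (i.e. $n$-independent) number of the $\lambda_i$ to be at least $\rm{H}$. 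I would then take the $\ay$ largest-dual agents as the best-response block (each dual $\ge\rm{H}$, so their contribution is at least $\rm{H}\sum_{k=1}^\ay\f(k)$) and the $\bz$ smallest-dual agents as optimal agents slotted at the $\f$-minimizing position; writing $\lambda_i=\rm{H}+\delta_i$ and using $\f(k)\ge\min_{1\le j\le\ay+1}\f(j)$ together with the sorting of the $\delta_i$, the residual terms are nonnegative and the constraint reduces to $\pbdual\,\w(\ay)\ge\pbdual^\ell\,\w(\ay)$.

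The main obstacle is precisely this last step: dominating a general, non-uniform $\lambda$ rather than the convenient $\lambda_i=\rm{H}$. The adversary can load large dual weight onto individual optimal agents, inflating the subtracted term, and the argument must show this is always outweighed by the correspondingly larger best-response contribution; it must also treat the regime $\ay<\bz$, the case $\min_{1\le j\le\ay+1}\f(j)=0$, and a maximizer of $\w(\bz)/\bz$ attained only in the limit, each of which requires taking $n$ large and, in the last case, a limiting argument so that $\rm{H}$ is approached. Once these cases are dispatched, combining the two inequalities over all $\ell$ yields $\pob(\setgm;1)=\min_{1\le\ell\le m}1/\pbdual^\ell$ with $\pbdual^\ell$ as in \eqref{eq:dualtractLP}.
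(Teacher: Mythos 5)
Your overall strategy is the paper's: start from the fixed-$n$ dual of Lemma \ref{lem:LPprim}, prove the two inequalities separately, and your upper-bound direction (set $\lambda_i=\mathrm{H}^{\ell}$ for all $i$, telescope the best responders' sum to $\sum_{k=1}^{\ay}\f^{\ell}(k)$, and lower-bound the optimal agents' sum by $\bz\min_{1\le i\le \ay+1}\f^{\ell}(i)$, handling $\card{\babr}=0$ via the definition of $\mathrm{H}^{\ell}$) is essentially identical to the paper's. The setup of your lower bound also matches the paper: restrict to the all-optimal configurations, which force $\lambda_i\ge\mathrm{H}^{\ell}$ for all but at most $\bz^*-1$ indices, together with the configurations whose optimal agents are slotted at the $\f$-minimizing position.

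The gap is exactly at the step you flag as the crux, and your sketch for it does not work. You propose to exhibit, for each target $(\ay,\bz)$, a \emph{single} configuration (largest duals as best responders, smallest duals as optimal agents) whose residual after subtracting the uniform-$\mathrm{H}^{\ell}$ contribution is nonnegative. That claim is false: take $\f^{\ell}(i)=1$ for all $i$, $\lambda_i=2\mathrm{H}^{\ell}$ for all $i$, $\ay=1$, $\bz=5$; sorting is vacuous and the residual is $\mathrm{H}^{\ell}-5\mathrm{H}^{\ell}<0$, so that one constraint is strictly \emph{weaker} than the corresponding tractable constraint. The underlying issue is that whenever $\bz\min_{1\le i\le\ay+1}\f^{\ell}(i)>\sum_{i=1}^{\ay}\f^{\ell}(i)$, inflating $\lambda$ uniformly genuinely relaxes that constraint, and no per-configuration domination argument can rule this out; one must show the adversary pays for the inflation on \emph{other} constraints (e.g., those with $\bz=0$). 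This is why the paper's argument is global rather than configuration-by-configuration: it splits into $\lambda$ divergent (where the $\bz=0$ constraints drive $\pbdual$ to $\infty$ unless $\f^{\ell}$ vanishes) versus $\lambda$ bounded, and in the bounded case invokes Bolzano--Weierstrass to extract infinitely many agents whose duals are all within $\varepsilon$ of a common limit $V\ge\mathrm{H}^{\ell}$, builds every needed configuration entirely from those agents, and compares the resulting binding value to the uniform case. Your proposal lists the problematic regimes ($\ay<\bz$, vanishing $\min\f^{\ell}$, the supremum in $\mathrm{H}^{\ell}$ attained only in the limit) but supplies no mechanism to dispatch them, so the inequality $\sup_n\pbdual^{\ell}_n\ge\pbdual^{\ell}$ is not established as written.
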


\begin{proof}
The dual program in Eq. \eqref{eq:LPdualconst} provides a solution for $\pbdual^{\ell}$ for fixed $n$ agents. We first show the solution is upper bounded by $\pbdual^{\ell} \leq \Tilde{\pbdual}^{\ell}$ for any $n$, where $\Tilde{\pbdual}^{\ell}$ is the solution to the program in Eq. \eqref{eq:dualtractLP}. 

Let $n$ be the number of agents. For a given $p \in \pt$, we denote $\ay_p = \card{\babr}$ and $\bz_p = \card{\baopt}$ for ease of notation. Additionally, to convey which indices the resource type $p$ are non-zero in and in what order, we define vectors $\iabr$ for $\abr$ and $\iaopt$ for $\aopt$. Formally, $\iabr: \{1, \dots, \ay_p\} \to \{1, \dots, n\}$ and $\iaopt: \{1, \dots, \bz_p\} \to \{1, \dots, n\}$ with 
\begin{align*}
    \iabr(j) = i \text{ if } \babr_{i} = 1 \text{ and } \card{\babr}_{\leq i} = j, \\
    \iaopt(j) = i \text{ if } \baopt_{i} = 1 \text{ and } \card{\baopt}_{\leq i} = j.
\end{align*}

Considering the dual program in Eq. \eqref{eq:LPdualconst}, we add the constraint that $\lambda_i = \rm{H}^{\ell} =  \max_{1 \leq j \leq n} \w^{\ell}(j)/j$ explicitly. Since we shrink the feasible region, the optimal solution to Eq. \eqref{eq:LPdualconst} potentially increases. We verify that the resulting feasible region is nonempty. Consider the constraints according to $p$ such that $\babr = \zeros$. The corresponding dual constraint takes the form
\begin{equation*}
    0 \geq \w^{\ell}(\bz_p) - \sum_{j = 1}^{\bz_p} \lambda_{\iaopt(j)} \f^\ell(1).
\end{equation*}
Simplifying the expression gives $\sum_{j = 1}^{\bz_p} \lambda_{\iaopt(j)} \geq \w^{\ell}(\bz_p)$, which is always satisfied if $\lambda_i = \rm{H}^{\ell}$ for all $i$. If the constraints according $p$ are such that $\babr \neq \zeros$, then $\pbdual \w^\ell(\ay)$ is present and strictly positive in the inequality \eqref{eq:LPdualconst} and $\pbdual$ can be taken as high as needed to satisfy the constraint. Therefore the feasible region is nonempty.

For any $p \in \pt$ such that $\babr \neq \zeros$, we can simplify the dual constraint in Eq. \eqref{eq:LPdualconst} to 
{\small
\begin{equation*}
    \pbdual \w^{\ell}(\ay_p) \geq \w^{\ell}(\bz_p) + \sum_{i = 1}^{\ay_p} \rm{H}^{\ell} \f^\ell(i) - \sum_{i \in \p} \rm{H}^{\ell} \baopt_{i} \f^{\ell}(\card{\babr}_{<i} + 1).
\end{equation*}
}
Furthermore, for any $p \in \pt$, we observe that $\sum_{i \in \p} \baopt_{i} \f^{\ell}(\card{\babr}_{<i} + 1) \geq \bz_p \min_{1 \leq i \leq \ay_p+1} \f^\ell(i)$. Thus, for any $p \in \pt$, we can replace the corresponding dual constraint with a more binding constraint
\begin{equation*}
    \pbdual \w^{\ell}(\ay) \geq \w^{\ell}(\bz) + \sum_{i = 1}^{\ay} \rm{H}^{\ell} \f^\ell(i) - \sum_{i \in \p} \rm{H}^{\ell} \bz \min_{1 \leq i \leq \ay+1} \f^\ell(i),
\end{equation*}
for some $0 \leq \bz \equiv \bz_p \leq n$ and $1 \leq \ay \equiv \ay_p \leq n$. Therefore, replacing the dual constraints gives an upper bound for $\pbdual^{\ell} \leq \Tilde{\pbdual}^{\ell}$. Limiting the number of agents $n \to \infty$ results in the program in Eq. \eqref{eq:dualtractLP}.

Now we show that the solution is lower bounded by $\pbdual^{\ell} \geq \Tilde{\pbdual}^{\ell}$, where $\Tilde{\pbdual}^{\ell}$ is the solution to the program in Eq. \eqref{eq:dualtractLP}. We show that when we remove dual constraints, we arrive at the program in Eq. \eqref{eq:dualtractLP}. Since the feasible region expands, the optimal solution potentially decreases. Let the set of agents be $\p = \N$ and $j^{p} = \arg \min_{1 \leq j \leq \ay_p+1} \f^\ell(j)$. We remove all the dual constraints barring the constraints that correspond to $p \in \pt$ with either (\textbf{a}) $\ay_p = 0$ and $\bz_p = \bz^* = \arg \max \w^{\ell}(j)/j$ or (\textbf{b}) $\ay_p > 0$ and $\iabr(j^{p} - 1) < \iaopt(1)$ and $\iaopt(\bz_p) < \iabr(j^{p})$. The first property refers to all resource types where $\abr$ is never selected but $\aopt$ is by $\bz^*$ agents. The second property refers to all resource types where the indices of the agents selecting $\aopt$ are between the agents with index $\iabr(j^{p}-1)$ and $\iabr(j^{p})$.

Assume property (a). Then the corresponding dual constraint in Eq. \eqref{eq:LPdualconst} can be written as
\begin{equation*}
    0 \geq \w^{\ell}(\bz^*) - \sum_{j = 1}^{\bz^*} \lambda_{\iaopt(j)} \f^\ell(1),
\end{equation*}
for any resource type $p \in \pt$ that satisfies property (a). Therefore, for any $j \in \N$, except for at most $\bz^*-1$ values, observe that $\lambda_j \geq \rm{H}^{\ell}$ must hold.

Now assume property (b). With respect to a resource type $p \in \pt$ that satisfies property (b), we observe that $\f^{\ell}(\card{\babr}_{<i} + 1) = \f^{\ell}(j^{p})$ for any agent with index $i = \iaopt(j)$ for some $j$. Therefore, under the two previous observations, we can rewrite the relaxed dual program as 

\begin{align}
&\min_{\lambda \succeq \zeros} \quad \pbdual \quad \textrm{subject to:} \label{eq:LPredcont} \\
& \pbdual w^{\ell}(\ay_p) \geq \sum_{j = 1}^{\ay_p} \lambda_{\iabr(j)} \f^{\ell}(j) - \sum_{j = 1}^{\bz_p} \lambda_{\iaopt(j)} \f^{\ell}(j^{p}) + \w^{\ell}(\bz_p) \nonumber \\
&\text{for all } p \in \pt', \nonumber \\ 
&\lambda_i \geq \rm{H}^{\ell} \ \  \text{for all $i \in \N$ but at most $\bz^*-1$ values,} \nonumber
\end{align}
where $\pt' = \{p \in \pt : p \text{ satisfies property (b)} \}$. Assuming that the optimal dual variable is $\lambda_i = \rm{H}^{\ell}$ for all $i \in \N$, observe that we recover the proposed program given in Eq. \eqref{eq:dualtractLP}. To show this claim, we confirm that the binding constraint for $\pbdual$ in Eq. \eqref{eq:LPredcont} is larger when considering a different sequence of lambdas $\lambda \neq \rm{H}^{\ell} \ones$. In other words for a given $\ay \geq 1$ and $\bz \geq 0$, we show that for the resulting dual variables,
\begin{align}
    \lamallbeta &= \max_{p \in \pt'} \Big\{ \frac{1}{\w^{\ell}(\ay_p)} \big( \sum_{j = 1}^{\ay_p} \lambda_{\iabr(j)} \f^{\ell}(j) - \sum_{j = 1}^{\bz_p} \lambda_{\iaopt(j)} \f^{\ell}(j^{p})  \big) \Big\} \nonumber \\
    &\geq \frac{\rm{H}^{\ell}}{\w^{\ell}(\ay)}\left( \sum_{j = 1}^{\ay} \f^{\ell}(j) - \sum_{j = 1}^{\bz} \f^{\ell}(j^{p}) \right) = \lamonebeta \label{eq:lamgeqone}
\end{align}

For any $\lambda \neq \rm{H}^{\ell} \ones$, consider two cases where either $\lambda$ is a divergent sequence, or it is bounded above. In the first case, since $\lambda$ must satisfy $\lambda_j \geq 0$ for all $j \in \N$, the limit $\lim_{j \to \infty} \lambda_j = \infty$. If $\f^{\ell}(j) = 0$ for all $j$, note that $\lamonebeta = 0$ for any $\ay \geq 1$ and $\bz \geq 0$. Since $\lamallbeta$ must also be greater than $0$, the inequality in Eq. \eqref{eq:lamgeqone} holds in this case. If $\f^{\ell}(J) > 0$ for some $J \in \N$, consider a constraint with $p$ such that $\ay_p > J$ and $\bz_p = 0$. For any $M > 0$, we can choose $\iabr$, such that $\lambda_{\iabr(j)} > M$ for all $1 \leq j \leq \ay_p$. Thus $\lamallbeta \geq  \frac{1}{\w(\ay_p)} \sum_{j = 1}^{\ay_p} M \f(j)$. Since $M$ is arbitrary, $\lamallbeta = \infty \geq \lamonebeta$ for any $\ay \geq 1$ and $\bz \geq 0$ as well.

In the second case, since $\lambda$ is also bounded below by $\rm{H}^{\ell}$, for all but a finite set of values, there exists a convergent sub-sequence $\lamss$ that converges to a value $V \geq \rm{H}^{\ell}$ by the Bolzano-Weierstrauss theorem. Let $\fmaxj = \max_{1 \leq j \leq \ay+1} \f^{\ell}(i)$, $\myz = \max(\ay, \bz)$, and $\varepsilon > 0$. Since $\lamss$ converges, there exists a $J \in \N$ such that for any $j \geq J$, $\card{\lamss(j) - V} \leq \frac{\varepsilon}{2 \fmaxj \myz}$. 

For a given $\ay$ and $\bz$, consider any constraint with $p \in \pt'$ such that $\ay_p = \ay$ and $\bz_p = \bz$. Additionally, $\iabr$ and $\iaopt$ can be chosen to ensure that $\card{\lambda_{\iabr(j)} - V} \leq \frac{\varepsilon}{2 \fmaxj \myz}$ and$\card{\lambda_{\iaopt(j)} - V} \leq \frac{\varepsilon}{2 \fmaxj \myz}$ for all $j$. Therefore
\begin{align*}
    \lamallbeta  &\geq \frac{1}{\w^{\ell}(\ay_p)} \big( \sum_{j = 1}^{\ay_p} \lambda_{\iabr(j)}\f^{\ell}(i) - \sum_{j = 1}^{\bz_p} \lambda_{\iaopt(j)} \f^{\ell}(j^{p}) \big) \\
    &\geq \frac{V}{\w^{\ell}(\ay)} \big( \sum_{j = 1}^{\ay} \f^{\ell}(i) - \sum_{j = 1}^{\bz} \f^{\ell}(j^{p}) \big) - \frac{\varepsilon}{2} - \frac{\varepsilon}{2} \\
    &\geq \lamonebeta - \varepsilon.
\end{align*}
Since $\varepsilon$ is arbitrary, we have that $\lamallbeta \geq \lamonebeta$ for any $\ay$ and $\bz$ and we show the claim. Therefore the proposed program is an upper bound and we have shown the equality $\pbdual^{\ell} = \Tilde{\pbdual}^{\ell}$.
\end{proof}

\subsection{Proof of Theorem \ref{thm:optLP}}
\label{subsec:proofoptLP}

Given a set of welfare rules and utility rules, Lemma \ref{lem:tractableLP} provides an exact characterization of the one-round walk efficiency through a linear program. We modify the linear program in Eq. \eqref{eq:dualtractLP} to compute the utility rules that optimize the one-round walk efficiency. If a given welfare rule $\w$ is submodular, note that $\sup_i \w(i)/i = 1$ and so $\rm{H}^{\ell} = 1$. Furthermore, if the utility rule $\f$ is assumed to be non-increasing, then $\min_{1 \leq i \leq \ay+1} \f(i) = \f(\ay+1)$. Additionally, $\w(1) - 1 \cdot \f(\ay+1) \geq \w(0) - 0 \cdot \f(\ay+1) = 0$ for any $\ay \geq 1$, so $\bz = 0$ is a nonbinding constraint. We lastly note that the values $\{\f(i)\}_{i \in \p}$ can be established as decision variables for the program in Eq. \eqref{eq:dualtractLP} to produce the linear program in Eq. \eqref{eq:submodLPoptimal}, rewritten below.

\begin{align}
    &(\pbdual^*, \f^*) \in \arg \min_{\pbdual, \{\f(i)\}_{i \in \p}} \quad \pbdual \quad \textrm{subject to:}  \label{eq:submodLPoptimal2} \\
& \pbdual \w(\ay) \geq \sum_{i=1}^\ay \f(i) - \bz \f(\ay+1) + \w(\bz) \quad \forall \ay, \bz \geq 1 \nonumber \\
 &\f(1) = 1, \nonumber
\end{align}
where $\pbdual^*$ is a tight characterization of the efficiency guarantee only if the resulting optimal utility rule $\f^*$ is non-increasing and a lower bound if not. We now verify that the optimal utility rule $\f^*$ is indeed non-increasing. First, rearranging the terms in the constraint in Eq. \eqref{eq:submodLPoptimal2} gives that for any $\ay \geq 1$,
\begin{equation}
    \label{eq:supcondopt1}
    \f^*(\ay+1) \geq \sup_{\bz \geq 1} \Big( \frac{1}{\bz} \big( \sum_{i=1}^{\ay} \f^*(i) + \w(\bz) - \pbdual^* \w(\ay) \big) \Big).
\end{equation}

We verify $\f^*(\ay+1)$ is well-defined. Note that since $\f^*$ is optimal, the efficiency bound $\pbdual^* < \infty$ is nontrivial (as $\fmc$ guarantees an efficiency guarantee greater than $1/2$ according to Subsection \ref{sub:MCC}). Then, by recursion and the fact that $\frac{\w(\bz)}{\bz} \leq 1$ for all $\bz$, there exists a solution for $\f^*(\ay+1)$ such that Eq. \eqref{eq:supcondopt1} holds with equality and the resulting value is finite for all $\ay \geq 1$. Additionally $\f^*(\ay)$ must be non-negative for all $\ay \geq 1$, since limiting $\bz \to \infty$ in Eq. \eqref{eq:supcondopt1} gives that $\f(\ay+1) \geq 0$.

Now we show that the solution $\f^*$ is non-increasing. Suppose for contradiction that for some $\ay \geq 1$, that $\f^*(\ay) < \f^*(\ay+1)$. Let $\bz_{\ay+1} \in \arg \max_{\bz \geq 1} \w(\bz) - \bz \f(\ay+1)$ be the number that achieves the maximum.

We verify that $\bz_{\ay+1}$ is well-defined. Suppose for contradiction that $\w(\bz) - \bz \f^*(\ay+1)$ is always increasing in $\bz$, so $\bz_{\ay+1}$ is not well defined. Since $\pbdual^* < \infty$, the limit $\lim_{\bz \to \infty}\w(\bz) - \bz \f^*(\ay+1)$ must converge and therefore $\f^*(\ay+1)$ must be equal to $Q = \lim_{\bz \to \infty} \Delta \w(\bz)$, where we denote $\Delta \w (\bz) = \w(\bz) - \w(\bz - 1)$ for conciseness. From the original contradiction assumption then $\f^*(y) < \f^*(\ay+1) = Q$. Then taking the constraint in Eq. \eqref{eq:submodLPoptimal2}, with $\ay-1$ and $\bz \to \infty$ gives $\pbdual \w(\ay-1) \geq \lim_{\bz \to \infty}\w(\bz) - \bz \f^*(\ay) \geq \infty$, which is a contradiction.

Now, substituting $\bz_{\ay+1}$ into Eq. \eqref{eq:supcondopt1} for $\ay$ and $\ay+1$ produces the following expressions
\begin{align*}
    \f^*(\ay + 1) &=  \frac{1}{\bz_{\ay+1}} \big( \sum_{i=1}^{\ay} \f^*(i) + \w(\bz_{\ay+1}) - \pbdual^* \w(\ay) \big) \\
     \f^*(\ay) &\geq \frac{1}{\bz_{\ay+1}} \big( \sum_{i=1}^{\ay-1} \f^*(i) + \w(\bz_{\ay+1}) - \pbdual^* \w(\ay-1) \big).
\end{align*}
Inputting these expressions into the assumption $\f^*(\ay) < \f^*(\ay+1)$ reduces to the inequality $\f(\ay) > \pbdual^* \Delta \w (\ay)$. Similarly, for some $j \geq 1$, substituting $\bz_{\ay+j}$ into Eq. \eqref{eq:submodLPoptimal2} for $\ay+j$ and $\ay+j+1$ gives
{\small
\begin{align*}
    &\f^*(\ay +j+1) \geq  \frac{1}{\bz_{\ay+j}} \big( \sum_{i=1}^{\ay+j} \f^*(i) + \w(\bz_{\ay+j}) - \pbdual^* \w(\ay+j) \big) \\
     &\f^*(\ay+j) = \frac{1}{\bz_{\ay+j}} \big( \sum_{i=1}^{\ay+j-1} \f^*(i) + \w(\bz_{\ay+j}) - \pbdual^* \w(\ay+j-1) \big).
\end{align*}
}%
\noindent Thus by substituting the second expression into first, the following inequality holds
\begin{equation}
\label{eq:recnoninc}
    \f^*(\ay + j + 1) \geq \f^*(\ay+j) + \frac{\f^*(\ay+j) - \pbdual^* \Delta \w (\ay+j)}{\bz_{\ay+j}}.
\end{equation}
We show, by induction, that the following expression holds for any $j \geq 1$, 
{\small
\begin{equation}
    \label{eq:induc}
    \frac{\f^*(\ay+j) - \pbdual^* \Delta \w (\ay+j)}{\bz_{\ay+j}} \geq \frac{\f^*(\ay+1) - \pbdual^* \Delta \w (\ay+1)}{\bz_{\ay+1}} > 0.
\end{equation}
}
\noindent The base case holds for $j=1$, since
\begin{equation*}
    \f^*(\ay+1) - \pbdual^* \Delta \w (\ay+1) > \f^*(\ay) - \pbdual^* \Delta \w (\ay) > 0.
\end{equation*}
This comes from the assumption that $\f^*(\ay+1) > \f^*(\ay)$, $\Delta \w (\ay+1) \leq \Delta \w (\ay)$ by submodularity of $\w$, and that $\f^*(\ay) - \pbdual^* \Delta \w (\ay) > 0$ from the previous argument. For the inductive case for $J \geq 2$, assume that the inequality holds for all $j < J$. Then, by applying the induction assumption to Eq. \eqref{eq:recnoninc} and subsequently to the definition of $\bz_{\ay+J}$, we have that 
\begin{align*}
    \f^*(\ay+J) > \f^*(\ay+J-1) &> \dots > \f^*(\ay+1) \\
    \bz_{\ay+J} \leq \bz_{\ay+J-1} &\leq \dots \leq \bz_{\ay+1}.
\end{align*}
Therefore the statement in Eq. \eqref{eq:induc} holds due to the aforementioned inequalities and the fact that $\Delta \w (\ay+J) \leq \Delta \w (\ay+1)$ due to submodularity of $\w$. Therefore Eq. \eqref{eq:induc} holds and we have that $\f^*(\ay + j + 1) \geq \f^*(\ay+j) + D$, where $D = \frac{\f^*(\ay+1) - \pbdual^* \Delta \w (\ay+1)}{\bz_{\ay+1}} > 0$. Following this, $\f^*(\ay + j) \geq \f^*(\ay+1) + D(j-1)$. 

Now consider the constraint in Eq. \eqref{eq:submodLPoptimal2} where $\ay \to \infty$ and $\bz = 0$. Since $\w(\ay) \leq \ay$,
\begin{equation}
    \pbdual^* \geq \lim_{\ay \to \infty} \frac{1}{\ay} \sum_{i=1}^{\ay} \f^*(i) \geq \infty,
\end{equation}
where the last inequality results from the fact that $\f^*(\ay) \sim \ay$ is of linear order by the previous argument. Since $\pbdual^*$ must be finite, contradiction ensues and the solution $\f^*$ must be non-increasing and the efficiency guarantees are tight for each linear program.

Thus, so far, we have shown the statement in Eq. \eqref{eq:effoptratio} with regards to the welfare set $\ww = \{\w^1, \dots, \w^m\}$. We lastly show that the results extend linearly to a span of welfare rules as claimed in Eq. \eqref{eq:optlinear}. Note that for the welfare set $\ww_{\rm{span}}$ spanned from $\{\w^1, \dots, \w^m\}$, the resulting optimal guarantees $\pob^*(\ww; 1) \geq \pob^*(\ww_{\rm{span}}; 1)$, since $\ww_{\rm{span}}$ is a larger set of welfare rules. We show that the utility design as in Eq. \eqref{eq:optlinear} achieves $\pob^*(\ww_{\rm{span}}; 1) = \pob^*(\ww; 1)$ and therefore is optimal with respect to $\ww_{\rm{span}}$. Consider $\w = \sum_{\ell=1}^m \alpha^\ell \w^\ell$ for any non-negative $\{\alpha^{\ell}\}_{1 \leq \ell \leq m}$. Let the corresponding utility design be $\fw_{\rm{lin}}(\w) = \sum_{\ell=1}^m \alpha^\ell \f^\ell$, where $\f^{\ell}$ is the corresponding solution to Eq. \eqref{eq:submodLPoptimal} for $\w^{\ell}$ and $\pbdual^* = \min_{1 \leq \ell \leq m} \frac{1}{\pbdual^\ell} = \pob^*(\ww; 1)$. From the characterization program in Eq. \eqref{eq:dualtractLP} with respect to $\ww_{\rm{span}}$ and $\fw_{\rm{lin}}$, the dual constraint for any $\ay \geq 1$, $\bz \geq 0$, $1 \leq \ell \leq m$, and $\{\alpha^{\ell}\}_{1 \leq \ell \leq m}$ can be rewritten as
\begin{equation*}
    \sum_{\ell=1}^m \alpha^\ell  \cdot \Big[\pbdual^* \w^{\ell}(y) - \sum_{j=1}^{\ay} \f^{\ell}(j) + \bz \f^{\ell}(\ay + 1) - \w^{\ell}(\bz) \Big] \geq 0.
\end{equation*}
This constraint will always be satisfied for any non-negative $\{\alpha^{\ell}\}_{1 \leq \ell \leq m}$, as the inner terms is non-negative by definition of $\pbdual^*$ and $\f^{\ell}$. Therefore, we have that under the linear utility design $\fw_{\rm{lin}}$, we have that $\pob(\mathcal{G}_{\ww_{\rm{span}}, \fw_{\rm{lin}}}) \geq \pob^*(\ww; 1)$ and is optimal.

\subsection{Proof of Theorem \ref{thm:oneroundC}}

Given a curvature $\cc$, let $\ww$ be the set of welfare rules that have curvature of at most $\cc$. From \cite[Lemma 2]{chandan2021tractable}, we know there exists a basis set of welfare rules, such that for any $\w \in \ww$, we can come up with a decomposition $w = \sum_{b \in \N} \alpha^{b} \wb$, with $\alpha^b = (2 \w(b) - \w(b-1) - \w(b+1))/\cc$ and 
\begin{equation}
\label{eq:bcccov}
    \wb(j) =
\begin{cases}
j, &\text{ if } 0 \leq j \leq b \\
b + (1-\cc) \cdot (j-b) &\text{ if } j > b.
\end{cases}
\end{equation}

We refer to these welfare rules as \emph{$b$-covering} welfare rules. We note that for any $b \in \N$, the welfare rule $\wb$ has a curvature of $\cc$. For each welfare rule $\wb$, we claim that the corresponding optimal utility rule from running the program in Eq. \eqref{eq:submodLPoptimal} is
\begin{equation}
\label{eq:optformbcc}
\fb(j) = \begin{cases}
(1 - \pbdual^b) (\frac{b+1}{b})^{j-1} + \pbdual^b & \text{ if } j \leq b + 1 \\
(1-\cc) \pbdual^b & \text{ if } j \geq b + 1,
\end{cases}
\end{equation}
where and $\pbdual^b = \frac{(\frac{b+1}{b})^b}{(\frac{b+1}{b})^b-\cc}$ is the resulting optimal efficiency. Taking the minimum across $b$, we have that $\min_{b \in \N} \frac{1}{\pbdual^b} = 1 - \cc/2$ for $b = 1$. Therefore, using Theorem \ref{thm:optLP}, the optimal efficiency guarantee is $\pob^*(\ww; 1) = 1 - \cc/2$.

Now we verify that $\fb$ and $\pbdual^{b}$ are indeed the optimal solutions. We first remove all constraints in Eq. \eqref{eq:submodLPoptimal} apart from the ones that satisfy $\bz = b$ for any $\ay \geq 1$. This results in a lower bound for $\pbdual^{b}$ that we claim later to be tight. 

Rearranging the terms in the constraint in Eq. \eqref{eq:submodLPoptimal2} gives that for any $\ay \geq 1$, the optimal solution satisfies
\begin{equation}
    \label{eq:supcondopt}
    \f^*(\ay+1) = \sup_{\bz \geq 1} \Big( \frac{1}{\bz} \big( \sum_{i=1}^{\ay} \f^*(i) + \w(\bz) - \pbdual^* \w(\ay) \big) \Big).
\end{equation}
Substituting in for $\w$ and the binding constraint $\bz = b$, the recursive equation for $\fb$ is then
\begin{align*}
    \fb(1) &= 1 \\
    \fb(j+1) &= \frac{1}{b} \sum_{i=1}^{j} \fb(i) + 1 - \frac{1}{b} \pbdual^* \wb(j),
\end{align*}
\noindent for some optimal $\pbdual^* \geq 1$. To solve for the closed form expression for $\fb$, a corresponding linear, time-invariant, discrete time system is constructed as follows.
\begin{align*}
    x_1(t+1) &= x_1(t) + x_2(t) \\
    x_2(t+1) &= \frac{1}{b}(x_1(t) + x_2(t)) + s(t) \\
    s(t) &= 1 - \frac{1}{b} \pbdual^* \wb(t).
\end{align*}
For the initial condition $(x_1(1), x_2(1)) = (0, 1)$, the corresponding solution $x_2(t) \equiv \fb(j)$. Then using the state transition matrix, we can solve for the explicit solution for $x_2(t)$ as
\begin{align}
    x_2(1) =& 1 \label{eq:recurvoptbc}\\
    x_2(t) =& \frac{1}{b} B^{t-2} + \sum_{\tau = 1}^{t-2}\frac{1}{b}B^{t-2-\tau} (1 - \pbdual^*\wb(\tau)) \nonumber \\
    & + (1 -  \pbdual^* \wb(t-1)) \quad  t > 1, \nonumber
\end{align}
where $B = \frac{b+1}{b}$. Simplifying the expression for $x_2(t)$ for $t-1 > b$ and substituting $\wb(t) = (1-\cc)t + \cc \min(t, b)$ results in the following
\begin{align*}
    x_2(t) =& \frac{1}{b} B^{t-2} \Big( 1 + \sum_{\tau = 1}^{b}B^{-\tau} (1 - \pbdual^*\tau) \\
    &+ \sum_{\tau = b+1}^{t-2} B^{-\tau} (1 - \pbdual^*((1 - \cc)\tau + \cc b)) \Big) \\
    &+ (1 - \pbdual^*(t - 1 - \cc(t-1) + \cc b)). 
\end{align*}
Now we can use the series identities $\sum_{j=1}^{d} p^j = \frac{p - p^{d-1}}{1 - p}$ and $\sum_{j=1}^{d} j p^j = \frac{p - (d+1) p^{d+1} + d p^{d+2}}{(1 - p)^2}$ and simplify the terms to 
\begin{align*}
    x_2(t) = B^{t-2}(\pbdual^* (\cc B^{1-b}- B) + B) + (1-\cc) \pbdual^*.
\end{align*}
Thus, the above expression is the closed form solution for $\fb$ when $j-1 > b$. We have already shown that the optimal utility rule $\fb$ must be non-increasing in the proof of Theorem \ref{thm:optLP}. This is only possible when $\pbdual^* \geq \frac{B^b}{B^b-\cc}$. Therefore the optimal solution must be $\pbdual^* = \pbdual^b = \frac{B^b}{B^b-\cc}$. Substituting for $\pbdual^*$ in the expression in Eq. \eqref{eq:recurvoptbc} and simplifying results in the closed form expression in Eq. \eqref{eq:optformbcc} for $\fb$. It can be seen that $\fb$ defined in Eq. \eqref{eq:optformbcc} is indeed non-increasing. We lastly verify that the binding constraint for $\fb$ is indeed when $\bz=b$ for any $\ay \geq 1$ and so $\pbdual^b$ is tight. In Eq. \eqref{eq:submodLPoptimal}, we examine the terms $\wb(\bz) - \bz \fb(\ay+1)$ for any $\ay \geq 1$. Note that $1 = \wb(\bz) - \wb(\bz-1) \geq \fb(\ay+1)$ when $\bz \leq b$ and $(1-\cc) = \wb(\bz) - \wb(\bz-1) \leq \fb(\ay+1)$ when $\bz \geq b$ for any $\ay$. Thus the maximum $\max_{\bz} \wb(\bz) - \bz \fb(\ay+1)$ occurs when $\bz = b$, and we have shown the claim.

\subsection{Proof of Proposition \ref{prop:effMCC}}
\label{sub:MCC}

Given a curvature $\cc$, let $\ww$ be the set of welfare rules that have curvature of at most $\cc$. While the common interest utility, with $\U(a) = \W(a)$ for all $a$ and $i$, is not directly implementable as a utility design, by removing the nonstrategic portion of the utility as $\U(a_i, a_{-i}) = \W(a_i, a_{-i}) - \W(\emp_i, a_{-i})$, we can arrive at the \emph{marginal contribution} (MC) utility. This utility is strategically equivalent to the common interest utility and results in an equivalent efficiency guarantee. However, the MC utility can be written as the utility design $\fw(w) = \fmc$ with $\fmc(j) = \w(j) - \w(j-1)$. Now we can use the linear program in Lemma \ref{lem:tractableLP} to characterize the efficiency guarantee of the MC utility, giving the efficiency guarantee of the common interest utility as well.

Consider any $\w \in \ww$. Since $\w$ is submodular, the utility rule $\fmc$ must be decreasing and $\sup_{j \in \N} \w(j)/j = 1$. Therefore the corresponding dual constraints in Eq. \eqref{eq:dualtractLP} can be rewritten as
\begin{equation}
\label{eq:dualMCcons}
    \pbdual \w(\ay) \geq \sum_{i=1}^\ay \fmc(i) - \bz \fmc(\ay+1) + \w(\bz),
\end{equation}
for any $\ay \geq 1$ and $\bz \geq 0$. We claim the binding constraint is when $\bz = \ay$. Fixing $\ay$, the only terms that depend on $\bz$ is $-\bz \fmc(\ay + 1) + \w(\bz)$. Examining the difference between terms from $\bz+1$ against $\bz$ gives
\begin{align*}
    &\w(\bz+1) - (\bz+1) \fmc(\ay+1) - \w(\bz) + \bz \fmc(\ay+1) \\
    &= \w(\bz+1) - \w(\bz) - \fmc(\ay+1) \\
    &= \fmc(\bz+1) - \fmc(\ay+1).
\end{align*}
Since $\fmc$ is non-increasing, note that $\fmc(\bz+1) - \fmc(\ay+1)$ is greater than $0$ if $\bz \leq \ay$ and less than $0$ if $\bz \geq \ay$. Therefore the tightest constraint is when $\bz = \ay$. Now we simplify the solution for $\pbdual^*$ in Eq. \eqref{eq:dualtractLP} under the assumption that $\bz = \ay$ as
\begin{align*}
    \pbdual^* &= \max_{\ay \geq 1} \Big\{ \frac{1}{\w(\ay)} \big( \sum_{j = 1}^{\ay} \fmc(j) - \ay \fmc(\ay+1) + \w(\ay) \big) \Big\} \\
    &= \max_{\ay \geq 1} \Big\{2 - \frac{\ay}{\w(\ay)} \fmc(\ay+1) \Big\},
\end{align*}
in which we have used the identity $\sum_{j = 1}^{\ay} \fmc(j) = \sum_{j = 1}^{\ay} \w(j) - \w(j-1) = \w(\ay)$. Since $\w$ is submodular, $\frac{j}{\w(j)} \geq 1$ for any $j \in \N$, and because $\w$ has at most curvature of $\cc$, $\fmc(j) \geq 1 - \cc$ for any $j \in \N$ as well. Therefore, the solution is upper bounded by $\pbdual^* \leq 1 + \cc$ and since $\w$ was chosen arbitrarily from $\ww$, the resulting efficiency guarantee is $\pob(\mathcal{G}_{\ww, \rm{CI}}; 1) = 1/\pbdual^* \geq (1 + \cc)^{-1}$. This efficiency guarantee is actually tight if we consider the $b$-covering welfare rule $\wb$ with curvature $\cc$, as in Eq. \eqref{eq:bcccov}. Observe that under the $b$-covering welfare, the maximum is $\max_{\ay \geq 1} \frac{\ay}{\wb(\ay)} \fmc(\ay+1) = 1 - \cc$ at $\ay = b$. Therefore, $\pob(\mathcal{G}_{\ww, \rm{CI}}; 1) = (1 + \cc)^{-1}$ with equality and we show the claim.

\subsection{Proof of Theorem \ref{thm:kroundC}}

In this section, we first provide upper bounds on the efficiency metric $\pob^*(\ww; \k)$. To do this, we construct a game $\G$ such that for any utility design $\fw$, rounds $\k \geq 1$, and curvature $\cc$, we have that $\pob(\setgm; \k) \leq  \pob(\G; \k) \leq 1 - \cc/2$. Let $\cc$ be the curvature and consider the $b$-covering rule $\wb$ with $b = 1$ as in Eq. \eqref{eq:bcccov} with $\wb(2) = 2 - \cc$. Additionally, let $\f = \fw(\wb)$ be the corresponding utility rule for a given utility design. A two-agent game $\G$ is constructed as follows. Let the resource set be $\rr = \rr_1 \cup \rr_2 \cup \rr_3$, where $\rr_j$ is a set of resources such that the ratio of resources satisfies $\card{\rr_1} = \card{\rr_2} = \f(2) \cdot \card{\rr_3}$. If $\f(2)$ is not a whole number, we can scale up $\card{\rr_j}$ uniformly and round $\f(2) \cdot \card{\rr_3}$ to get arbitrarily close to the given ratio. Let $x = \card{\rr_1}$. The action sets for the game construction the agents will be determined by $\f$ according to the following three cases: (\textbf{a}) $0 \leq \f(2) \leq (1-\cc)$, (\textbf{b}) $(1-\cc) \leq \f(2) \leq 1$, and (\textbf{c}) $\f(2) \geq 1$.

For case (a), Agent $1$'s actions are $\ac_1 = \{\emp_1, a_1^1 = \rr_1, a_1^2=\rr_2\}$. Agent $2$'s actions are $\ac_2 = \{\emp_2, a_2^1 = \rr_3, a_2^2 = \rr_1\}$. The optimal allocation is $\aopt = \{a_1^2, a_2^2\}$ resulting in a welfare of $2x$. An allocation that can occur after a one round walk is $\abr = \{a_1^1, a_2^1\}$ resulting in a welfare of $(1 + \f(2))x$. Therefore, $\pob(\G; 1) \leq \frac{(1 + \f(2))x}{2x} \leq 1 - \frac{\cc}{2}$ by assumption of $\f \leq 1-\cc$. Additionally, observe that $\abr$ is a Nash equilibrium and therefore is still the resulting allocation after any number of additional rounds $\k \geq 1$. Therefore $\pob(\setgm; \k) \leq \pob(\G; \k) \leq 1 - \frac{\cc}{2}$ for this case of utility design.

For case (b), Agent $1$'s actions are $\ac_1 = \{\emp_1, a_1^1 = \rr_1, a_1^2=\rr_2\}$. Agent $2$'s actions are $\ac_2 = \{\emp_2, a_2^1 = \rr_3, a_2^2 = \rr_1\}$. The optimal allocation is $\aopt = \{a_1^2, a_2^2\}$ resulting in a welfare of $2x$. An allocation that can occur after a one-round walk is $\abr = \{a_1^1, a_2^2\}$ resulting in a welfare of $\wb(2) \cdot x$. Therefore, $\pob(\G; 1) \leq \frac{\wb(2) \cdot x}{2 x} = 1 - \frac{\cc}{2}$. For $\k \geq 2$, there is a best response path that leads to the end state $\abr$. This is achieved by reaching $a' = \{a_1^1, a_2^1\}$ in the first round. As $a'$ is a Nash action, the best response process can remains at $a'$ for $\k-1$ rounds and in the last round, switch to $\abr$. Therefore $\pob(\setgm; \k) \leq \pob(\G; \k) \leq 1 - \frac{\cc}{2}$ for this case.

For case (c), Agent $1$'s actions are $\ac_1 = \{\emp_1, a_1^1 = \rr_1, a_1^2 = \rr_2 \}$. Agent $2$'s actions are $\ac_2 = \{\emp_2, a_2^1 = \rr_1, a_2^2 = \rr_3 \}$. The optimal allocation is $\aopt = \{a_1^2, a_2^2 \}$ resulting in a welfare of $(1 + \f(2))x$. An allocation that can occur after a one round walk is $\abr = \{a_1^1, a_2^1 \}$ resulting in a welfare of $\wb(2) \cdot x$. Therefore, $\pob(\G; 1) = \frac{\wb(2) \cdot x}{(1 + \f(2)) x} \leq 1 - \frac{\cc}{2}$ by assumption of $\f(2) > 1$. Additionally, observe that $\abr$ is a Nash equilibrium and therefore is still the resulting allocation after any number of additional rounds. Therefore $\pob(\setgm; \k) \leq \pob(\G; \k) \leq 1 - \frac{\cc}{2}$ for this case.

Since $\f = \fw(\wb)$ was chosen arbitrarily, we have that the upper bound holds for any utility design and we have shown that $\pob^*(\ww; \k) \leq 1 - \cc/2$. Furthermore, based on our game construction, the efficiency bounds hold even when we relax the class of best response dynamics that we consider. Since the game construction comprises of only two agents, allowing agents to best respond multiple times during a round or best respond out of order of round-robin does not improve the efficiency guarantees that result from the given game $\G$.

Now we show that the upper bound $\pob(\mathcal{G}_{\ww, \rm{CI}}; \k) \leq (1 + \cc)^{-1}$ as stated in Eq. \eqref{eq:klessCCI}. As before, a game $\G$ is constructed such that under the common interest design $\rm{CI}$, $\k \geq 1$, and curvature $\cc$, we have that $\pob(\mathcal{G}_{\ww, \rm{CI}}; \k) \leq  \pob(\G; \k) \leq (1 + \cc)^{-1}$. Let $\G$ have $n$ players with a resource set $\rr = \rr^{\rm{opt}} \cup \rr^{\rm{both}} \cup \{r^n\}$ with $\card{\rr^{\rm{opt}}} = n$ and $\card{\rr^{\rm{both}}} = n -1$. Each agent $i$ has three actions in its action set $\ac_i = \{\emp_i, \abr_i, \aopt_i\}$. The resources are selected by the agents in the following manner: each resource $r_j^{\rm{opt}} \in \rr^{\rm{opt}}$ is selected by agent $j$ in action $\aopt_j \ni r_j^{\rm{opt}}$ for all $1 \leq j \leq n$; each resource $r_j^{\rm{both}} \in \rr^{\rm{both}}$ is selected by agent $j+1$ in action $\aopt_{j+1} \ni r_j^{\rm{both}}$ and by agent $j$ in action $\abr_{j} \ni r_j^{\rm{both}}$ for all $1 \leq j \leq n-1$; agent $n$ selects the resource in action $\abr_n \in r^n$. Given a curvature $\cc$, consider two $b$-covering welfare rules $\wb, \wb_2 \in \ww$ with curvature $\cc$ such that $\wb(1) = 1$ and $\wb(2) = 1 - \cc$, and $\wb_2(1) = \cc$ and $\wb_2(2) = \cc(1 - \cc)$. For any $r \in \rr^{\rm{both}} \cup \{r^n\}$, let the corresponding welfare rule be $\w_r = \wb$ and for any $r \in \rr^{\rm{opt}}$, let the corresponding welfare rule be $\w_r = \wb_2$. Under this game construction it can be seen that under $\abr$, each resource $r \in \rr^{\rm{both}} \cup \{r^n\}$ is selected by exactly one agent, resulting in a welfare of $\W(\abr) = n$; also, under $\aopt$, each resource $r \in \rr^{\rm{both}} \cup \rr^{\rm{opt}}$ is selected by exactly one agent, resulting in a welfare of $\W(\abr) = (n-1)(1+\cc) + \cc$. Assuming that $\abr$ is the joint action that results after $\k$ rounds, we have that $\pob(\G, \k) \leq \frac{n}{(n-1)(1+\cc) + \cc}$. Limiting the number of agents $n \to \infty$ to infinity gives the result. To verify that $\abr$ can result after $\k$ rounds, observe that for agent $1$ selecting $\abr_1$ over $\aopt_1$ results in a higher system welfare. After that, agents $2$ through $n-1$ are indifferent between $\abr_j$ and $\aopt_j$ given that the previous $i < j$ players have selected $\abr_i$. Therefore, $\abr$ is the resulting allocation after one round. Additionally, $\abr$ can be seen to be Nash equilibrium for the common interest utility, so after any number of rounds $\k$, $\abr$ is still a possible joint allocation that can result.

\subsection{Proof of Theorem \ref{thm:submodtrade}}

We show the trade-offs in Theorem \ref{thm:submodtrade} that result from considering utility designs that maximize the one-round walk efficiency versus the price of anarchy. The fact that $\poa(\mathcal{G}_{\ww, \fw_{\poa}}) = 1 - \frac{1}{e}$ comes from \cite[Theorem 1]{chandan2021tractable} and $\pob(\mathcal{G}_{\ww, \fw_1^*}; 1) = \frac{1}{2}$ comes from setting $\cc = 1$ in Theorem \ref{thm:oneroundC}. We show that $\pob(\mathcal{G}_{\ww, \fw_{\poa}}; 1) = 0$ in Lemma \ref{lem:bCpobbad}. From Lemma \ref{lem:poblesspoa}, we have that $\poa(\mathcal{G}_{\ww, \fw_1^*}) \geq \pob(\mathcal{G}_{\ww, \fw_1^*}; 1) = \frac{1}{2}$, since $\fw_1^*$ must be a non-increasing utility design, as shown in Section \ref{subsec:proofoptLP}. To show that this lower bound is tight, consider the set covering welfare $\wsc$. As seen in Theorem \ref{thm:poapobtradeoff}, the price of anarchy guarantee is $\frac{1}{2}$, and so $\poa(\mathcal{G}_{\ww, \fw_1^*}) \leq \poa(\mathcal{G}_{\wsc, \fw_1^*(\wsc)}) = \frac{1}{2}$ as well. Now we outline Lemma \ref{lem:bCpobbad} and Lemma \ref{lem:poblesspoa} below.

\begin{lemma}
\label{lem:bCpobbad}
Suppose that $\ww$ is the set of all possible submodular welfare rules and consider the utility design $\fw_{\poa}$. Then $\pob(\mathcal{G}_{\ww, \fw_{\poa}}; 1) = 0$.
\end{lemma}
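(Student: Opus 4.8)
The plan is to find a single submodular welfare rule on which the price-of-anarchy-optimal design $\fw_{\poa}$ already forces the one-round efficiency to vanish, and then appeal to the fact that $\pob(\mathcal{G}_{\ww, \fw_{\poa}}; 1)$ is an infimum over a class that contains the games generated by that rule. The natural choice is the set-covering rule $\wsc$ of Eq.~\eqref{eq:wscdef}, which is submodular (so $\wsc \in \ww$) and sits at the extreme curvature $\cc = 1$ where the transient-optimal and asymptotic-optimal designs diverge most sharply. Writing $\f := \fw_{\poa}(\wsc)$, I would first observe that $\f$ is a price-of-anarchy-optimal utility for set covering: since $\fw_{\poa}$ attains the value $\poa = 1 - 1/e$ across the whole submodular class, and the best achievable price of anarchy on set-covering games is itself $1 - 1/e$, the restriction of $\fw_{\poa}$ to $\wsc$ must attain exactly $1 - 1/e$.

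The key structural fact I would then extract is that the partial sums of $\f$ diverge, i.e.\ $\sum_{i=1}^{\ay} \f(i) \to \infty$ as $\ay \to \infty$. This follows from the closed form of the optimal set-covering utility, $\f(j) = \tfrac{(j-1)!}{e-1} \sum_{k \geq j} \tfrac{1}{k!}$: its tail estimate gives $\f(j) = \Theta(1/j)$, so the series is harmonic and diverges even though $\f$ is itself non-increasing. Equivalently, this is precisely the non-summability of the summand appearing in Eq.~\eqref{eq:tradeoffsetcov} of Theorem~\ref{thm:poapobtradeoff} at the endpoint $Q = 1 - 1/e$, the consistency check that the endpoints of the trade-off curve match Eq.~\eqref{eq:tradfrontone}.

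With divergence established, I would invoke the exact one-round characterization of Lemma~\ref{lem:tractableLP} on the single-rule class $\{\wsc\}$. Because $\wsc$ is submodular with $\wsc(1) = 1$, one has $\rm{H} = \sup_i \wsc(i)/i = 1$ and $\wsc(\ay) = 1$ for all $\ay \geq 1$. Specializing the dual constraint to $\bz = 0$ makes both $\wsc(0) = 0$ and the subtracted term $\bz \min_{1 \le i \le \ay+1} \f(i)$ vanish, leaving $\pbdual \geq \sum_{i=1}^{\ay} \f(i)$ for every $\ay \geq 1$. Letting $\ay \to \infty$ then forces $\pbdual = \infty$, so $\pob(\mathcal{G}_{\{\wsc\}, \fw_{\poa}}; 1) = 1/\pbdual = 0$. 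Since $\wsc \in \ww$ implies $\mathcal{G}_{\{\wsc\}, \fw_{\poa}} \subseteq \mathcal{G}_{\ww, \fw_{\poa}}$, monotonicity of the infimum gives $0 \le \pob(\mathcal{G}_{\ww, \fw_{\poa}}; 1) \le \pob(\mathcal{G}_{\{\wsc\}, \fw_{\poa}}; 1) = 0$.

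I expect the main obstacle to be the divergence step rather than the LP manipulation. The LP reduction is immediate once the $\bz = 0$ constraints are isolated, but to run it I must identify the actual utility $\fw_{\poa}(\wsc)$ --- or at least show that every design attaining $\poa = 1 - 1/e$ on set covering has $\Theta(1/j)$ (equivalently, non-summable) values --- rather than relying on a conveniently chosen representative. Pinning down this slow, harmonic decay of the asymptotic-optimal utility is exactly the feature that separates it from the transient-optimal design, whose values must instead be summable to keep $\pbdual$ finite.
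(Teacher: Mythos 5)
Your proposal is correct and isolates the right mechanism, but it reaches the conclusion by a different route from the paper's proof. The paper argues primally: it constructs an explicit $n$-agent game in which every resource carries a curvature-one $b$-covering rule, the resource blocks $\rr_1,\dots,\rr_{n+1}$ are sized proportionally to $\fa(j)=\fw_{\poa}(\wb)(j)$, and all agents stack on $\rr_1$ under the one-round walk while their optimal actions cover disjoint blocks; the efficiency is then $b/\sum_{i\le n}\fa(i)$, and the proof is finished by solving the recursion for $\fa$ from \cite{chandan2021tractable} in closed form and showing $\fa(t)=\mathrm{O}(1/t)$, so the denominator diverges as $n\to\infty$. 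You instead specialize to $\wsc$ (the $b=1$ case) and push the same divergence through the dual LP of Lemma~\ref{lem:tractableLP}: the $\bz=0$ constraints give $\pbdual\ge\sum_{i=1}^{\ay}\f(i)$ for all $\ay$, so non-summability of $\f=\fw_{\poa}(\wsc)$ forces $\pbdual=\infty$, and monotonicity of the infimum over $\mathcal{G}_{\{\wsc\},\fw_{\poa}}\subseteq\mathcal{G}_{\ww,\fw_{\poa}}$ finishes. These are two presentations of the same worst case (the $\bz=0$ dual constraint is precisely the shadow of the stacking game), and both hinge on the identical key fact: the harmonic, non-summable decay of the asymptotically optimal utility. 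Your version is shorter and reuses machinery the paper already has, since the closed form $\f(j)=\tfrac{(j-1)!}{e-1}\sum_{k\ge j}1/k!$ is Theorem~\ref{thm:poapobtradeoff}'s boundary utility at $Q=1-1/e$; the paper's version works for every $b$ directly from the recursion and never needs to identify which utility $\fw_{\poa}$ actually assigns to $\wsc$. The loose end you flag can be closed without that identification: since set-covering games lie inside $\mathcal{G}_{\ww,\fw_{\poa}}$ and $1-1/e$ is the best achievable price of anarchy for set covering, the restriction $\f=\fw_{\poa}(\wsc)$ (normalized so $\f(1)=1$) must attain it, hence by Eq.~\eqref{eq:poasetgen} it satisfies $j\f(j)-\f(j+1)\le\tfrac{1}{e-1}$ for all $j$; together with $\f\ge 0$, an induction gives $\f(j)\ge\fxx(j)\ge\tfrac{1}{(e-1)j}$ for the recursion of Lemma~\ref{lem:implicitfboundar} with $\ipoa=\tfrac{1}{e-1}$, so every optimal representative is non-summable and your LP step goes through.
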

\begin{proof}
We construct a game $\G \in \mathcal{G}_{\ww, \fw_{\poa}}$ to upper bound the one-round walk efficiency such that $\pob(\G; 1) = 0$. Since $\pob(\mathcal{G}_{\ww, \fw_{\poa}}; 1)$ is defined to be greater than $0$, we have equality. Consider a game with $n$ players as follows. We partition the resource set as $\rr = \bigcup_{1 \leq j \leq n+1}\rr_j$. Every resource $r \in \rr$ is endowed the local welfare rule $\w_r = \wb$ as the $b$-covering welfare rule with curvature of $\cc = 1$ for some fixed $b \geq 1$, as defined in Eq. \eqref{eq:bcccov}. The corresponding utility rule is $\fa = \fw_{\poa}(\wb)$ is the following recursive expression from \cite[Lemma 1]{chandan2021tractable},
\begin{align*}
    \fa(1) &= 1 \\
    \fa(j+1) &= \frac{1}{b}[j \fa(j) - \rho^b \min \{j, b\}] + 1,
\end{align*}
with $\rho^b = (1 - \frac{b^b e^{-b}}{b!})^{-1}$. The number of resources in each set is $\card{\rr_1} = v$ and $\card{\rr_{j+1}} \sim v \cdot \fa(j)$ for $1 \leq j \leq n$ and for some $v \geq 0$. If $\fa(j)$ is not a whole number, we can scale $v$ up and round to get arbitrarily close to the correct ratio of resources. Agent $i$ selects $\rr_1 = \abr_i$ and $\rr_{i+1} = \aopt_i$ in each of its actions. It can be verified that $\abr$ is a joint action that can result after a one round walk. Therefore, the efficiency is upper bounded by
\begin{equation*}
\pob(\G; 1) \leq \frac{\W(\abr)}{\W(\aopt)} = \frac{vb}{v \sum_{1 \leq i \leq n}\fa(i)}.
\end{equation*}

Now we show that as we increase $n$, the series $\sum_{1 \leq i \leq n}\fa(i)$ diverges, and the efficiency can get arbitrarily bad as the number of agents increase.
To construct the closed form expression of $\fa(j)$, we construct the following LTV state space system with $\fa(j) := x(t)$
\begin{align*}
    x(t+1) &= A(t) x(t) + s(t) \quad \quad  A(t) = \frac{t}{b} \\
    s(t) &= 1 - \frac{\rho^b}{b} \min(t, b)
\end{align*}
Solving for the solution $x(t)$ using the state transition matrix with the initial condition $x(1) = 1$ results in the following expression
\begin{align*}
    x(t) &= \prod_{\tau = 1}^{t}\frac{\tau}{b} + \sum_{T=1}^{t-1} \big[ \big(1 - \frac{\rho^b}{b} \min(t, b) \big) \prod_{\tau = T+1}^{t-1} \frac{\tau}{b} \big] \\
    &= \frac{t!}{b^t} \bigg(1 + \sum_{T=1}^{t} \frac{b^T}{T!}\big(1 - \frac{\rho^b}{b} \min(t, b) \big) \bigg)
\end{align*}

If $t \geq b$, then
\begin{align*}
    x(t) &= \frac{t!}{b^t} \bigg(1 - (e^b - 1)(\rho - 1) + \sum_{T=t+1}^{\infty} \frac{b^T}{T!}\big(\rho^b - 1 \big) + \\
    &\sum_{T=1}^{b} \frac{b^T}{T!}\frac{\rho^b (b-T)}{b}\bigg) \\
    &=  \frac{t!}{b^t} \sum_{T=t+1}^{\infty} \frac{b^T}{T!}\big(\rho^b - 1 \big) \\
    &\geq (\rho^b - 1) \frac{b}{t + 1} \\
    &\sim \mathrm{O}(\frac{1}{t})
\end{align*}
The first equality results from splitting the summation and the second equality will be shown later. Since $x(t)$ is on the order of $\frac{1}{t}$, the series $\sum_{i=1}^{N} \fa(i)$ diverges and the claim is shown. Now we verify the equality
\begin{align*}
    \sum_{T=1}^{b} \frac{b^T}{T!}\frac{\rho^b (b-T)}{b} &= (e^b - 1)(\rho^b - 1) - 1 \\ 
    \sum_{T=1}^{b} \frac{b^T(b-T)}{b T!} &= \frac{1}{\rho^b} \big(e^b \rho^b - e^b - \rho^b \big) \\
    \sum_{T=1}^{b}\frac{b^{T}}{T!} - \sum_{T=1}^{b} \frac{b^{T-1}}{(T-1)!} &= \big(e^b - 1 - e^b (1 - \frac{b^b e^{-b}}{b!}) \big) \\
    \frac{b^b}{b!} - 1 &=  \frac{b^b}{b!} - 1
\end{align*}
The last equality results from recognizing the terms on the left hand side as a telescoping sum.
\end{proof}

\begin{lemma}
\label{lem:poblesspoa}
Let $\ww = \{\w^1, \dots, \w^m\}$ be a set of welfare rules and $\fw$ be a utility design such that $\f^{\ell} = \fw(\w^\ell)$ is non-increasing for any $1 \leq \ell \leq m$. Then $\pob(\setgm; \k) \leq \poa(\setgm)$ for any $\k \geq 1$.
\end{lemma}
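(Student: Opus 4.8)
The plan is to bound the transient efficiency by the asymptotic one through two ingredients: a comparison of the governing linear programs, which handles $\k=1$ and isolates exactly where non-increasingness is used, and the fixed-point property of Nash equilibria, which lifts the bound to all $\k$.

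First I would treat $\k=1$ by comparing Lemma \ref{lem:tractableLP} with the analogous primal-dual characterization of the price of anarchy for resource allocation games (cf. \cite{roughgarden2009intrinsic, paccagnan2018utility}). Since each $\f^{\ell}=\fw(\w^{\ell})$ is non-increasing, we have $\min_{1\leq i\leq \ay+1}\f^{\ell}(i)=\f^{\ell}(\ay+1)$, so the constraint of Eq. \eqref{eq:dualtractLP} reads $\pbdual\,\w^{\ell}(\ay)\geq \rm{H}^{\ell}\big(\sum_{i=1}^{\ay}\f^{\ell}(i)-\bz\,\f^{\ell}(\ay+1)\big)+\w^{\ell}(\bz)$. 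The key observation is that the corresponding Nash (smoothness) constraint is identical except that the \emph{sequential} sum $\sum_{i=1}^{\ay}\f^{\ell}(i)$, produced by agents best responding one after another at occupancies $1,\dots,\ay$, is replaced by the \emph{simultaneous} term $\ay\,\f^{\ell}(\ay)$, in which all $\ay$ agents sharing a resource see the full occupancy $\ay$. Because $\f^{\ell}$ is non-increasing, $\sum_{i=1}^{\ay}\f^{\ell}(i)\geq \ay\,\f^{\ell}(\ay)$, so the one-round constraint dominates the Nash constraint term-by-term over the common positive denominator $\w^{\ell}(\ay)$. Hence the optimal dual value for the walk satisfies $\pbdual^{\ell}\geq \rho^{\ell}$ (the PoA dual value), and minimizing $1/(\cdot)$ over $\ell$ yields $\pob(\setgm;1)\leq\poa(\setgm)$.

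To extend to arbitrary $\k$, I would exploit that a Nash equilibrium is a fixed point of the best response map: once a trajectory reaches the worst-case equilibrium $\ne$, no agent moves thereafter, so $\ne$ is a valid endpoint of a $\k$-round walk for every $\k$. Thus it suffices, for each $\G\in\setgm$ realizing (up to $\varepsilon$) the price of anarchy with worst equilibrium $\ne$ and optimum $\aopt$, to build $\G'\in\setgm$ in which some $\k$-round walk terminates at $\ne$ while $\max_a\W(a)\geq\W(\aopt)$; this gives $\pob(\G';\k)\leq\W(\ne)/\W(\aopt)=\poa(\G)$, and taking the infimum over $\G$ and $\varepsilon\to 0$ yields the claim. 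I would truncate each action set to $\{\emp_i,\ne_i,\aopt_i\}$, which preserves $\ne$ as an equilibrium and the value $\W(\aopt)$, and then arrange the resource occupancies so that, at the moment agent $i$ moves, every resource in $\ne_i$ sits at its full equilibrium count and every resource in $\aopt_i$ at its equilibrium deviation count. Non-increasingness of $\f$ is exactly what guarantees that such occupancies can only lower the relative appeal of the deviation $\aopt_i$, so that agent $i$ faces precisely the equilibrium comparison $\U(\ne_i,\ne_{-i})\geq\U(\aopt_i,\ne_{-i})$ and selects $\ne_i$.

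The main obstacle is this last reachability construction: producing a single round-robin trajectory that simultaneously places every agent on its equilibrium action while holding the terminal welfare at $\W(\ne)$ and the optimum at $\W(\aopt)$. The natural route is a ``load-then-settle'' schedule, driving the first round into a saturated, optimum-like state so that in the following round each agent, finding the resources of $\aopt_i$ saturated (hence low utility) but those of $\ne_i$ comparatively fresh, best responds to $\ne_i$; this requires the same careful multi-resource-set bookkeeping used in the proofs of Theorem \ref{thm:kroundC} and Lemma \ref{lem:bCpobbad}. I expect verifying the consistency of this schedule across all agents, rather than the LP comparison, to be the crux of the argument.
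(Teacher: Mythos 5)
Your overall strategy for general $\k$ coincides with the paper's: exhibit a game in $\setgm$ whose worst-case Nash equilibrium $\ne$ (with efficiency arbitrarily close to $\poa(\setgm)$) is reachable by a one-round walk from $\emp$, and then use the fixed-point property of $\ne$ to cover every $\k \geq 1$ (which also subsumes $\k=1$, making your separate LP comparison unnecessary). But the proposal stops exactly at the load-bearing step. The reachability target you describe --- ``at the moment agent $i$ moves, every resource in $\ne_i$ sits at its full equilibrium count'' --- is unattainable in a round-robin walk starting from $a(0)=\emp$: agent $1$ sees every resource empty, and early movers necessarily see partial occupancies. The paper does not start from an arbitrary worst-case game and rearrange it; it extracts the worst-case structure from the linear-programming characterization of the price of anarchy (the optimal variables $\paop$ indexed by triples $(\ay,\xx,\bz)$ in Eq.~\eqref{eq:poaLP}), builds a fresh game with $N_2 \to \infty$ agents and cyclically staggered resource blocks $\rraxb$ so that when agent $i$ best responds the blocks of $\ne_i$ carry occupancies $1,2,\dots,\ay+\xx$, and then closes the gap with exactly the inequality you cite, $\sum_{j=1}^{\ay+\xx}\f(j) \geq (\ay+\xx)\f(\ay+\xx)$, valid because $\f$ is non-increasing; boundary agents and the resulting $\mathrm{O}(1/N_2)$ welfare corrections are absorbed by taking $\varepsilon \to 0$. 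Your ``load-then-settle'' variant would additionally require $\k \geq 2$, so it cannot stand alone.

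The $\k=1$ LP-domination argument also has a gap as stated. The price-of-anarchy program is indexed by triples $(\ay,\xx,\bz)$ with a free multiplier $\lambda$, whereas the tractable one-round program in Eq.~\eqref{eq:dualtractLP} is indexed by pairs $(\ay,\bz)$ with the multipliers pinned to $\rm{H}^{\ell}$; your term-by-term comparison $\sum_{i=1}^{\ay}\f^{\ell}(i) \geq \ay\,\f^{\ell}(\ay)$ only matches the $\xx=0$ slice of the Nash constraints, so feasibility of the one-round optimum for the full Nash dual (hence $\pbdual^{\ell} \geq \rho^{\ell}$) does not follow. You have correctly isolated the role of non-increasingness, but the paper deploys that inequality inside an explicit game construction rather than at the level of the abstract programs, and it is that construction --- which you defer --- that actually proves the lemma.
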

\begin{proof}
We show this claim by a game construction, where a Nash equilibrium with the efficiency arbitrarily close to $\poa(\setgm)$ is reachable by a one-round walk. Let $\varepsilon_1 > 0$. Note that $\poa(\setgmn)$ is non-increasing in $n$ and lower bounded by $0$. Therefore $\poa(\setgmn)$ is a convergent sequence in $n$ and for any $\varepsilon_1$, there exists an $N_1 \in \N$ such that $\poa(\mathcal{G}_{\ww, \fw}^{N_1}) - \poa(\setgm) \leq \varepsilon_1$.

Generalizing \cite[Theorem 2]{paccagnan2018utility} to a set of welfare rules provides a characterization of the price of anarchy as $\poa(\mathcal{G}_{\ww, \fw}^{N_1}) = \min_{1 \leq \ell \leq m} \frac{1}{Q^{\ell}}$ with
\begin{align}
    &Q^{\ell} = \max_{\paval} \sum_{\ay, \xx, \bz} \w^{\ell}(\bz + \xx) \paval \label{eq:poaLP} \\
    \text{ s.t. } &\sum_{\ay, \xx, \bz} [\ay \f^{\ell}(\ay + \xx) - \bz \f^{\ell}(\ay+\xx+1)] \paval \geq 0 \nonumber \\
    &\sum_{\ay, \xx, \bz} \w^{\ell}(\ay + \xx) \paval = 1 \nonumber \\
    &\paval \geq 0, \nonumber
\end{align}
where $\ay, \xx, \bz, \in \N$ with $1 \leq \ay + \xx+ \bz \leq N_1$. For the $\ell^* = \arg \min_{1 \leq \ell \leq m} \frac{1}{Q^{\ell}}$ that achieves the minimum, we refer to $\w \equiv \w^{\ell^*}$, $\f \equiv \f^{\ell^*}$ for ease of notation and refer to $\paop$ to denote the corresponding optimal variables for $\paval$ of the linear program. We construct a matching game $\G$ as follows. Let $N_2 > N_1$ be the number of agents in the game and $D = N_2 + \ay + \xx - 1$. For each $\ay$, $\xx$, $\bz$ pair and $1 \leq k \leq D$, we construct a set of resources $\rraxb$ with $\card{\rraxb} = \paop / D$. Each agent $i$ has three actions in its action set $\ac_i = \{\emp_i, \ne_i, \aopt_i\}$. Each agent $i$ selects $\{\rraxb\}_{i \leq k \leq \ay + \xx + i - 1}$ in $\ne_i$ for each pair $\ay, \bz, \xx$. If $\ay + \bz + \xx \leq i \leq N_2$, agent $i$ selects $\{\rraxb\}_{i - \bz \leq k \leq \xx + i - 1}$ in $\aopt_i$ for each pair $\ay, \bz, \xx$. Otherwise for $1 \leq i \leq \ay + \bz + \xx - 1$, $\aopt_i = \emp_i$ and agent $i$ doesn't select any resources in $\aopt_i$.

We first confirm that the action $\ne$ is indeed a Nash equilibrium. Showing this for the first $\ay + \xx + \bz - 1$ agents is trivial, since no resources are selected in $\aopt_i$. For the rest of the agents, the utility difference of a unilateral deviation to $\aopt_i$ from $\ne_i$ is 
\begin{align*}
    &\U(\ne) - \U(\aopt_i, \ne_{-i}) \\
    &\geq \sum_{r \in \ne_i} \f_r(|\ne|_r) - \sum_{r \in \aopt_i} \f_r(|(\aopt_i, \ne_{-i})|_r) \\
    &\geq \sum_{\ay, \xx, \bz} [(\ay+\xx) \f(|\ne|_r) - \\
    & \quad \ \xx \f_r(\ay + \xx) - \bz \f(\ay + \xx + 1)] \cdot \card{\rraxb} \\
    &\geq \frac{1}{D} \sum_{\ay, \xx, \bz} [\ay \f(\ay + \xx) - \bz \f(\ay + \xx + 1)] \paop \\
    &\geq 0.
\end{align*}

The first inequality comes from the definitions of the utility function. The second inequality comes from counting the resources that are selected in the either $\ne_i$ or $\aopt_i$ by the agent in each set of resources in $\rraxb$. The third inequality arises from the fact that $|\ne|_r \leq \ay+\xx$, and since $\f$ is assumed to be non-increasing, $\f(|\ne|_r) \geq \f(\ay + \xx)$. The fourth inequality comes from the fact that since $\paop$ has to satisfy the inequality constraint in Eq. \eqref{eq:poaLP} to be feasible. Similarly, in a one-round walk, the best response for the first $\ay + \xx + \bz - 1$ is $\ne_i$. The best response for the other agents during the one-round walk is also $\ne_i$, since
\begin{align*}
    &\U(\ne_{j < i}, \ne_i, \emp_{j > i}) - \U(\ne_{j < i}, \aopt_i, \emp_{j > i}) \\
    &= \sum_{\ay, \xx, \bz} [\sum_{j=1}^{\ay+\xx} \f(i) - \xx \f(\ay + \xx) - \bz \f(\ay + \xx + 1)] \card{\rraxb} \\
    &\geq \frac{1}{D} \sum_{\ay, \xx, \bz} [\ay \f(\ay + \xx) - \bz \f(\ay + \xx + 1)] \paop \\
    &\geq 0.
\end{align*}

Therefore, the Nash equilibrium $\ne$ is reached from an empty configuration in one-round. Additionally, since $\ne$ is a Nash equilibrium, the resulting action state after $\k$ rounds can also be $\ne$. Therefore in this game, $\pob(\G; \k) \leq \poa(\G)$. Now we calculate the efficiency of the Nash equilibrium $\W(\ne)$ with respect to $\W(\aopt)$. We have that
\begin{align*}
    \W(\ne) = \sum_{\ay, \xx, \bz} \w(\ay + \xx) \cdot \paop \frac{N_2 - 2(\ay + \xx -1)}{N_2} + \\
    2 \sum_{i=1}^{\ay + \xx -1} \w(i) \frac{\paop}{N_2} = 1 + \mathrm{O}(\frac{1}{N_2}),
\end{align*}
where, since $\paop$ is feasible, then it satisfies the equality constraint that $\sum_{\ay, \xx, \bz} \w(\ay + \xx) \paop = 1$. $\mathrm{O}(\frac{1}{N_2})$ reflects that the rest of the terms are on order of $1/N_2$. Similarly, 
\begin{align*}
    \W(\aopt) = \sum_{\ay, \xx, \bz} \w(\bz + \xx) \cdot \paop \frac{N_2 - 3 (\bz + \xx -1)}{N_2} + \\
     2 \sum_{i=1}^{\bz + \xx -1} \w(i) \frac{\paop}{N_2} = \poa(\mathcal{G}_{\ww, \fw}^{N_1})^{-1} + \mathrm{O}(\frac{1}{N_2}),
\end{align*}
where, since $\paop$ is optimal, then $\sum_{\ay, \xx, \bz} \w(\bz + \xx) \paop = \poa(\mathcal{G}_{\ww, \fw}^{N_1})^{-1}$. For any $\varepsilon_2$, we can choose $N_2$, such that $\mathrm{O}(\frac{1}{N_2}) \leq \varepsilon_2$, so $\poa(\G) \leq \poa(\mathcal{G}_{\ww, \fw}^{N_1})^{-1} + \varepsilon_2$. To put everything together, we have that 
\begin{align*}
    &\pob(\setgm; \k) \leq \pob(\G; \k) \leq \poa(\G) \\
    &\leq \poa(\mathcal{G}_{\ww, \fw}^{N_1})^{-1} + \varepsilon_2 \leq \poa(\setgm) + \varepsilon_1 + \varepsilon_2,
\end{align*}
and since $\varepsilon_1$ and $\varepsilon_2$ are arbitrary, we have the result for any rounds $\k \geq 1$.
\end{proof}

\subsection{Proof of Theorem \ref{thm:poapobtradeoff}}

To characterize the Pareto optimal frontier in Eq. \eqref{eq:tradeoffsetcov}, we first simplify the linear program in Eq. \eqref{eq:dualtractLP} with respect to the set covering welfare rule.

\begin{lemma}
\label{lem:setpob}
Let $\ww = \{\wsc\}$, where $\wsc$ is the set covering welfare rule defined in \eqref{eq:wscdef}, and $\fw = \{\f\}$ be the corresponding utility rule. Then the one-round walk efficiency guarantee is
\begin{equation}
    \label{eq:pobwscf}
    \pob(\mathcal{G}_{\wsc, \f}; 1)^{-1} = \sum_{i \in \N}{\f(i)} - \min_{i \in \N}{\f(i)} + 1.
\end{equation}
\end{lemma}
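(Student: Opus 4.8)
The plan is to specialize the tractable linear program of Lemma~\ref{lem:tractableLP} to the single welfare rule $\wsc$ and simply read off the optimal $\pbdual$, which coincides with $\pob(\mathcal{G}_{\wsc, \f}; 1)^{-1}$ because there is only one basis rule ($m=1$). The first step is to substitute $\wsc$ into the constraint of Eq.~\eqref{eq:dualtractLP}. Since $\sup_i \wsc(i)/i = 1$ (attained at $i=1$) we have $\rm{H} = 1$, and since $\wsc(\ay) = 1$ for every $\ay \geq 1$, the constraint collapses to
\begin{equation*}
\pbdual \geq \sum_{i=1}^{\ay} \f(i) - \bz \min_{1 \leq i \leq \ay+1}\f(i) + \wsc(\bz), \qquad \forall\, \ay \geq 1,\ \bz \geq 0.
\end{equation*}
Minimizing $\pbdual$ subject to this family of constraints is exactly taking the supremum of the right-hand side over all admissible $(\ay, \bz)$.

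The second step is to eliminate $\bz$. Writing $\mu(\ay) := \min_{1 \leq i \leq \ay+1}\f(i)$, the $\bz$-dependent part is $-\bz\,\mu(\ay) + \wsc(\bz)$, which equals $0$ at $\bz = 0$ and $1 - \bz\,\mu(\ay)$ for $\bz \geq 1$. Because $\mu(\ay) \geq 0$, this is largest at $\bz = 1$ among $\bz \geq 1$, and because the normalization $\f(1)=1$ forces $\mu(\ay) \leq 1$, the value $1 - \mu(\ay) \geq 0$ dominates the $\bz = 0$ case. Hence the optimal $\bz$ is always $1$, and
\begin{equation*}
\pbdual = \sup_{\ay \geq 1} S(\ay), \qquad S(\ay) := \sum_{i=1}^{\ay}\f(i) - \mu(\ay) + 1.
\end{equation*}

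The final step is to show the supremum is attained in the limit $\ay \to \infty$ by proving $S$ is non-decreasing. I would compute $S(\ay+1) - S(\ay) = \f(\ay+1) + \big(\mu(\ay) - \mu(\ay+1)\big)$ and note both summands are non-negative: $\f(\ay+1) \geq 0$ by non-negativity of the utility rule, and $\mu(\ay+1) \leq \mu(\ay)$ since $\mu(\ay+1)$ minimizes over a strictly larger index set. Monotonicity then yields $\sup_{\ay} S(\ay) = \lim_{\ay \to \infty} S(\ay) = \sum_{i \in \N}\f(i) - \min_{i \in \N}\f(i) + 1$, which is precisely Eq.~\eqref{eq:pobwscf}.

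The only delicate point I anticipate is the reduction over $\bz$, and specifically the inequality $\mu(\ay) \leq 1$ that makes $\bz = 1$ (rather than $\bz = 0$) the maximizer; this rests entirely on the normalization $\f(1) = 1$. The monotonicity argument for $S$ is then routine. I would also remark that the identity is to be read in the extended-reals sense: when $\sum_{i}\f(i) = \infty$ the right-hand side is $+\infty$ and correspondingly $\pob(\mathcal{G}_{\wsc, \f}; 1) = 0$.
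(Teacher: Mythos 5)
Your proposal is correct and follows essentially the same route as the paper: specialize the dual program of Lemma~\ref{lem:tractableLP} to $\wsc$ (so $\rm{H}=1$), argue that $\bz=1$ is the binding choice because $\f(1)=1$ forces $1-\min_{1\leq i\leq \ay+1}\f(i)\geq 0$, and send $\ay\to\infty$. Your explicit monotonicity check of $S(\ay)$ merely fills in the detail the paper compresses into ``the binding constraint occurs when we limit $\ay\to\infty$.''
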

\begin{proof}
Examine the dual program in Eq. \eqref{eq:dualtractLP} with substituting the set covering welfare defined in Eq. \eqref{eq:wscdef}. Under the substitution, the dual constraint for a given $\bz$, $\ay$ simplifies to 
\begin{equation*}
    \pbdual \geq \sum_{i=1}^{\ay} \f(i) - \bz \min_{1 \leq i \leq \ay+1} \f(i) + \min(1, \bz).
\end{equation*}
We have applied the fact $\wsc(j) = \min(1, j) = 1$ when $j \geq 1$ and $\rm{H} = \max _{j \in \N} \wsc(j)/j = 1$ to the dual constraint. Observe that the binding constraint occurs when we limit $\ay \to \infty$ and set $\bz = 1$ (and not $\bz=0$ since $\f(1) = 1$, the term $1 - \min_j \f(j) \geq 0$). Under those binding constraints, $\pob(\mathcal{G}_{\wsc, \f}; 1)^{-1} = \pbdual$, where $\pbdual$ matches the given expression in Eq. \eqref{eq:pobwscf}.
\end{proof}

\begin{figure}[ht]
    \centering
    \includegraphics[width=250pt]{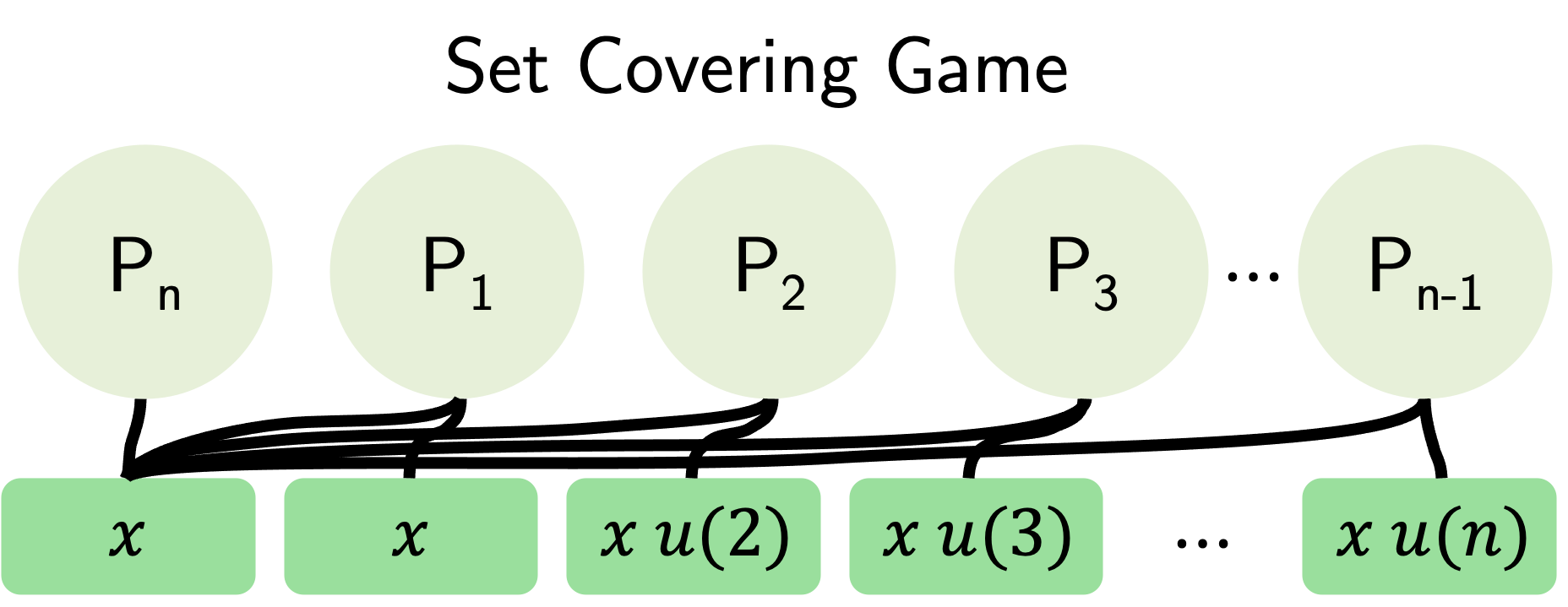}
    \caption{The worst case game construction achieving the one-round walk guarantee dictated by Lemma \ref{lem:setpob}. In this game, all the agents can either stack on the first resource set or spread out.}
    \label{fig:setcovgame}
\end{figure}

To characterize the trade-off, we now provide an explicit expression of Pareto optimal utility rules, i.e., the utility rules $\f$ that satisfy either $\pob(\mathcal{G}_{\wsc, \f}; 1) \geq \pob(\mathcal{G}_{\wsc, \f'}; 1)$ or $\poa(\mathcal{G}_{\wsc, \f}) \geq \poa(\mathcal{G}_{\wsc, \f'})$ for all $\f'\neq \f$.

\begin{lemma} \label{lem:implicitfboundar}
For a given $\ipoa \geq 0$, a utility rule $
\fxx$ that satisfies $\poa(\mathcal{G}_{\wsc, \f}) \geq 1/(1+\ipoa)$ while maximizing $\pob(\mathcal{G}_{\wsc, \f}; 1)$ is defined as in the following recursive formula:
\begin{align}
    \fxx(1) &= 1     \label{eq:recurf} \\
    \fxx(j+1) &= \max\{j \fxx(j) - \ipoa, 0\}. \nonumber
\end{align}
\end{lemma}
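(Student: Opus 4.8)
The plan is to reduce the constrained design problem to a pointwise minimization over the utility sequence and then identify the recursion \eqref{eq:recurf} as the pointwise-minimal feasible rule. By Lemma \ref{lem:setpob}, maximizing $\pob(\mathcal{G}_{\wsc, \f}; 1)$ over utility rules with $\f(1) = 1$ and $\f \geq 0$ is equivalent to minimizing the functional $J(\f) := \sum_{i \in \N} \f(i) - \min_{i \in \N} \f(i) + 1$, subject to the price-of-anarchy constraint $\poa(\mathcal{G}_{\wsc, \f}) \geq 1/(1 + \ipoa)$. So the three steps I would carry out are: first, rewrite the $\poa$ constraint as an explicit family of inequalities on the values $\f(j)$; second, show every feasible $\f$ dominates $\fxx$ pointwise; third, conclude that $\fxx$ minimizes $J$.

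First I would make the $\poa$ constraint explicit. Specializing the price-of-anarchy program \eqref{eq:poaLP} to the single welfare rule $\wsc$ and passing to its dual (exactly as in the derivation of Lemma \ref{lem:LPprim}), and using $\wsc(j) = \min(1,j)$, I expect the dual to collapse, with optimal multiplier $\lambda = 1$, to the closed form
\[
\poa(\mathcal{G}_{\wsc, \f})^{-1} = \max\left\{ \sup_{j \geq 1} j\,\f(j),\ \ 1 + \sup_{j \geq 1}\left(j\,\f(j) - \f(j+1)\right) \right\},
\]
where the first term arises from resources covered only in the best-response profile ($\xx = \bz = 0$) and the second from resources covered in both profiles ($\ay = j$, $\xx = 0$, $\bz = 1$). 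Consequently $\poa(\mathcal{G}_{\wsc, \f}) \geq 1/(1 + \ipoa)$ is equivalent to the two families $j\,\f(j) \leq 1 + \ipoa$ and $j\,\f(j) - \f(j+1) \leq \ipoa$ holding for all $j \geq 1$. Combined with $\f \geq 0$, the second family is precisely the lower bound $\f(j+1) \geq \max\{ j\,\f(j) - \ipoa,\, 0\}$.

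Next I would run the optimization. Since $t \mapsto \max\{jt - \ipoa, 0\}$ is nondecreasing, an induction on $j$ shows every feasible $\f$ satisfies $\f(j) \geq \fxx(j)$: the base case is $\f(1) = 1 = \fxx(1)$, and if $\f(j) \geq \fxx(j)$ then $\f(j+1) \geq \max\{ j\,\f(j) - \ipoa, 0\} \geq \max\{ j\,\fxx(j) - \ipoa, 0\} = \fxx(j+1)$. Thus $\fxx$ is the pointwise-minimal feasible rule; it satisfies the lower-bound family by construction, and one verifies it also satisfies the upper-bound family $j\,\fxx(j) \leq 1 + \ipoa$ (using $j\,\fxx(j) = \fxx(j+1) + \ipoa$ together with $\fxx \leq 1$) on the nonempty regime $\ipoa \geq 1/(e-1)$, i.e. $1/(1+\ipoa) \leq 1 - 1/e$. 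Pointwise minimality gives $\sum_i \f(i) \geq \sum_i \fxx(i)$; and since $\fxx$ has $\inf_i \fxx(i) = 0$ while any $\f$ with $\inf_i \f(i) > 0$ has divergent sum and hence $J(\f) = \infty$, it follows that $J(\fxx) \leq J(\f)$ for every feasible $\f$, so $\fxx$ maximizes $\pob(\mathcal{G}_{\wsc, \f}; 1)$.

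The hard part will be the second step: deriving the clean two-inequality characterization of the $\poa$ constraint from the general program, in particular arguing that the dual optimum is attained at $\lambda = 1$ and that only the two configuration families can bind. A secondary subtlety, which I would flag explicitly, is confirming that the pointwise-minimal $\fxx$ remains feasible for the \emph{upper}-bound family (this is exactly why the construction is meaningful only for $\ipoa \geq 1/(e-1)$), and correctly handling the $-\min_i \f(i)$ term — this term is what legitimizes replacing the objective $J$ by the monotone functional $\sum_i \f(i)$ in the minimization, and it is the reason the endpoint $Q = 1 - 1/e$ forces $\pob = 0$ through a divergent sum, matching Lemma \ref{lem:bCpobbad}.
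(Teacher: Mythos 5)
Your overall strategy -- reduce to minimizing $\sum_i \f(i) - \min_i \f(i)$ via Lemma \ref{lem:setpob}, express the price-of-anarchy constraint as linear inequalities in $\f$, and show $\fxx$ is the pointwise-minimal feasible rule by induction -- is the same skeleton as the paper's proof, and your induction step is essentially identical to theirs. The gap is in the step you yourself flag as ``the hard part'': the two-inequality characterization of the constraint $\poa(\mathcal{G}_{\wsc,\f}) \geq 1/(1+\ipoa)$ is not established, and as stated it is incomplete. The paper does not re-derive this characterization; it imports from \cite[Theorem 2]{paccagnan2018utility} the formula $\poa^{-1} = 1 + \max_j\{(j+1)\f(j+1)-1,\ j\f(j)-\f(j+1),\ j\f(j+1)\}$, which contains a third family of constraints, $j\,\f(j+1) \leq \ipoa$, absent from your version. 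For a general (not necessarily non-increasing) $\f$ with $\f(j+1) < 1$ this family is not dominated by the other two, so ``feasible $\Leftrightarrow$ (a) and (b)'' is false as written. The paper's way around this is a preliminary structural step you do not have: a swap argument showing that any Pareto-optimal $\f$ must be non-increasing (exchanging $\f(J)$ and $\f(J+1)$ leaves the one-round efficiency of Lemma \ref{lem:setpob} unchanged while weakly loosening every PoA constraint), after which the simplified formula $\poa^{-1} = 1+\max_j\{j\f(j)-\f(j+1),\,(n-1)\f(n)\}$ applies and the third family is absorbed.

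This omission is not cosmetic, because it feeds a circularity into your feasibility check of $\fxx$. You verify $j\,\fxx(j) \leq 1+\ipoa$ ``using $j\,\fxx(j) = \fxx(j+1)+\ipoa$ together with $\fxx \leq 1$,'' but $\fxx \leq 1$ requires $\fxx$ to be non-increasing, and $\fxx(j+1) \leq \fxx(j)$ is equivalent to $(j-1)\fxx(j) \leq \ipoa$ -- which is precisely (an index shift of) the constraint family you dropped, and which holds only in the regime $\ipoa \geq 1/(e-1)$. So to close the argument you must either (i) prove the non-increasing reduction first, as the paper does, and then invoke the simplified PoA formula, or (ii) keep the full three-family characterization and explicitly verify that $\fxx$ satisfies $j\,\fxx(j+1) \leq \ipoa$ for all $j$ when $\ipoa \geq 1/(e-1)$. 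Your handling of the $-\min_i\f(i)$ term and the divergent-sum endpoint is fine and matches the paper's intent.
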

\begin{proof}
According to \cite[Theorem 2]{paccagnan2018utility}, the price of anarchy can be written as 
\begin{align}
    \frac{1}{\poa(\mathcal{G}_{\wsc, \f})} &= 1 + \max_{1 \leq j \leq n - 1} \{(j+1) \f(j+1) - 1, \nonumber \\
    &j\f(j)-\f(j+1), j\f(j+1)\} . \label{eq:poasetgen}
\end{align}
We first show that if $\f$ is Pareto optimal, then it must also be non-increasing. Otherwise, we show that another $\f'$ exists that achieves at least the same one round efficiency, but a higher price of anarchy, contradicting our assumption that $\f$ is Pareto optimal. Assume, by contradiction, that there exists $\f$ that is Pareto optimal and not non-increasing, i.e., there exists a $J \geq 1$, in which $\f(J) < \f(J+1)$. Notice that switching the value $\f(J)$ with $\f(J+1)$ results in an unchanged one round efficiency according to Eq. \eqref{eq:pobwscf} in Lemma \ref{lem:setpob} if $J > 1$. We show that $\f'$ with the values at $J$ and $J+1$ switched has a higher price of anarchy than $\f$.

For any $1 \leq J \leq n - 1$, the expressions from Eq. \eqref{eq:poasetgen} that include $\f(J)$ or $\f(J+1)$ are
\begin{align*}
    J \f(J), \quad (J+1) \f(J+1), \quad (J-1) \f(J-1) - \f(J) + 1, \\
    J \f(J) - \f(J+1) + 1, \quad (J+1) \f(J+1) - \f(J+2) + 1, \\
    (J-1) \f(J) + 1, \quad J \f(J+1) + 1.
\end{align*}
After switching, the relevant expressions for $\f'$ are
\begin{align*}
    J \f(J+1), \ (J+1) \f(J), \  (J-1) \f(J-1) - \f(J+1) + 1, \\
    J \f(J+1) - \f(J) + 1, \quad (J+1) \f(J) - \f(J+2) + 1, \\
    (J-1) \f(J+1) + 1, \quad J \f(J) + 1.
\end{align*}
Since $\f(J) < \f(J+1)$, switching the values results in a strictly looser set of constraints, and the value of the binding constraint in Eq. \eqref{eq:poasetgen} for $\f'$ is at most the value of the binding constraint for $\f$. Therefore $\poa(\f) \leq \poa(\f')$. Note that if $J= 1$, switching $J$ and $J+1$ and scaling down appropriately so $\f'(1)=1$, then $\pob(\f') > \pob(\f)$ as well. This contradicts our assumption that $\f$ is Pareto optimal.

Now we restrict our focus $\f$ that are non-increasing. Under this assumption, the price of anarchy is
\begin{equation*}
    \frac{1}{\poa(\wsc, \f)} = 1 + \\ \max_{1 \leq j \leq n - 1} \{j\f(j)-\f(j+1), (n-1)\f(n)\},
\end{equation*}
as detailed in Corollary $2$ in \cite{paccagnan2018utility}. Let 
\begin{equation}
\label{eq:chiconst}
    \ipoa_{\f} = \max_{1 \leq j \leq n - 1} \{j\f(j)-\f(j+1), (n-1)\f(n)\}.
\end{equation} 
For $\f$ to be Pareto optimal, we claim that $ j \f(j) - \f(j+1) = \ipoa_{\f}$ must hold for all $j$. Consider any other $\f'$ with $\ipoa_{\f} = \ipoa_{\f'}$. It follows that $\poa(\f) = \poa(\f') = 1/(1+\ipoa_{
\f})$. By induction, we show that $\f(j) \leq \f'(j)$ for all $j$. The base case is satisfied, as $1 = f(1) \leq f'(1) = 1$. Under the assumption $\f(j) \leq \f'(j)$, we also have that 
\begin{equation}
    j\f(j) - \ipoa_{\f} = \f(j+1) \leq \f'(j+1) = j\f'(j) - \ipoa^j_{\f},
\end{equation}
where $\ipoa^j_{\f'} = j\f'(j) - \f'(j+1) \leq \ipoa_{\f'}$ by definition in Eq. \eqref{eq:chiconst}, and so $\f(j) \leq \f'(j)$ for all $j$. Therefore the summation $\sum_{i \in \N}{\f(i)} - \min_{i \in \N}{\f(i)}$ in Eq. \eqref{eq:pobwscf} is diminished and $\pob(\f) \geq \pob(\f')$, proving our claim. As $\f$ must satisfy $\f(j) \geq 0$ for all $j$ to be a valid utility rule, $\f(j+1)$ is set to be $\max \{j\f(j) - \ipoa, 0\}$. Then we get the recursive definition for the maximal $\fxx$. Finally, we note that for infinite $n$, $\ipoa \leq \frac{1}{e-1}$ is not achievable, as shown in \cite{gairing2009covering}.
\end{proof}

\noindent With the two previous lemmas, we can move to proving Theorem \ref{thm:poapobtradeoff} We first characterize a closed form expression of the maximal utility rule $\fxx$, which is given in Lemma \ref{lem:implicitfboundar}. We fix $\ipoa$ so that $\poa(\fxx) = \frac{1}{\ipoa+1}$ = $Q$. To calculate the expression for $\fxx$ for a given $\ipoa$, a corresponding time varying, discrete time system to Eq. \eqref{eq:recurf} is constructed as follows.
\begin{align*}
    x(t+1) &= t x(t) - \ipoa, \\
    y(t) &= \max \{x(t), 0\}, \\
    x(1) &= 1,
\end{align*}
where $y(t) \equiv \fxx(j)$ corresponds to the utility rule. Solving for the explicit solution for $y(t)$ using the state-transition matrix gives
\begin{align*}
    y(1) &= 1 \\
    y(t) &= \max \Big[ \prod_{\ell=1}^{t-1} \ell - \ipoa \big( \sum_{\tau=1}^{t-2} \prod_{\ell=\tau+1}^{t-1} \ell \big) - \ipoa , 0 \Big] \ \ t > 1.
\end{align*}
Simplifying the expression and substituting for $\fxx(j)$ gives 
\begin{equation*}
    \fxx(j) = \max \Big[ (j-1)!(1 - \ipoa \sum_{\tau=1}^{j-1} \frac{1}{\tau!} \big), 0 \Big] \ \ \ j \geq 1.
\end{equation*}

Substituting the expression for the maximal $\fxx$ into Eq. \eqref{eq:pobwscf} gives the one round efficiency. Notice that for $\ipoa \geq \frac{1}{e-1}$, $\lim_{j \to \infty} \fxx(j) = 0$, and therefore $\min_j{\fxx(j)} = 0$. Shifting the variables $j' = j + 1$, we get the statement in Eq. \eqref{eq:tradeoffsetcov}.

\subsection{Supermodular welfare rules}
In this section, efficiency of one-round walks are examined for classes of \emph{supermodular games}. Supermodular games are an important sub-class of resource allocation games, in which there is a surplus of added system welfare when a resource is covered by more than one agent. Applications of supermodular games include clustering and power allocation in networks \cite{paccagnan2021utility}. A welfare rule $\w$ is deemed to be supermodular if $\w(j) - \w(j-1)$ is increasing and non-negative for all $j \geq 1$. Interestingly, for supermodular games, the utility designs that both maximize the one-round efficiency and price of anarchy include the constant utility design $\fw(\w) = \f_{\ones}$ in which $\f_{\ones}(j) = 1$ for all $j \in \N$, and the \emph{Shapley} utility design $\fw(\w) = \f_{\rm{shap}}$ in which $\f_{\rm{shap}}(j) = \w(j)/j$ for all $j \in \N$. Furthermore, the optimal one-round and price of anarchy guarantees are equivalent, as seen below.

\begin{prop}
\label{prop:supmodopt}
Consider a set of supermodular welfare rules $\ww = \{\w^1, \dots, \w^m\}$ with $\w^{\ell}(1) = 1$. If the number of agents is fixed to $n$, then the optimal one-round and price of anarchy guarantees are as follows
\begin{equation}
    \sup_{\fw} \pob(\setgmn; 1) = \sup_{\fw} \poa(\setgmn) = \min_{1 \leq \ell \leq m} \frac{n}{\w^{\ell}(n)}.
\end{equation}
Furthermore any utility design in which $\f^{\ell} = \fw(\w^{\ell})$ is non-decreasing and satisfies $\f^{\ell}(1) = 1$ and $\sum_{i=1}^{j} \f^{\ell}(i)/w(j) \leq 1$ for all $1 \leq j \leq n$ achieves the optimal one round efficiency guarantee.
\end{prop}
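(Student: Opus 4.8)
The plan is to prove both halves of the first display by sandwiching each supremum between a common upper bound, obtained from a single worst-case game that works against \emph{every} utility design, and a matching lower bound witnessed by the constant design $\f_{\ones}$. Two consequences of supermodularity of $\w^{\ell}$ with $\w^{\ell}(1)=1$ drive everything: the ratio $\w^{\ell}(j)/j$ is non-decreasing, so $\w^{\ell}(\bz)\le \tfrac{\w^{\ell}(n)}{n}\,\bz$ for all $\bz\le n$ with equality-type maximum at $\bz=n$; and each marginal is at least $\w^{\ell}(1)=1$, so $\w^{\ell}(j)\ge j$ and thus $n/\w^{\ell}(n)\le 1$. Throughout I write $c^*_{\ell}=\w^{\ell}(n)/n$.

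First I would prove the upper bound. Fix the index $\ell^*$ attaining $\min_{\ell} n/\w^{\ell}(n)$ and build a game $\G\in\setgmn$ with a shared resource $r^*$ and private resources $r_1,\dots,r_n$, all carrying welfare rule $\w^{\ell^*}$, where agent $i$ has actions $\{\emp_i,\abr_i=\{r_i\},\aopt_i=\{r^*\}\}$. When agent $i$ best responds in the one-round walk, only it occupies each of $r_i$ and (so far) $r^*$, so both nontrivial actions yield $\f^{\ell^*}(1)$; breaking ties toward $\abr_i$ realizes the spread profile $\abr$ with $\W(\abr)=n$, while the stacked profile gives $\max_a\W(a)\ge \w^{\ell^*}(n)$, whence $\pob(\G;1)\le n/\w^{\ell^*}(n)$. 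The same $\abr$ is a (weak) Nash equilibrium, since deviating to the empty $r^*$ again returns $\f^{\ell^*}(1)$, so $\poa(\G)\le n/\w^{\ell^*}(n)$ too. As this holds for any $\fw$, both suprema are at most $\min_{\ell} n/\w^{\ell}(n)$.

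Next I would establish the one-round lower bound, which simultaneously yields the ``Furthermore'' claim. For any design in the stated class I would exhibit a feasible point of the dual of Lemma \ref{lem:LPprim} with $\pbdual=c^*_{\ell}$ and $\lambda_i\equiv c^*_{\ell}$. For a resource type $p$ the best-response terms reindex to $\sum_{i}\babr_i\,\f^{\ell}(\card{\babr}_{<i}+1)=\sum_{j=1}^{\card{\babr}}\f^{\ell}(j)\le\w^{\ell}(\card{\babr})$ by the hypothesis $\sum_{i=1}^{j}\f^{\ell}(i)\le\w^{\ell}(j)$, while each optimal term satisfies $\f^{\ell}(\card{\babr}_{<i}+1)\ge\f^{\ell}(1)=1$ since $\f^{\ell}$ is non-decreasing, giving $\sum_{i}\baopt_i\,\f^{\ell}(\card{\babr}_{<i}+1)\ge\card{\baopt}$. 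Substituting into \eqref{eq:LPdualconst} and using $\w^{\ell}(\card{\baopt})\le c^*_{\ell}\card{\baopt}$ collapses the right-hand side to at most $c^*_{\ell}\w^{\ell}(\card{\babr})$, so the point is feasible and $\pbdual^{\ell}\le c^*_{\ell}$. Hence $\pob(\setgmn;1)=\min_{\ell}1/\pbdual^{\ell}\ge\min_{\ell}n/\w^{\ell}(n)$, which with the upper bound forces equality for every design in the class.

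For the asymptotic side I would specialize to $\f_{\ones}$ (which lies in the class, as $\sum_{i=1}^{j}1=j\le\w^{\ell}(j)$) and dualize the price-of-anarchy program \eqref{eq:poaLP}, whose dual minimizes $\nu$ over $\nu,\mu\ge 0$ subject to $\nu\w^{\ell}(\ay+\xx)-\mu[\ay\f^{\ell}(\ay+\xx)-\bz\f^{\ell}(\ay+\xx+1)]\ge\w^{\ell}(\bz+\xx)$. Taking $\nu=\mu=c^*_{\ell}$ and $\f^{\ell}\equiv 1$ reduces this to $c^*_{\ell}[\w^{\ell}(\ay+\xx)-\ay+\bz]\ge\w^{\ell}(\bz+\xx)$, which follows from $\w^{\ell}(\bz+\xx)\le c^*_{\ell}(\bz+\xx)$ and $\w^{\ell}(\ay+\xx)\ge\ay+\xx$; thus $Q^{\ell}\le c^*_{\ell}$ and $\poa(\setgmn)\ge\min_{\ell}n/\w^{\ell}(n)$. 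Combining with the common upper bound shows both suprema equal $\min_{\ell}n/\w^{\ell}(n)$ and are attained at $\f_{\ones}$. The main obstacle is precisely this price-of-anarchy lower bound: unlike the one-round bound it is \emph{not} met by every design in the one-round-optimal class (an aggressive choice such as $\f^{\ell}(2)$ near $\w^{\ell}(2)-1$ breaks the constant-multiplier dual), so one must commit to a specific design such as $\f_{\ones}$ or the Shapley design $\f_{\rm{shap}}(j)=\w^{\ell}(j)/j$ and verify the correct dual multiplier; pinning down the sign convention of the dual of \eqref{eq:poaLP} and confirming $\mu=c^*_{\ell}$ is feasible is the delicate step, whereas the one-round dual feasibility of Lemma \ref{lem:LPprim} is routine once the reindexing identity and the ratio bound are in hand.
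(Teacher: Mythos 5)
Your proposal is correct and follows the paper's structure for the two main steps: the upper bound uses the same worst-case game (one shared resource versus $n$ private ones, with every agent indifferent at its turn so the spread profile lies in $\sol(1)$ and is a Nash equilibrium), and the one-round lower bound is a dual-feasibility argument with all multipliers set to $\w^{\ell}(n)/n$, exploiting exactly the two consequences of supermodularity the paper uses ($\w^{\ell}(\bz)\le \bz\,\w^{\ell}(n)/n$ and $\sum_{i\le j}\f^{\ell}(i)\le \w^{\ell}(j)$ with $\f^{\ell}$ non-decreasing). The only substantive divergence is the price-of-anarchy lower bound: the paper simply cites \cite[Theorem 4]{paccagnan2021utility}, whereas you re-derive it by exhibiting a feasible point $\nu=\mu=\w^{\ell}(n)/n$ of the dual of the program in Eq.~\eqref{eq:poaLP} for the constant design $\f_{\ones}$; your version is more self-contained, and your remark that the PoA optimum need not be attained by every design in the one-round-optimal class is accurate and consistent with the proposition, which only asserts the one-round guarantee for that whole class. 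Also note that you work with the dual of Lemma~\ref{lem:LPprim} directly rather than the paper's ``modified'' finite-$n$ version of Eq.~\eqref{eq:dualtractLP}; this is equivalent in substance and, if anything, slightly cleaner since Lemma~\ref{lem:tractableLP} is stated in the $n\to\infty$ limit.
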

\begin{proof}
The fact that $\sup_{\fw} \poa(\setgmn) = \min_{1 \leq \ell \leq m} \frac{n}{\w^{\ell}(n)}$ comes from applying the result in \cite[Theorem 4]{paccagnan2021utility} to a class of welfare rules. Thus, we first show that $\sup_{\fw} \pob(\setgmn; 1) \leq \min_{1 \leq \ell \leq m} \frac{n}{\w^{\ell}(n)}$ through a game construction that is valid for any utility design $\fw$. Let $\w^* = \arg \min_{1 \leq \ell \leq m} \frac{n}{\w^{\ell}(n)}$ be the welfare rule that attains the minimum. Let the game $\G$ have $n$ agents with agent $i$ having the action set $\ac_i = \{\emp_i, \abr_i, \aopt_i \}$. There are $n+1$ resources which are all endowed with the welfare rule $\w_r = \w^*$ for all $r \in \rr$, with agent $i$ either selecting $\abr_i = \{r_{i+1}\}$ or $\aopt_i = \{r_1\}$. Under any utility rule $\f$, each agent $i$ is indifferent to choosing $\abr_i$ or $\aopt_i$ if no other agents $j \neq i$ have selected $r_1$ through $\aopt_j$. Thus $\abr$ can result after a $\k$-round walk with a welfare of $\W(\abr) = n$. The welfare of the optimal allocation $\aopt$ is $\W(\aopt) = \w(n)$. Therefore, for any utility design $\f = \fw(\w^*)$, the efficiency is bounded by $\pob(\setgmn; 1) \leq \pob(\G; 1) = \frac{n}{\w^*(n)}$. We remark that $\pob(\setgmn; \k) \leq \pob(\G; \k) = \frac{n}{\w^*(n)}$ as well, since $\abr$ is a Nash equilibrium.

Now we show that for a utility design $\fw$, such that the utility rule $\f^{\ell} = \fw(\w^{\ell})$ is non-decreasing and satisfies $\sum_{i=1}^{j} \f^{\ell}(i)/w(j) \leq 1$ and $\f^{\ell}(1) = 1$ for every $j$ and $\ell$, the one-round efficiency is lower bounded by $\pob(\setgmn; 1) \geq \min_{1 \leq \ell \leq m} \frac{n}{\w^{\ell}(n)}$. To do this, we can use a modified version of the linear program in Eq. \eqref{eq:dualtractLP} in which $\pob(\setgmn; 1) \geq \min_{1 \leq \ell \leq m} \frac{1}{\pbdual^{\ell}}$, where $\pbdual^{\ell} \in \R$ is the solution to
\begin{align*}
& \pbdual^{\ell} = \min \quad \pbdual \quad \textrm{subject to:} \\
& \pbdual \w^{\ell}(\ay) \geq \rm{H}^{\ell} \left( \sum_{i=1}^\ay \f^\ell(i) - \bz \min_{1 \leq i \leq \ay+1} \f^\ell(i) \right) + \w^\ell(\bz) \\
&\text{for all } 0 \leq \bz \leq n \text{ and } 1 \leq \ay \leq n,
\end{align*}
where the linear program is a lower bound since we consider tighter constraints that allow $\ay$ and $\bz$ to to range from $1$ to $n$. Since $\w^{\ell}$ is supermodular, $\rm{H}^{\ell} = \w^{\ell}(n)/n$ and assuming $\f^{\ell}$ is non-decreasing, $\min_{1 \leq i \leq \ay+1} \f^\ell(i) = \f^{\ell}(1) = 1$. Thus, we can simplify the constraint as
\begin{equation}
    \pbdual \w^{\ell}(\ay) \geq \frac{\w^{\ell}(n)}{n} \sum_{i=1}^\ay \f^\ell(i) - \frac{\w^{\ell}(n)}{n} \bz + \w^\ell(\bz)
\end{equation}

With this, we observe that $\w^{\ell}(\bz) - \bz \cdot \w^{\ell}(n)/n$ is convex in $\bz$. So the binding constraint for $\bz$ occurs at either the end point $\bz = 0$ or $\bz = n$ and the terms can be cancelled out. Additionally, $\max_{\ay} \sum_{i=1}^\ay \f^\ell(i) / \w^{\ell}(\ay) = 1$ occurs at the binding constraint $\ay = 1$, by assumption that $\sum_{i=1}^{j} \f^{\ell}(i)/w(j) \leq 1$ for all $1 \leq j \leq n$. Therefore, $\pbdual^{\ell} = \rm{H}^{\ell} = \w^{\ell}(n)/n$ under the binding constraint of $\ay = 1$ and $\bz = 0$ (or $n$) and we indeed have that $\pob(\setgmn; 1) \geq \min_{1 \leq \ell \leq m} \frac{n}{\w^{\ell}(n)}$.
\end{proof}

We remark that the utility rules $\f_{\ones}$ and $\f_{\rm{shap}}$ both satisfy the assumptions in Proposition \ref{prop:supmodopt}. Furthermore, we remark that the optimal one-round guarantees match the optimal $\k$-round and price of anarchy guarantees, and so running the best-response process for further rounds does not increase the resulting efficiency guarantees.

\begin{figure}[ht]
    \centering
    \includegraphics[width=250pt]{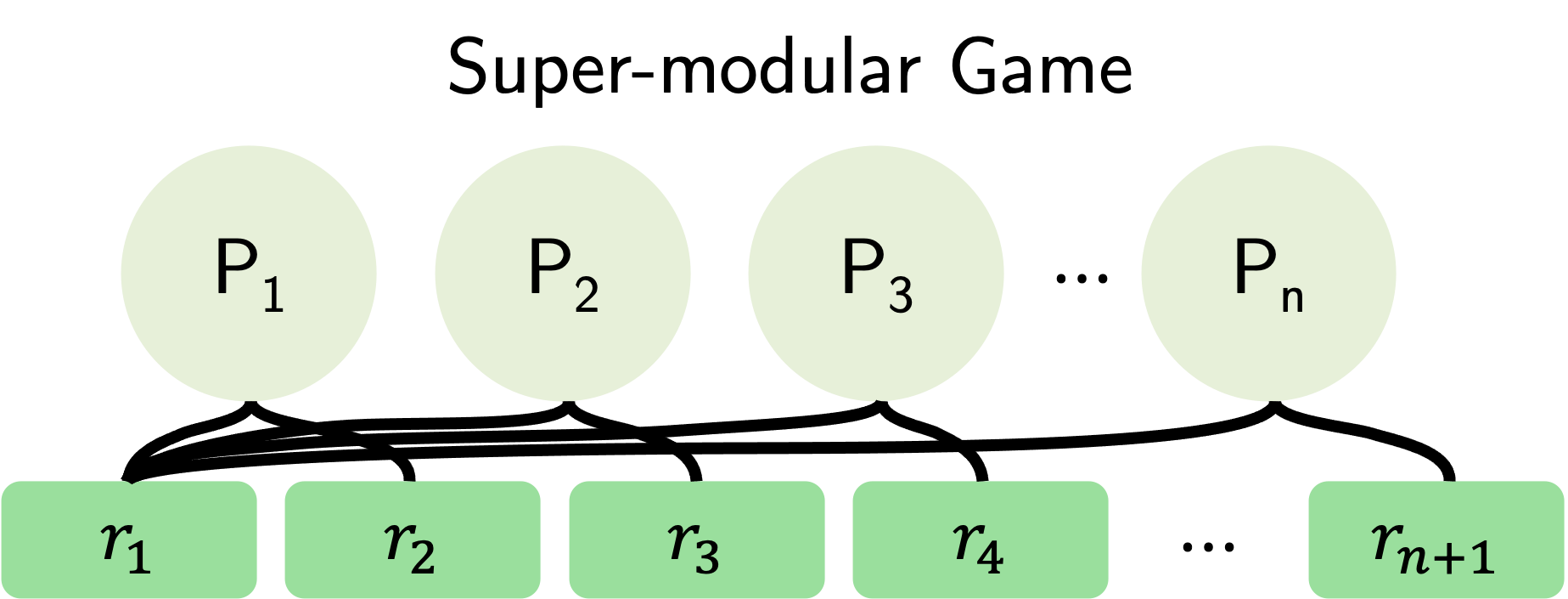}
    \caption{The worst case game construction used in Proposition \ref{prop:supmodopt}. If the previous agents $j \leq i$ choose not to stack, each agent $i$ is indifferent for its two actions. Therefore, the worst allocation that can result from a $\k$-round walk is all the agents not choosing to stack.
    \label{fig:supermodgame}}
\end{figure}

\begin{IEEEbiography}[{\includegraphics[width=1in,height=1.25in,clip,keepaspectratio]{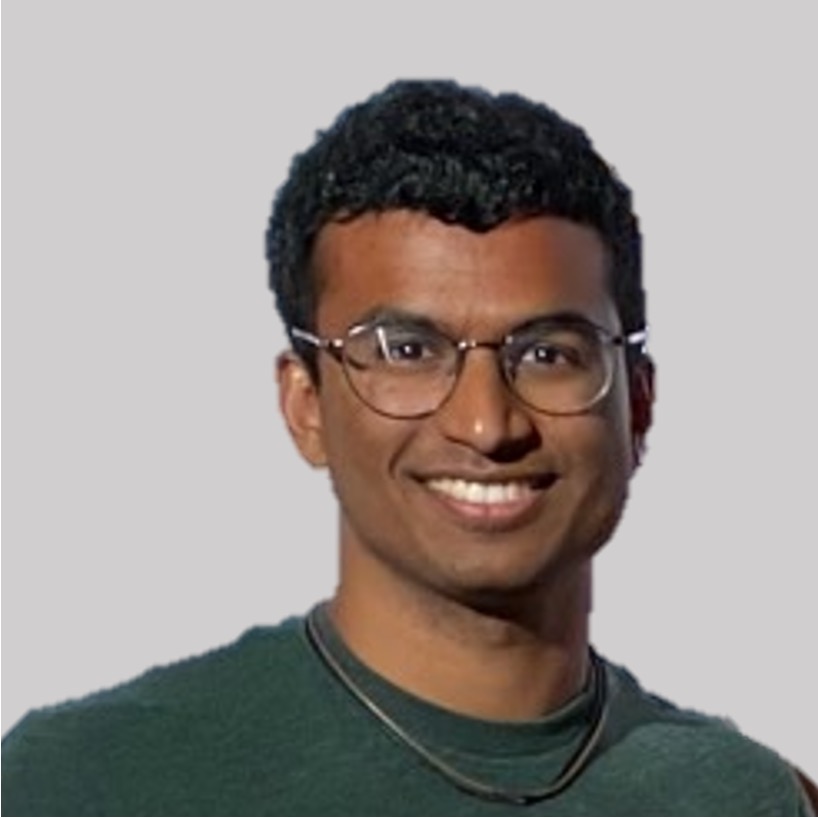}}]{Rohit Konda} received a B.S in Biomedical Engineering in 2018 and an M.S in Electrical and Computer Engineering in 2019 from Georgia Tech. He is currently a Regent's Fellow pursuing a Ph.D in the Department of Electrical and Computer Engineering at the University of California, Santa Barbara. His research interests include distributed optimization and game-theoretic applications in multi-agent systems.
\end{IEEEbiography}

\begin{IEEEbiography}[{\includegraphics[width=1in,height=1.25in,clip,keepaspectratio]{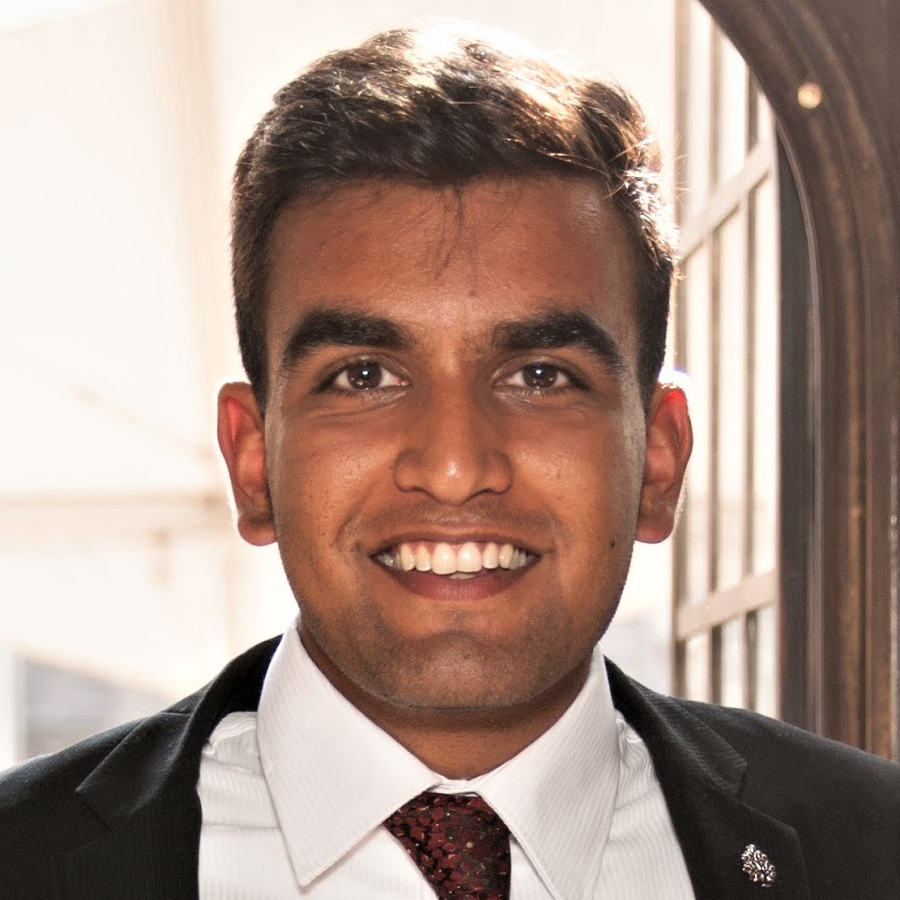}}]{Rahul Chandan} received his B.A.Sc. degree in Engineering Science, and Electrical and Computer Engineering from the University of Toronto, Toronto, ON, Canada, in June 2017. He is currently working toward a Ph.D. degree in the Electrical and Computer Engineering Department, University of California, Santa Barbara, CA, USA, since September 2017. His research interests include the application of game theoretic and classical control methods to the analysis and control of multi-agent systems.
\end{IEEEbiography}

\begin{IEEEbiography}[{\includegraphics[width=1in,height=1.25in,clip,keepaspectratio]{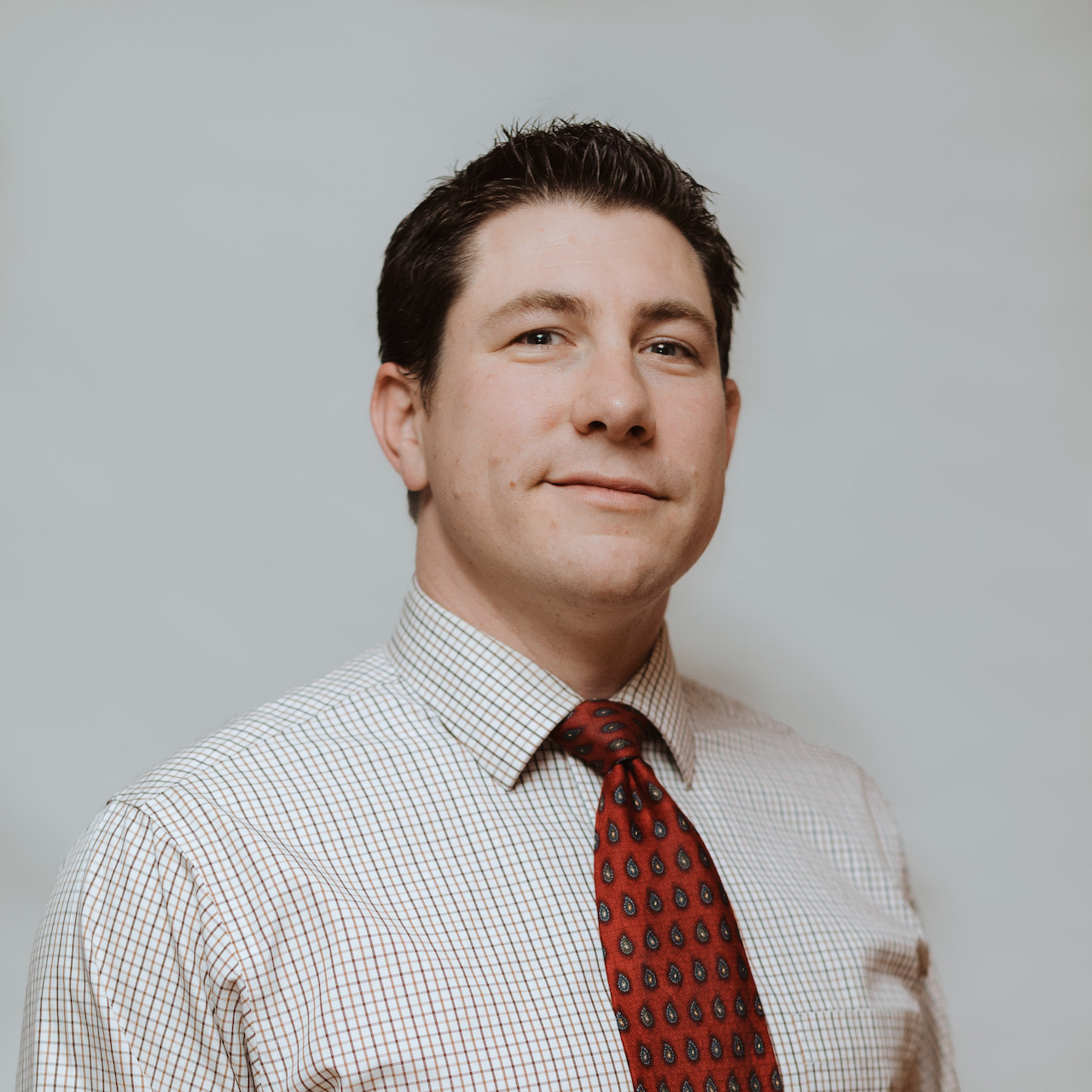}}]{David Grimsman} received the B.S. degree in electrical and computer engineering from Brigham Young University, Provo, UT, USA, in 2006 as a Heritage Scholar, with a focus on signals and systems. He received the M.S. degree in computer science from Brigham Young University in 2016. He received the Ph.D. degree in 2021, under the supervision of Jason R. Marden and Joao P. Hespanha, with the Electrical and Computer Engineering Department, University of California, Santa Barbara, Santa Barbara, CA, USA. He is currently an Assistant Professor with the Department of Computer Science in Brigham Young University.

He worked with BrainStorm, Inc. for several years as a Trainer and
IT Manager. His research interests include multiagent systems, game
theory, distributed optimization, network science, linear systems theory, and security of cyberphysical systems.
\end{IEEEbiography}

\begin{IEEEbiography}[{\includegraphics[width=1in,height=1.25in,clip,keepaspectratio]{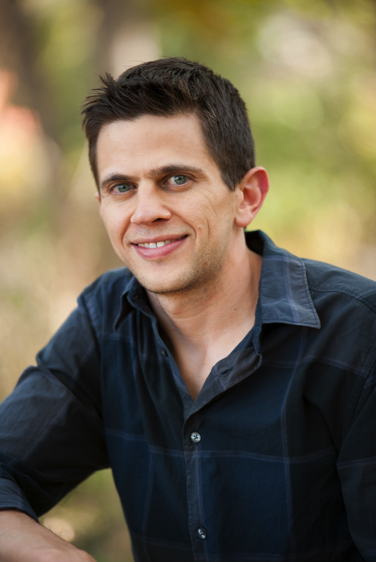}}]{Jason R. Marden} received the B.S. degree in 2001 and the Ph.D. degree in 2007 under the supervision of Jeff S. Shamma from the University of California, Los Angeles (UCLA), Los Angeles, CA, USA, both in mechanical engineering. After graduating from UCLA, he served as a Junior Fellow with the Social and Information Sciences Laboratory, California Institute of Technology until 2010 when he joined the University of Colorado. He is currently a Full Professor with the Department of Electrical and Computer Engineering, University of California, Santa Barbara, CA, USA. His research interests focus on game theoretic methods for the control of distributed multi-agent systems.

Dr. Marden is a recipient of the NSF Career Award in 2014, he was awarded the Outstanding Graduating Ph.D. degree, the ONR Young Investigator Award in 2015, the AFOSR Young Investigator Award in 2012, the American Automatic Control Council Donald P. Eckman Award in 2012, and the SIAG/CST Best SICON Paper Prize in 2015.
\end{IEEEbiography}

\end{document}